\newtheorem{assumption}[theorem]{Assumption}
\newcommand{\du}{\partial}
\newcommand{\R}{\mathbb{R}}
\newcommand{\E}{\mathbb{E}}
\newcommand{\El}{\mathcal{E}}
\newcommand{\de}{\Delta}
\newcommand{\e}{\varepsilon}
\newcommand{\n}{\nabla}
\newcommand{\Lop}{\mathcal{L}}
\newcommand{\Ze}{\mathcal{Z}}
\newcommand{\Lpert}{\tilde{\mathcal{L}}}
\newcommand{\1}{\mathds{1}}
\newcommand{\K}{\mathcal{K}}
\newcommand{\Lopgen}{\mathcal{L}}
\newcommand{\Lope}{\Lopgen}
\newcommand{\LopU}{\mathcal{L}}
\newcommand{\U}{U}
\newcommand{\LopZero}{\mathcal{L}_{0}}
\newcommand{\er}{K_{\mathrm{min}}}
\newcommand{\ef}{K_{\mathrm{max}}}
\newcommand{\Um}{\U}
\newcommand{\Ustd}{U_{\rm{std}}}
\newcommand{\ConstNablaU}{G_{\mr{std}}}
\newcommand{\dspace}{D}
\newcommand{\ds}{\displaystyle}
\newcommand{\mr}{\mathrm}
\renewcommand{\leq}{\leqslant}
\renewcommand{\geq}{\geqslant}
\newcommand{\dps}{\displaystyle}
 \journalname{my journal}
\title{Error Analysis of Modified Langevin Dynamics}			
\author{Stephane Redon$^{\dagger}$  \and
        Gabriel Stoltz$^{\ddagger}$ \and Zofia~Trstanova$^{\dagger}$
}
\institute{$^{\dagger}$ \at Inria - Univ. Grenoble Alpes, LJK, F-38000 Grenoble, France\\
CNRS, LJK, F-38000 Grenoble, France\\
              \email{zofia.trstanova@inria.fr}           
           \and
           $^{\ddagger}$ \at Universit\'e Paris-Est, CERMICS (ENPC), INRIA, F-77455 Marne-la-Vall\'ee, France
}
\date{Received: date / Accepted: date}
\begin{document}

\maketitle

\begin{abstract}
We consider Langevin dynamics associated with a modified kinetic energy
vanishing for small momenta. This allows us to freeze slow particles, and
hence avoid the re-computation of inter-particle forces, which leads to
computational gains. On the other hand, the statistical error may
increase since there are a priori more correlations in time. The aim of
this work is first to prove the ergodicity of the modified Langevin
dynamics (which fails to be hypoelliptic), and next to analyze how the
asymptotic variance on ergodic averages depends on the parameters of
the modified kinetic energy. Numerical results illustrate the approach,
both for low-dimensional systems where we resort to a Galerkin
approximation of the generator, and for more realistic systems using Monte
Carlo simulations.
\keywords{Langevin dynamics \and Variance reduction \and Ergodicity \and Functional estimates \and Linear response}
\end{abstract}

\section{Introduction}
\label{section introduction}

A fundamental purpose of molecular simulation is the computation of macroscopic quantities, typically through averages of functions of the variables of the system with respect to a given probability measure~$\mu$, which defines the macroscopic state of the system. In the most common setting, the probability measure~$\mu$ with respect to which averages are computed corresponds to the canonical ensemble (see for instance \cite{tuckermann2010}). Its distribution is defined by the Boltzmann-Gibbs density, which models the configurations of a conservative system in contact with a heat bath at fixed temperature. Numerically, high-dimensional averages with respect to~$\mu$ are often approximated as ergodic averages over realizations of appropriate stochastic differential equations (SDEs):
\begin{equation}
 \lim_{t\rightarrow\infty}\hat{A}_t=\E_{\mu}(A) \quad \text{a.s.}, \qquad \hat{A}_t:=\frac{1}{t}\int_0^tA(p_s,q_s) \, ds\,.
\label{def convergence ergodic averages}
\end{equation}
A typical dynamics to this end is the Langevin dynamics
\begin{equation}
\left\{
\begin{aligned}
\ds dq_t &= M^{-1} p_t \, dt, \\
\ds dp_t &= -\n V(q_t) \, dt-\gamma M^{-1} p_t \, dt + \sqrt{\frac{2\gamma}{\beta}} \, dW_t\,,
\end{aligned}
\right.
\label{Langevin}
\end{equation}
where $dW_t$ is a standard Wiener process, $V$ the potential energy function, $\gamma > 0$ a friction coefficient, $M$ the mass matrix of the system, and $\beta$ is proportional to the inverse temperature (see Section~\ref{section modified langevin dynamics} for more precise definitions). For references on the ergodicity of Langevin dynamics, we refer the reader to~\cite{Talay} and \cite{mattingly2002ergodicity}, for instance.

There are two main sources of error in the computation of average properties such as $\mathbb{E}_\mu(A)$ through time averages as in~\eqref{def convergence ergodic averages}: (i) a systematic bias (or {\em perfect sampling bias}) related to the use of a discretization method for the SDEs (and usually proportional to a power of the integration step size $\Delta t$), and  (ii) statistical errors, due to the finite lengths of the sampling paths involved and the underlying variance of the random variables. The first point was studied in~\cite{Matthews} for standard Langevin dynamics. Our focus in this work is on the statistical error.

Statistical errors may be large when the dynamics is metastable, \emph{i.e.} when the system remains trapped for a very long time in some region of the configuration space (called a metastable region) before hopping to another metastable region. Metastability implies that the convergence of averages over trajectories is very slow, and that transitions between metastable regions (which are typically the events of interest at the macroscopic level) are very rare. In fact, metastability arises from the multi-modality of the probability measure sampled by the dynamics. We refer for instance to~\cite{lelievre-13} for a review on ways to quantify the metastability of sampling dynamics. There are various strategies to reduce the variance of time averages by reducing the metastability. The most famous one is importance sampling: the potential energy function~$V$ is modified by an additional term~$\widetilde{V}$ so that the Langevin dynamics associated with $V+\widetilde{V}$ is less metastable. An automatic way of doing so is to consider a so-called reaction coordinate, and define $\widetilde{V}$ as the opposite of the associated free energy (see \cite{lelievre2010free,LS15} for further precisions).

We explore here an alternative path, which consists in modifying the kinetic energy rather than the potential energy. Indeed, recall that the difficult part in sampling the canonical measure is in sampling positions (see Section~\ref{section modified langevin dynamics} for a more precise discussion of this point). There is therefore some freedom in the choice of the kinetic energy if the goal is to compute average properties.

Previous works in this direction focused on changing the mass matrix in order to increase the time steps used in the simulation (see \emph{e.g.}~\cite{bennett1975mass,plechac2010implicit}). The mathematical analysis we provide is inspired by a recent work by~\cite{PRL-ARPS} where the kinetic energy of each particle is more drastically modified: it is set to~0 when the particle's momenta are small, while it remains unchanged for larger momenta. In such \emph{adaptively restrained} (AR) simulations, particles may become temporarily \emph{frozen}, while their momenta may continue to evolve. Since, in many cases, inter-particle forces only depend on relative particle positions, and hence do not have to be updated when particles are frozen, adaptively restrained particle simulations may yield a significant algorithmic speed-up~$S_{\rm algo}$ when a sufficiently large number of particles are frozen at each time step (or, more generally, when inter-particle distances remain constant and particle forces are expressed in local reference frames). This has been demonstrated in several contexts, \emph{e.g.} for modeling hydrocarbon systems (\cite{Bosson20122581}), proteins (\cite{Rossi01072007}), and for electronic structure calculations (\cite{BossonBAQM}).

Unfortunately, freezing particles even temporarily may make iterates more correlated, which may translate into an increase of the statistical error~$\sigma^2_{\rm mod}$ observed for modified Langevin dynamics, compared to the statistical error~$\sigma_{\rm std}^2$ observed for standard Langevin dynamics. The actual speed-up of the method, in terms of the total wall-clock time needed to achieve a given precision in the estimation of an observable, should therefore be expressed as:
\begin{equation}
S_{\rm actual} = S_{\rm algo} \frac{\sigma^2_{\rm std}}{\sigma^2_{\rm mod}}\,.
\label{cout}
\end{equation}
Our aim here is thus to quantify the increase in the variance as a function of the parameters of the modified kinetic energy. In fact, a first task is to prove that the Langevin dynamics with modified kinetic energy is indeed ergodic, and that the variance is well defined. This is unclear at first sight since the modified dynamics fails to be hypoelliptic (see the discussion in Section~\ref{sec:cv_ergo_avg}).

\medskip

This article is organized as follows. In Section~\ref{section modified langevin dynamics}, we introduce the modified Langevin dynamics we consider, and present the particular case of the AR-Langevin dynamic. The ergodicity of these dynamics is proved in Section~\ref{section ergodicity}, both in terms of almost-sure convergence of time averages along a single realization, and in terms of the law of the process. We also provide a result on the regularity of the evolution semi-group, adapted from similar estimates for standard Langevin dynamics in ~\cite{Talay}. Such estimates allow us to analyze the statistical error in Section~\ref{section analysis of statistical error}. We state in particular a Central Limit Theorem for~$\widehat{A}_t$, and perform a perturbative study of the asymptotic variance of the AR-Langevin dynamics in some limiting regime. Our theoretical findings are illustrated by numerical simulations in Section~\ref{sec:numerics}, both in a simple one-dimensional case where the variance can be accurately computed using an appropriate Galerkin approximation, as well as for a more realistic system for which we resort to Monte-Carlo simulations. The proofs of our results are gathered in Section~\ref{section proofs}.

\section{Modified Langevin dynamics}
\label{section modified langevin dynamics}

We consider a system of $N$ particles in spatial dimension~$\dspace$, so that the total dimension of the system is $d:=\dspace\times N$. The vectors of positions and momenta are denoted respectively by $q=(q_1, \cdots, q_N)$ and $p=(p_1, \cdots, p_N)$. Periodic boundary conditions are used for positions, so that the phase-space of admissible configurations is $\mathcal{E} = \mathcal{D}\times \R^{d}$ with $\mathcal{D}:=\left(L\mathbb{T}\right)^{d}$, $\mathbb{T}=\R\backslash \mathbb{Z}$ being the one-dimensional unit torus and $L>0$ the size of the simulation box.

In order to possibly increase the rate of convergence of the ergodic averages~\eqref{def convergence ergodic averages}, we modify the Langevin dynamics~\eqref{Langevin} by changing the kinetic energy. More precisely, instead of the standard quadratic kinetic energy
\[
U_{\mr{std}}(p)=\frac{1}{2}p^TM^{-1}p, \qquad M = \mathrm{diag}(m_1,\dots,m_N),
\]
we introduce a general kinetic energy function $U:\R^{d}\rightarrow \R$. The total energy of the system is then characterized by the Hamiltonian
\begin{equation}
\label{eq: modified Hamiltonian}
H(p,q)=\Um(p)+V(q).
\end{equation}
In order to ensure that the measure $\mathrm{e}^{-\beta H(q,p)}\, dq\,dp$ can be normalized, and in order to simplify the mathematical analysis, we make in the sequel the following assumption.

\begin{assumption}
\label{assumption potential}
The potential energy function $V$ belongs to $C^{\infty}(\mathcal{D}, \R)$, and $U\in C^{\infty}(\R^{d},\R)$ grows sufficiently fast at infinity in order to ensure that $\mr{e}^{-\beta U}\in L^1(\R^d)$.
\end{assumption}

The Langevin dynamics associated with a general Hamiltonian reads
\[
\left\{
\begin{aligned}
dq_t & = \n_p H(p_t, q_t) \, dt, \\
dp_t & = -\n_q H(p_t, q_t) \, dt - \gamma \n_p H(p_t, q_t) \, dt + \sqrt{\frac{2\gamma}{\beta}} \, dW_t,
\end{aligned}
\right.
\]
where $dW_t$ is a standard $d$-dimensional Wiener process and $\gamma>0$ is the friction constant. For the separable Hamiltonian~\eqref{eq: modified Hamiltonian}, the general Langevin dynamics simplifies as
\begin{equation}
  \label{modified Langevin}
  \left\{
  \begin{aligned}
    dq_t & = \n U(p_t) \, dt, \\
    dp_t & = -\n V(q_t) \, dt - \gamma \n U(p_t) \, dt + \sqrt{\frac{2\gamma}{\beta}} \, dW_t.
  \end{aligned}
\right.
\end{equation}
The generator of the process (\ref{modified Langevin}) reads
\begin{equation}
\Lopgen = \n \U \cdot \n_q - \n V\cdot \n_p+\gamma\left(-\n \U\cdot \n_p + \frac{1}{\beta}\de_p\right).
\label{eq: generator modified Langevin}
\end{equation}
A simple computation shows that the canonical distribution
\begin{equation}
\mu(dq \,dp)=Z^{-1}_{\mu}\mathrm{e}^{-\beta H(q,p)}\,dp\,dq, \qquad Z_{\mu}=\int_{\El}\mathrm{e}^{-\beta H(q,p)}\,dp\,dq < +\infty,
\label{eq: invariant measure modified Langevin}
\end{equation}
is invariant under the dynamics~\eqref{modified Langevin}, \emph{i.e.} for all $C^\infty$ functions $\phi$ with compact support,
\[
\int_{\El}\Lopgen\phi \ d\mu=0.
\]

Note that, in view of the separability of the Hamiltonian, the marginal of the distribution $\mu$ in the position variables is, for any kinetic energy~$U$,
\[
\bar{\mu}(dq) = Z_{V}^{-1} \mr{e}^{-\beta V(q)} \, dq, \qquad Z_V = \int_{\mathcal{D}} \mr{e}^{-\beta V(q)} \, dq.
\]
In particular, this marginal distribution therefore coincides with the one of the standard Langevin dynamics~\eqref{Langevin}. This allows to straightforwardly estimate canonical averages of observables depending only on the positions with the modified Langevin dynamics~\eqref{modified Langevin}. In fact, there is no restriction in generality in considering observables depending only on the positions, since general observables $A(q,p)$ depending both on momenta and positions can be reduced to functions of the positions only by a partial integration in the momenta variables. This partial integration is often very easy to perform since momenta are independent Gaussian random variables under the canonical measure associated with the standard kinetic energy.

\subsection{AR-Langevin dynamics}
\label{sec:ARPS_presentation}

A concrete example for the choice of the kinetic energy function $U$ in (\ref{eq: modified Hamiltonian}) is the one proposed for the adaptively restrained Langevin dynamics in~\cite{PRL-ARPS}. It is parameterized by two constants $0\leq \er<\ef$. In this model, the kinetic energy is a sum of individual contributions
\[
U(p)=\sum_{i=1}^N u(p_i).
\]
For large values of momenta, the modified individual kinetic energies are equal to the standard kinetic energy of one particle, but they vanish for small momenta:
\[
u(p_i) = \left\{
\begin{aligned}
  \ds 0                 &\quad \mathrm{for} \ \frac{p_i^2}{2m_i} \leq \er, \\
  \ds \frac{p_i^2}{2m_i} &\quad \mathrm{for} \ \frac{p_i^2}{2m_i} \geq \ef.
\end{aligned}
\right.
\]
An appropriate function allows to smoothly interpolate between these two limiting regimes (see Definition~\ref{definition ARPS Langevin} for the precise expression). A possible choice of an individual kinetic energy~$u$, as well as the associated canonical distribution of momenta $Z_{u}^{-1} \mathrm{e}^{-\beta u(p)} \, dp$ are depicted in Figure~\ref{figure 1 left} and Figure~\ref{figure 1 right} when $D=1$.

\begin{figure}[t]
  \centering
  \includegraphics[width=0.8\textwidth]{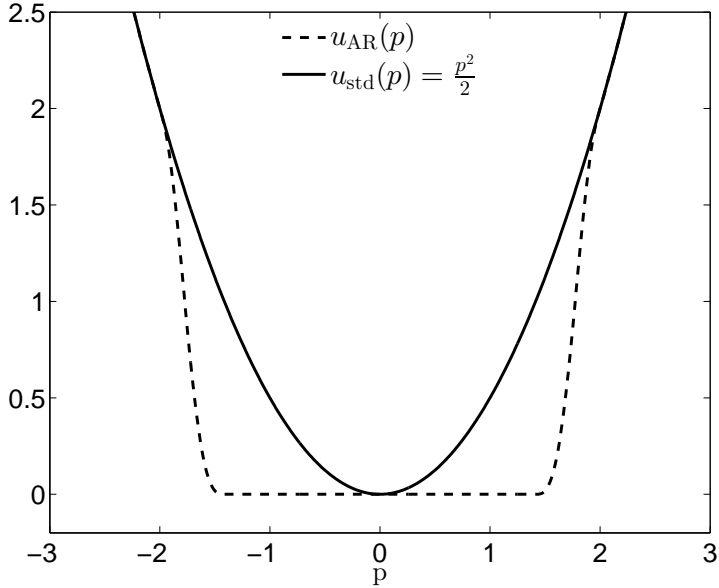}
  \caption{Standard quadratic kinetic energy function $\Ustd$ (solid lines), and an example of an AR kinetic energy function~$u$ with parameters $\ef=2$ and $\er=1$ (dashed line).}
  \label{figure 1 left}
	\end{figure}

\begin{figure}[t]
  \centering
  \includegraphics[width=0.8\textwidth]{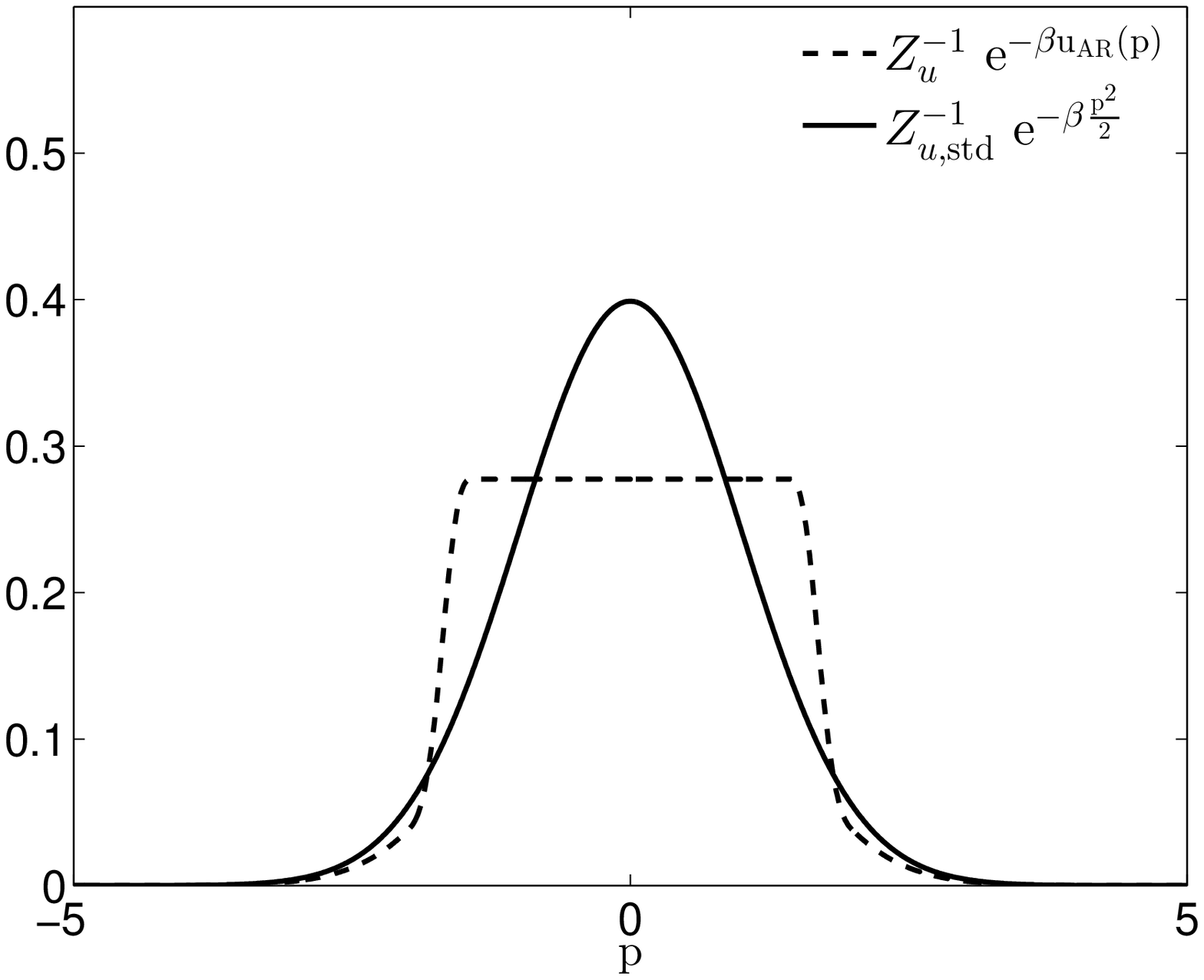}
  \caption{Marginal canonical densities associated with the kinetic energy functions of Figure \ref{figure 1 left}.}
  \label{figure 1 right}
	\end{figure}


The interest of AR-Langevin dynamics is that, when their individual kinetic energies are sufficiently small, particles do not move. When two particles are frozen in this way, their pairwise interactions need not be updated. This allows decreasing the computational complexity of the force computation, which is typically the most time-consuming part of a molecular dynamics solver. Note that this can be generalized to higher-order interactions (such as three-body interactions based on bending angles for instance).

\begin{remark}
\label{remark individual parameters}
Note that, due to the additive structure of the kinetic energy, the momenta~$p_i$ are independent and identically distributed (i.i.d.) under the canonical measure. It is however possible to choose different parameters $\er$ and $\ef$ for different particles, for example to focus calculations on a specific part of the particle system, in which case the momenta are still independent but not longer identically distributed. Such a situation is considered in the numerical example presented in Section~\ref{subsection A more realistic system: second order discretization}.
\end{remark}

\section{Ergodicity of the modified Langevin dynamics}
\label{section ergodicity}

There are several notions of ergodicity for stochastic processes. We focus here on two of them: the convergence of ergodic averages over a single trajectory, and the convergence of the law of the process.

\subsection{Convergence of ergodic averages}
\label{sec:cv_ergo_avg}

The convergence of ergodic averages over one trajectory is automatically ensured by the existence of an invariant probability measure and the irreducibility of the dynamics (see for instance~\cite{Kli87,MT93} for early results on such convergences for possibly degenerate diffusions). Since, by construction, an invariant probability measure is known (namely the canonical measure~\eqref{eq: invariant measure modified Langevin}), it suffices to show that the process generated by the modified Langevin equation is irreducible to conclude to the convergence of ergodic averages.

As reviewed in~\cite{rey-bellet}, the most standard argument to prove the irreducibility of degenerate diffusions is to prove the controllability of the dynamics relying on the Stroock-Varadhan support theorem, and the regularity of the transition kernel thanks to some hypoellipticity property. These conditions are satisfied for standard Langevin dynamics (see for instance~\cite{mattingly2002ergodicity}), but not for the modified Langevin dynamics we consider, since the Hessian of the kinetic energy function may not be invertible on an open set. This is the case for the AR kinetic energy function presented in Section~\ref{sec:ARPS_presentation}.

To illustrate this point, let us show for instance how the standard way of proving hypoellipticity fails (the proof of the controllability faces similar issues). The first task is to rewrite the generator~\eqref{eq: generator modified Langevin} of the process as
\[
\Lopgen = X_0 - \sum_{j=1}^{d} X_j^{\dagger}X_j,
\]
where
\[
X_0 = \n \U \cdot \n_q - \n V\cdot \n_p-\gamma\n \U\cdot \n_p,
\qquad
X_j=\sqrt{\frac{\gamma}{\beta}} \, \du_{p_j},
\]
and $X_j^{\dagger}$ is the adjoint of $X_j$ on the flat space $L^2(\El)$. We next compute, for $j = 1, \dots, d$, the commutators
\[
\ds\left[X_0,X_j\right] = X_0X_j - X_j X_0 = \sqrt{\frac{\gamma}{\beta}}\,\n\left(\du_{p_j} U\right)\cdot \left(\n_q-\gamma \n_p\right).
\]
When $\nabla^2 U$ is invertible, it is possible to recover the full algebra of derivatives by an appropriate combination of $X_1,\dots,X_d$ and $[X_0,X_1],\dots,[X_0,X_d]$. Here, we consider a situation when this is not the case and, even more dramatically, where the Hessian may vanish on an open set. In this situation, $\left[X_0,X_j\right]=0$ on the same open set, and in fact all iterated commutators $[X_0,[\dots[X_0,X_j]]]$ also vanish.

We solve this problem by a direct constructive approach, where we see the modified dynamics as a perturbation of the standard Langevin dynamics. We rely on the following assumption:

\begin{assumption}
The kinetic energy function $U\in C^{\infty}$ of the modified Langevin dynamics is such that
\[
\left\|\n\U - \n\U_{\rm std}\right\|_{L^{\infty}}\leq \ConstNablaU
\]
for some constant $\ConstNablaU < +\infty$.
\label{hyp mod Langevin ergodicity}
\end{assumption}

Under this assumption, we can prove that the modified Langevin dynamics is irreducible by proving an appropriate minorization condition, which crucially relies on the compactness of the position space~$\mathcal{D}$ (see Section~\ref{proof of the minorization condition} for the proof).

\begin{lemma}[Minorization condition]
\label{lemma minorization condition}
Suppose Assumption~\ref{hyp mod Langevin ergodicity} holds. Then for any fixed $p^*>0$ and $t>0$, there exists a probability measure $\nu_{p^*, t}$ on $\mathcal{D}\times \R^d$ and a constant $\kappa>0$ such that, for every Borel set $B\in \mathscr{B}(\mathcal{E})$,
\[
\mathbb{P}\left((q_{t},p_{t})\in B \, \Big| \, \left|p_{0}\right| \leq p_{*} \right) \geq \kappa \, \nu_{p_{*},t}(B),
\]
with $ \nu_{p_{*},t}(B)>0$ when $\left|B\right|>0$.
\end{lemma}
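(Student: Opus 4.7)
My plan is to treat the modified Langevin dynamics as a bounded perturbation of the standard Langevin dynamics and transfer the latter's classical Doeblin minorization via Girsanov's theorem. Couple the two processes through a common Brownian motion and the same initial condition $z_0 = (q_0,p_0)$. By Assumption~\ref{hyp mod Langevin ergodicity}, the drift difference $\gamma(\n\Ustd-\n\U)$ is bounded by $\gamma\ConstNablaU$, so Novikov's criterion is trivially satisfied on $[0,t]$ and Girsanov's theorem yields mutual absolute continuity of the two path-space laws on $C([0,t];\El)$. The associated Radon-Nikodym derivative $Z_t$ is an exponential martingale with bounded integrand, so both $Z_t$ and $Z_t^{-1}$ have all moments, uniformly over initial conditions with $|p_0|\le p_*$.

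Next, I would invoke the known minorization for standard Langevin. Since $\n^2 \Ustd = M^{-1}$ is positive definite, the commutators $[X_0, X_j]$ already span all coordinate directions, so H\"ormander's condition holds and the generator is hypoelliptic; combined with the controllability of the associated control system (Stroock-Varadhan support theorem) on the compact torus $\mathcal{D}$, this produces a smooth, strictly positive transition density $\rho^{\mr{std}}_t(z_0,z)$. By continuity and compactness of $\{|p_0|\le p_*\}\times\mathcal{D}$, for any compact $K\subset\El$ with non-empty interior,
\[
c_K := \inf_{|p_0|\le p_*,\, q_0\in\mathcal{D},\, z\in K} \rho^{\mr{std}}_t(z_0,z) > 0.
\]

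The main obstacle is to push this pointwise lower bound through to the transition kernel of the modified dynamics. The direct Cauchy-Schwarz combination $\Pb^{\mr{std}}_{z_0}(B) = \E^{\mr{mod}}_{z_0}[Z_t^{-1}\1_B] \le \sqrt{\E^{\mr{mod}}_{z_0}[Z_t^{-2}]}\,\sqrt{\Pb^{\mr{mod}}_{z_0}(B)}$ only produces $\Pb^{\mr{mod}}_{z_0}(B) \gtrsim |B\cap K|^2$, which is quadratic in $|B|$ and hence not of Doeblin form. I would instead work at the level of densities: Step~1 grants the modified dynamics a measurable transition density satisfying $\rho^{\mr{mod}}_t(z_0,z) = \E^{\mr{std}}_{z_0}[Z_t\,|\, z_t = z]\,\rho^{\mr{std}}_t(z_0,z)$, and the task reduces to a pointwise lower bound on the conditional expectation on a large set of endpoints. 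Such a bound can be extracted from the sharp concentration of $\log Z_t$ (itself a direct consequence of the boundedness of the Girsanov integrand) together with a covering argument on the endpoint variable. Once a uniform lower bound $\rho^{\mr{mod}}_t \ge c_K'$ on $K$ is secured, the Doeblin minorization follows with $\nu_{p^*,t}$ equal to the normalized Lebesgue measure on $K$ and $\kappa = c_K'\,|K|$, which automatically satisfies $\nu_{p^*,t}(B)>0$ whenever $|B|>0$ (at least after enlarging $K$ so that $K$ meets every open set, which we can do by taking a nested exhaustion of $\El$ by compacts).
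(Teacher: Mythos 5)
Your proposal takes a fundamentally different route from the paper's. The paper never invokes Girsanov's theorem or hypoellipticity of the standard Langevin semigroup. Instead, it compares the modified dynamics to the \emph{free} Langevin dynamics (zero potential force and standard quadratic kinetic energy), for which the transition law is an explicit Gaussian. Concretely, it integrates the $p$- and $q$-equations in closed form to write $(q_t,p_t)$ as a Gaussian vector $(\widetilde{\mathcal{G}}_t,\mathcal{G}_t)$ --- depending only on the Brownian path, with explicitly computable non-degenerate covariance --- plus path-dependent but \emph{deterministically bounded} remainders $(\mathcal{Q}_t,\mathcal{P}_t)$. The minorizing measure is then built directly from the Gaussian density by taking an infimum over all admissible shifts within the bounding ball. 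This is considerably more elementary than your route: it needs neither a density for the standard Langevin dynamics, nor H\"ormander-type arguments, nor a Cauchy--Schwarz/concentration step, and it exploits the compactness of $\mathcal{D}$ only to bound the force $\n V$.

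The step that is actually missing from your proposal is precisely the one you flag as delicate. You correctly observe that the naive Cauchy--Schwarz bound gives a lower bound quadratic in $|B|$ and hence is not a Doeblin minorization. But the proposed fix --- a pointwise lower bound on $\E^{\mathrm{std}}_{z_0}\!\left[Z_t \mid z_t = z\right]$, uniform in $z \in K$ and in $z_0$ with $|p_0|\le p^*$, alleged to follow from concentration of $\log Z_t$ plus a ``covering argument on the endpoint variable'' --- is not justified, and it is the genuine content of the claim. Unconditional concentration of $\log Z_t$ does not control the conditional law of $Z_t$ on the fiber $\left\{z_t = z\right\}$: paths reaching an atypical endpoint $z$ may systematically carry very small Girsanov weight, so that the conditional expectation collapses there. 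The obvious Markov estimate, combined with concentration, only shows that for each fixed $z_0$ the conditional expectation is small on a set of endpoints of small $\rho_t^{\mathrm{std}}(z_0,\cdot)$-measure; that bad set can move with $z_0$, so no fixed reference measure $\nu$ is produced, which is exactly what Doeblin requires. Making this rigorous would essentially amount to establishing a positive joint density for the augmented diffusion $(z_t,\log Z_t)$ (a new hypoellipticity/controllability problem), which is substantially harder than either your sketch or the paper's elementary computation.
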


The minorization condition implies the irreducibility of the dynamics, so that the following convergence result readily follows.

\begin{theorem}[Convergence of ergodic averages]
\label{thm convergence of averages}
When Assumption~\ref{hyp mod Langevin ergodicity} holds, ergodic averages over trajectories almost surely converge to the canonical average:
\[
\forall A\in L^{1}(\mu), \qquad \lim_{t \rightarrow +\infty}\frac{1}{t}\int_0^t A(q_s,p_s)ds = \int_{\El}A \, d\mu \quad \mathrm{a.s.}
\]
\end{theorem}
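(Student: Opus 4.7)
The plan is to upgrade the minorization condition of Lemma~\ref{lemma minorization condition} to Harris recurrence by pairing it with a Lyapunov drift estimate on the momenta, and then to invoke the Birkhoff ergodic theorem for positive Harris Markov processes.

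First, I would exhibit a Lyapunov function, a natural candidate being $\mathcal{W}(q,p) = 1 + |p|^2$. A direct computation from~\eqref{eq: generator modified Langevin} gives
\[
\Lopgen \mathcal{W} = -2\,p\cdot\nabla V(q) - 2\gamma\,p\cdot\nabla \U(p) + \frac{2\gamma\,d}{\beta}.
\]
Since $V \in C^{\infty}(\mathcal{D},\mathbb{R})$ with $\mathcal{D}$ compact, $|\nabla V|$ is uniformly bounded. Assumption~\ref{hyp mod Langevin ergodicity} lets us decompose $\nabla \U(p) = M^{-1}p + R(p)$ with $\|R\|_{L^{\infty}}\leq G_{\mr{std}}$, so that $p\cdot\nabla \U(p) \geq |p|^2/m_{\max} - G_{\mr{std}}|p|$. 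Combining these bounds and absorbing lower-order terms yields a geometric drift of the form $\Lopgen \mathcal{W} \leq -c\,\mathcal{W} + K$ for some constants $c,K>0$.

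Second, I would combine this drift with Lemma~\ref{lemma minorization condition} (applied, say, at the fixed time $t=1$) to obtain Harris recurrence for the corresponding time-$1$ skeleton chain. The drift estimate ensures that the process returns to the compact sublevel set $\{|p|\leq p_*\}$ in finite expected time from any initial condition, while Lemma~\ref{lemma minorization condition} shows that this set is \emph{small} in the Meyn-Tweedie sense. The canonical measure $\mu$ is then the unique invariant probability measure of the skeleton chain.

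Finally, the Birkhoff ergodic theorem for positive Harris chains (see, e.g., Theorem~17.0.1 of~\cite{MT93}) yields the almost sure convergence of the discrete ergodic sums $\frac{1}{N}\sum_{n=0}^{N-1}\int_n^{n+1}A(q_s,p_s)\,ds$ to $\int_{\mathcal{E}} A\,d\mu$ for every $A\in L^{1}(\mu)$ and every initial condition, from which the continuous-time statement follows by a standard interpolation argument between consecutive integer times. The main technical step is the verification of the Lyapunov drift, which Assumption~\ref{hyp mod Langevin ergodicity} renders essentially immediate; I do not anticipate any substantial obstacle in the remaining, routine Harris-recurrence machinery.
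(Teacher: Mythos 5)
Your proof is correct, but it takes a different route from the paper's, and it establishes more than the theorem requires. The paper's argument for Theorem~\ref{thm convergence of averages} deliberately avoids any Lyapunov drift: since the canonical measure $\mu$ is known \emph{a priori} to be an invariant probability measure, and since the minorization of Lemma~\ref{lemma minorization condition} already yields irreducibility (and that compact sets, which suffice here because positions live on the torus, are small/petite), the general theory of $\psi$-irreducible Markov processes with an invariant probability measure directly gives positive Harris recurrence and hence the law of large numbers, as the paper signals by citing~\cite{Kli87} and~\cite{MT93}. No drift inequality enters this chain of reasoning. You instead re-derive the geometric drift $\Lopgen\mathcal{W} \leq -c\,\mathcal{W}+K$ — which is essentially the content of Lemma~\ref{lemma Lyapunov}, proved in the paper in Section~\ref{proof of the lyapunov condition} and reserved there for the stronger result of Theorem~\ref{thm uniform convergence} (exponential decay of the law). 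Your computation of $\Lopgen\mathcal{W}$ and the lower bound on $p\cdot\nabla U$ via Assumption~\ref{hyp mod Langevin ergodicity} are correct, and pairing it with the small-set condition gives geometric ergodicity, which is strictly stronger than what is needed. The trade-off: your route produces the drift estimate as a by-product (so Theorem~\ref{thm uniform convergence} would then be nearly free), but for Theorem~\ref{thm convergence of averages} alone it is more work than necessary since the invariant measure is handed to you by construction and the expected-return-time argument becomes superfluous.
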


\subsection{Convergence of the law}

There are various functional frameworks to measure the convergence of the law of the process. We consider here weighted $L^\infty$ estimates on the semi-group $\mathrm{e}^{t \Lope}$. More precisely, we introduce a scale of Lyapunov functions
\begin{equation}
\K_s(p):=1+\left|p\right|^{2s}
\label{eq: Lyapunov function}
\end{equation}
 for $s\in\mathbb{N}^{*}$. Recall indeed that only momenta need to be controlled since positions remain in a compact space. The associated weighted $L^\infty$ spaces are
\[
L^{\infty}_{\K_s} = \left\{ f \, \mathrm{measurable} \, \left| \, \left\|f\right\|_{L^{\infty}_{\K_s}}:=\left\|\frac{f}{\K_s}\right\|_{L^{\infty}} < +\infty \right. \right\}.
\]
In order to prove the exponential convergence of the law, we rely on the result of~\cite{HairerMattingly-Yet}, which states that if a Lyapunov condition and a minorization condition hold true, then the sampled chain converges exponentially fast to its steady state in the following sense.

\begin{theorem}[Exponential convergence of the law]
\label{thm uniform convergence}
Suppose that Assumption~\ref{hyp mod Langevin ergodicity} holds. Then the invariant measure $\mu$ is unique, and for any $s\in \mathbb{N}^*$, there exist constants $C_s, \lambda_s>0$ such that
\begin{equation}
\label{eq thm uniform convergence}
\forall f \in L^\infty_{\K_s}, \quad \forall \ t\geq 0, \qquad \left\|\mathrm{e}^{t\Lope}f-\int_{\mathcal{E}}f d\mu\right\|_{L^{\infty}_{\K_{s}}}\leq C_s \mathrm{e}^{-\lambda_s t}\left\|f\right\|_{L^{\infty}_{\K_{s}}}.
\end{equation}
\end{theorem}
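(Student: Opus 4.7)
The plan is to apply the Harris-type theorem of \cite{HairerMattingly-Yet}, which yields the stated geometric ergodicity in weighted $L^{\infty}$ norm as soon as the Markov semigroup satisfies (i) a Lyapunov (drift) condition and (ii) a minorization condition on level sets of the Lyapunov function. The second ingredient has already been established in Lemma~\ref{lemma minorization condition}: for any $p^{*}>0$ and any $t>0$, the law at time $t$ conditioned on $\{|p_{0}| \leq p^{*}\}$ dominates a nontrivial measure. Since the sublevel sets $\{\K_{s} \leq R\}$ are precisely of the form $\{|p| \leq p^{*}(R)\}$ (the position space $\mathcal{D}$ being compact), the minorization needed by Hairer--Mattingly is exactly the one we have. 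Uniqueness of the invariant measure then follows automatically from the combination of these two ingredients.

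The core task is therefore to verify, for every $s \in \mathbb{N}^{*}$, a Lyapunov inequality of the form
\[
\Lope \K_{s} \leq -a_{s}\, \K_{s} + b_{s}
\]
for some constants $a_{s}>0$ and $b_{s}\geq 0$. Since $\K_{s}(p)=1+|p|^{2s}$ depends only on momenta, the position transport term $\nabla U \cdot \nabla_{q}$ contributes nothing, and a direct computation using the generator~\eqref{eq: generator modified Langevin} gives
\[
\Lope \K_{s} = -2s\gamma \, \nabla U(p)\cdot p\,|p|^{2s-2} - 2s\,\nabla V(q)\cdot p\,|p|^{2s-2} + \frac{2s\gamma}{\beta}(d+2s-2)\,|p|^{2s-2}.
\]
To extract a negative term of order $|p|^{2s}$, I would write $\nabla U = \nabla U_{\mathrm{std}} + (\nabla U - \nabla U_{\mathrm{std}})$ with $\nabla U_{\mathrm{std}}(p) = M^{-1}p$, so that
\[
\nabla U(p)\cdot p \geq p^{T} M^{-1} p - \ConstNablaU |p| \geq \lambda_{\min}(M^{-1})\,|p|^{2} - \ConstNablaU |p|
\]
by Assumption~\ref{hyp mod Langevin ergodicity}. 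Using moreover that $\|\nabla V\|_{L^{\infty}(\mathcal{D})}$ is finite (since $V \in C^{\infty}$ on the compact torus $\mathcal{D}$), the cross terms of order $|p|^{2s-1}$ and $|p|^{2s-2}$ are controlled by Young's inequality, yielding the drift condition with $a_{s}$ proportional to $\gamma \lambda_{\min}(M^{-1})$.

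With both ingredients in hand, Theorem 1.2 of \cite{HairerMattingly-Yet} (applied to the continuous-time semigroup, or to any discrete skeleton $\mathrm{e}^{t_{0}\Lope}$ followed by an interpolation argument) delivers the uniqueness of $\mu$ and the geometric bound~\eqref{eq thm uniform convergence}. The main obstacle I expect is checking the Lyapunov inequality with the correct sign: everything hinges on the fact that Assumption~\ref{hyp mod Langevin ergodicity} forces $\nabla U$ to be a bounded perturbation of the standard linear friction $M^{-1}p$, so that the dissipation recovered from the friction term genuinely dominates at large $|p|$ for every $s$. Once this is secured, the passage from $s=1$ (the generic Langevin case) to general $s \in \mathbb{N}^{*}$ is routine thanks to the polynomial structure of $\K_{s}$.
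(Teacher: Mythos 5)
Your overall strategy mirrors the paper's exactly: combine the minorization condition (Lemma~\ref{lemma minorization condition}) with a Lyapunov estimate (Lemma~\ref{lemma Lyapunov}) and invoke the Harris-type theorem of \cite{HairerMattingly-Yet}, using the compactness of $\mathcal{D}$ to identify sublevel sets of $\K_s$ with momentum balls. Where you genuinely diverge is in how the Lyapunov estimate is established. The paper proves the integral form $\mathrm{e}^{t\Lope}\K_s \leq a\K_s + b$ with $a\in[0,1)$ directly from the SDE: it writes the explicit solution of the momentum equation as $p_t = \mathrm{e}^{-\gamma t}p_0 + \mathcal{F}_t + \mathcal{G}_t$, where $\mathcal{F}_t$ is uniformly bounded thanks to Assumption~\ref{hyp mod Langevin ergodicity} and $\mathcal{G}_t$ is Gaussian, then takes conditional expectations and absorbs cross terms by Young's inequality. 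You instead compute $\Lope\K_s$ pointwise and extract dissipation from the friction term via $\nabla U(p)\cdot p \geq \lambda_{\min}(M^{-1})|p|^2 - \ConstNablaU|p|$, obtaining the infinitesimal drift inequality $\Lope\K_s\leq -a_s\K_s+b_s$; your computation is correct, and a Gr\"onwall argument upgrades it to $\mathrm{e}^{t\Lope}\K_s\leq \mathrm{e}^{-a_s t}\K_s + b_s/a_s$. The trade-off between the two routes: the paper's SDE-level argument produces the estimate in precisely the form Hairer--Mattingly requires and avoids the localization/stopping-time argument needed to justify differentiating $\E[\K_s(p_t)]$ in $t$ when $\K_s$ is unbounded (a step you implicitly pass over when invoking the ``discrete skeleton and interpolation''), while your generator-level computation is arguably cleaner, handles a general mass matrix $M$ transparently through $\lambda_{\min}(M^{-1})$, and makes the dependence of the dissipation rate on $\gamma$, $M$ and $\ConstNablaU$ explicit, whereas the paper's formula $p_t = \mathrm{e}^{-\gamma t}p_0 + \cdots$ silently specializes to $M=\mathrm{Id}$.
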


As mentioned above, the proof of this result directly follows from the arguments of~\cite{HairerMattingly-Yet}. The minorization condition is already stated in Lemma~\ref{lemma minorization condition}, while the appropriate Lyapunov condition reads as follows (see Section~\ref{proof of the lyapunov condition} for the proof, which uses the same strategy as~\cite{Matthews} and~\cite{Joubaud}).

\begin{lemma}[Lyapunov Condition]
  \label{lemma Lyapunov}
  Suppose that Assumption~\ref{hyp mod Langevin ergodicity} holds. Then, for any $s\geq 1$ and $t>0$, there exist $b>0$ and $a\in\left[0,1\right)$ such that
    \[
      \mathrm{e}^{t\Lope}\K_s \leq a\K_s + b.
		\]
\end{lemma}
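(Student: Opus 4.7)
My strategy is to establish a pointwise drift inequality of the form $\Lope\K_s \leq -c_s\K_s + C_s$ for positive constants $c_s, C_s$, and then integrate it in time via Grönwall's lemma to obtain the semigroup bound $\mathrm{e}^{t\Lope}\K_s \leq a\K_s + b$ with $a = \mathrm{e}^{-c_s t} \in [0,1)$ and $b = C_s/c_s$. Since $\K_s$ depends only on $p$, only the momentum-dependent parts of the generator~\eqref{eq: generator modified Langevin} contribute, and a direct calculation using $\n_p |p|^{2s} = 2s|p|^{2s-2}p$ and $\de_p|p|^{2s} = 2s(d+2s-2)|p|^{2s-2}$ yields
\[
\Lope\K_s = -2s|p|^{2s-2}\, p \cdot \n V(q) - 2s\gamma\, |p|^{2s-2}\, p \cdot \n U(p) + \frac{2s\gamma}{\beta}(d+2s-2)|p|^{2s-2}.
\]

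The key step is to decompose $\n U = M^{-1}p + R$, with $R := \n U - \n \Ustd$ satisfying $\|R\|_{L^\infty} \leq \ConstNablaU$ by Assumption~\ref{hyp mod Langevin ergodicity}. The dissipative term then produces a genuine friction:
\[
-2s\gamma |p|^{2s-2} p \cdot \n U(p) \leq -\frac{2s\gamma}{m_{\max}}\, |p|^{2s} + 2s\gamma \ConstNablaU\, |p|^{2s-1},
\]
while compactness of $\mathcal{D}$ yields $\|\n V\|_{L^\infty} < +\infty$, which controls the conservative force contribution by $C|p|^{2s-1}$. Applying Young's inequality to absorb the lower-order terms $|p|^{2s-1}$ and $|p|^{2s-2}$ into a small fraction of $|p|^{2s}$ plus a constant, one obtains the announced drift inequality with, for instance, $c_s = s\gamma/m_{\max}$ and an explicit constant $C_s$ depending on $s, \gamma, \beta, d, m_{\max}, \ConstNablaU$ and $\|\n V\|_{L^\infty}$. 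Rewriting $|p|^{2s} = \K_s - 1$ then gives $\Lope\K_s \leq -c_s \K_s + C_s'$.

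To pass from the pointwise inequality to the semigroup estimate, I apply Itô's formula to $\K_s(p_{t\wedge\tau_n})$, where $\tau_n$ is the first exit time of $|p|$ from a ball of radius $n$. The martingale part vanishes in expectation, so $\frac{d}{dt}\E[\K_s(p_{t\wedge\tau_n})] \leq -c_s \E[\K_s(p_{t\wedge\tau_n})] + C_s'$. Grönwall's lemma, followed by passing to the limit $n \to \infty$ via monotone convergence (non-explosion being itself a consequence of the drift estimate for $s=1$), yields the pointwise bound
\[
(\mathrm{e}^{t\Lope}\K_s)(q,p) = \E_{(q,p)}[\K_s(p_t)] \leq \mathrm{e}^{-c_s t}\K_s(p) + \frac{C_s'}{c_s},
\]
uniformly in the initial condition. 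The main technical point, rather than the calculation itself, is the localization argument justifying the use of Itô's formula for the unbounded function $\K_s$, but this is a standard procedure once the drift inequality is known for all $s \geq 1$.
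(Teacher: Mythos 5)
Your proof is correct, but it takes a genuinely different route from the paper. You establish a pointwise drift inequality $\Lope\K_s \leq -c_s\K_s + C_s$ by computing the generator directly on $\K_s$, splitting $\n U$ into $M^{-1}p$ plus a bounded remainder (Assumption~\ref{hyp mod Langevin ergodicity}) to extract genuine friction, and absorbing the lower-order terms by Young's inequality; you then pass to the semigroup estimate via It\^o and Gr\"onwall with a stopping-time localization. The paper instead bypasses the generator computation entirely: it integrates the momentum equation explicitly (variation of constants for the OU-like part, treating $-\n V + \gamma\Ze$ as a bounded forcing), writes $p_t = \mathrm{e}^{-\gamma t}p_0 + \mathcal{F}_t + \mathcal{G}_t$ with $\mathcal{F}_t$ uniformly bounded and $\mathcal{G}_t$ Gaussian, and gets the contraction factor $\mathrm{e}^{-2\gamma t}$ directly by squaring and taking a conditional expectation; higher powers $|p_t|^{2s}$ are then handled as $\alpha_t^{2s}|p_0|^{2s}$ plus a lower-order polynomial. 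Your approach is the standard Lyapunov/Foster--Meyn--Tweedie argument and is arguably more robust, since it does not rely on being able to solve the momentum equation in closed form; it does, however, require the usual care with It\^o's formula for an unbounded $\K_s$ and a non-explosion argument, which you correctly flag as being self-contained via the $s=1$ drift. The paper's explicit-integration route avoids the localization and gives the contraction constant in completely explicit form, at the cost of exploiting the specific linear-plus-bounded structure of the drift. Both are valid; the conclusion and the role of Assumption~\ref{hyp mod Langevin ergodicity} (boundedness of $\n U - \n\Ustd$) are the same in each.
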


\subsection{Regularity results for the evolution semi-group}

We provide in this section decay estimates for the spatial derivatives of $\mathrm{e}^{t \Lope}f$, following the approach pioneered in~\cite{Talay} and further refined in~\cite{Kopec}. Such estimates were obtained for the standard Langevin dynamics, but can in fact straightforwardly be extended to modified Langevin dynamics with Hessians bounded from below by a positive constant. Our aim in this section is to provide decay estimates for the spatial derivatives of $\mathrm{e}^{t \Lope}f$ in the situation when $\nabla^2 U$ fails to be strictly convex, for instance because $\nabla^2 U$ vanishes on an open set as is the case for AR particle simulations.

In order to state our results, we first need to define the weighted Sobolev spaces $\mathscr{W}^{n,\infty}_{\K_{s}}$ for $n\in \mathbb{N}$:
\[
\ds\mathscr{W}^{n,\infty}_{\K_{s}}=\Big\{f\in L^{\infty}_{\K_{s}} \left| \ \forall k\in \mathbb{N}^{2d}, \ \left|k\right|\leq n, \ \du^k f\in L^{\infty}_{\K_{s}}\Big\}\right. .
\]
These spaces gather all functions which grow at most like $\K_s$, and whose derivatives of order at most~$n$ all grow at most like $\K_s$. We also introduce the space of smooth functions $\mathscr{S}$, the vector space of functions $\ds f\in L^2\left(\mu\right)$ such that, for any $n \geq 0$, there exists $r \in \mathbb{N}$ for which $f\in \mathscr{W}^{n,\infty}_{\K_{r}}$.

We also make the following assumption on the kinetic energy function, which can be understood as a condition of ``almost strict convexity'' of the Hessian $\n^2U$.

\begin{assumption}
\label{hyp THM}
The kinetic energy $U\in \mathscr{S}$ has bounded second-order derivatives:
\begin{equation}
\sup_{\left|j\right|=2}\left\|\du^j U\right\|_{L^{\infty}}<\infty\,,
\label{eq: assummption Linfty du2}
\end{equation}
and there exist a function $U_{\nu} \in \mathscr{S}$ and constants $\nu>0$ and $G_{\nu}\geq 0$ such that
\begin{equation}
\n^2 U_{\nu}\geq \nu >0
\label{condition dn2U strictly pos}
\end{equation}
and
\begin{equation}
\left\|\n \left(U-U_{\nu}\right)\right\|_{L^{\infty}}\leq G_{\nu}.
\label{condition dnU-dnUnu}
\end{equation}
\end{assumption}

\begin{remark}
A natural choice for the function $U_{\nu}$ in Assumption \ref{hyp THM} is $\Ustd$. The condition~\eqref{condition dn2U strictly pos} then holds with $\nu = 1/\max(m_1,\dots,m_N)$. Moreover, \eqref{eq: assummption Linfty du2} holds true as soon as $U$ is a local perturbation of $U_{\rm std}$. The most demanding condition therefore is~\eqref{condition dnU-dnUnu}, especially if $G_\nu$ has to be small.
\end{remark}

By following the same strategy as in~\cite[Proposition A.1.]{Kopec} (which refines the results already obtained in~\cite{Talay}), and appropriately taking care of the lack of strict positivity of the Hessian $\nabla^2 U$ by assuming that $G_\nu$ is sufficiently small, we prove the following result in Section~\ref{proof proposition kopec}.

\begin{lemma}
\label{main theorem kopec}
Suppose that Assumptions~\ref{hyp mod Langevin ergodicity} and~\ref{hyp THM} hold, and fix $A\in\mathscr{S}$. For any $n\geq 1$, there exist $\widetilde{n},s_n\in \mathbb{N}$ and $\lambda_n>0$ such that, for $s \geq s_n$ and $\displaystyle G_{\nu}\leq\rho_{s}$ with $\rho_{s}>0$ sufficiently small (depending on~$s$ but not on~$n$), there is $r \in \mathbb{N}$ and $C>0$ for which
\begin{equation}
\forall t\geq 0, \quad \forall \left|k\right|\leq n, \qquad \left\|\du^k \mathrm{e}^{t\Lopgen}\Pi_{\mu}A\right\|_{L^{\infty}_{\K_s}}\leq C  \left\| A\right\|_{W^{\widetilde{n},\infty}_{\K_r}}\mathrm{e}^{-\lambda_n t}.
\label{eq: functional estimates Kopec}
\end{equation}
\end{lemma}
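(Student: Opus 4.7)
The plan is to follow the induction-on-derivative-order strategy of~\cite{Talay}, as refined in Proposition~A.1 of~\cite{Kopec}, which obtains such estimates through a coupled system of weighted $L^\infty$ dissipative inequalities for the solution together with all its derivatives. The key modification here is to use the strictly convex reference kinetic energy~$U_\nu$ as the sole source of dissipation and to treat the difference $W := U - U_\nu$ as a perturbation controlled by~$G_\nu$. The base case $n = 0$ reduces to Theorem~\ref{thm uniform convergence} applied to $\Pi_\mu A$, which has zero mean against~$\mu$.

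For the inductive step, set $g_t := \mr{e}^{t\Lopgen}\Pi_\mu A$ and fix $|k| = n$. Differentiating the Kolmogorov equation yields
\[
\du_t \du^k g_t = \Lopgen\, \du^k g_t + [\du^k, \Lopgen] g_t, \qquad [\du^k,\Lopgen] g_t = \sum_{|j| \leq n,\, j \neq k} a_{k,j}(p,q)\, \du^j g_t,
\]
where the smooth coefficients $a_{k,j}$ have polynomial growth and therefore belong to some $L^\infty_{\K_{s_0}}$ thanks to Assumption~\ref{hyp THM}, while $\du^k g_t$ itself is absent from the sum since only the first-order part of~$\Lopgen$ contributes to the bracket. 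Decomposing $\Lopgen = \Lopgen_\nu + \delta\Lopgen$ with $\delta\Lopgen = \n W \cdot (\n_q - \gamma\n_p)$, the coefficients of $\delta\Lopgen$ are bounded by~$G_\nu$ in view of~\eqref{condition dnU-dnUnu}, and the reference generator~$\Lopgen_\nu$ satisfies the Kopec--Talay derivative estimates directly since $\n^2 U_\nu \geq \nu > 0$ (this is exactly the situation covered by the extension of~\cite{Kopec} mentioned in the paragraph preceding Assumption~\ref{hyp THM}).

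Using the dissipation produced by~$\Lopgen_\nu$ and treating the commutator together with $\delta\Lopgen g_t$ as source terms, one derives a closed weighted $L^\infty$ inequality of the form
\[
\|\du^k g_t\|_{L^\infty_{\K_s}} \leq C_1 \mr{e}^{-\lambda t}\|A\|_{\mathscr{W}^{\widetilde{n},\infty}_{\K_r}} + C_2 \int_0^t \mr{e}^{-\lambda(t-u)} \sum_{|j| < n} \|\du^j g_u\|_{L^\infty_{\K_{s+s_0}}} du + C_3\, G_\nu \int_0^t \mr{e}^{-\lambda(t-u)} \|\du^k g_u\|_{L^\infty_{\K_{s+s_0}}} du,
\]
where the last integral collects the contribution of~$\delta\Lopgen$. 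Choosing $G_\nu \leq \rho_s$ small enough allows this last integral to be absorbed by the left-hand side through a Gr\"onwall step, while the middle integral is controlled using the induction hypothesis for the lower-order derivatives~$\du^j g_u$ with $|j| < n$. Combining these bounds with the base case yields exponential decay with a (possibly slower) rate~$\lambda_n > 0$.

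The main obstacle is precisely the lack of hypocoercivity for~$\Lopgen$ itself: when $\n^2 U$ vanishes on an open set, the commutator $[\n_p,\Lopgen]$ — which in the usual argument produces a $\n^2 U\,\n_q$ term transferring dissipation from momenta to positions — degenerates there. The substitution $U \leadsto U_\nu$ relocates the coercivity mechanism inside~$\Lopgen_\nu$, while the condition $G_\nu \leq \rho_s$ keeps the residual $\delta\Lopgen$ small enough (in weighted norms) to be absorbed. The threshold~$\rho_s$ must shrink with~$s$ because the polynomial growth of the commutator coefficients $a_{k,j}$ forces estimates in weighted spaces of increasing exponent~$s + s_0$, but it is independent of~$n$ since additional derivatives are paid for by enlarging~$\widetilde{n}$ rather than by further shrinking the perturbation.
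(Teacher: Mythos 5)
You have correctly identified the central idea of the paper's proof: split the kinetic energy as $U = U_\nu + W$, use the strict convexity of $U_\nu$ as the source of coercivity, and exploit the smallness of $G_\nu = \|\n W\|_{L^\infty}$ to keep the perturbation under control. However, the $L^\infty$ Duhamel/Gr\"onwall implementation you propose does not close, for a structural reason.

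Write $\delta\Lopgen = \n W\cdot(\n_q - \gamma\n_p)$. When you differentiate the Kolmogorov equation and isolate $\du^k g_t$ with $|k|=n$, the ``perturbation'' contribution you need to treat as a source is $\delta\Lopgen\, \du^k g_t = \n W\cdot(\n_q - \gamma\n_p)\du^k g_t$, which is a linear combination of \emph{$(n+1)$-th order} derivatives of $g_t$, not $n$-th order. Your claimed closed inequality bounds this term by $G_\nu\,\|\du^k g_u\|_{L^\infty_{\K_{s+s_0}}}$, i.e.\ by a same-order quantity, which is false. Since the induction is on $n$, invoking the $(n+1)$ case is circular; and the Kopec--Talay estimates for $\Lopgen_\nu$ do not ``gain'' a derivative, so Duhamel does not save you either. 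This is exactly why a pointwise/$L^\infty$ perturbation argument is insufficient here. (A secondary issue is that your middle integral only involves $|j|<n$, whereas the commutator $[\du^k,\Lopgen]$ also produces derivatives $\du^j$ with $|j|=n$, $j\neq k$, so same-order couplings among different multi-indices must be treated simultaneously.)

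The paper sidesteps this by working in weighted $L^2(\pi_s)$ spaces with an energy identity (Lemma~\ref{lemma A.6}): differentiating $\mr{e}^{\zeta t}\|\mathcal{A}u(t)\|^2_{L^2(\pi_s)}$ and using $2\,\Lopgen(\mathcal{A}u)\,\mathcal{A}u = \Lopgen(|\mathcal{A}u|^2) - \tfrac{2\gamma}{\beta}|\n_p\mathcal{A}u|^2$ produces the \emph{good-sign} dissipation term $-\tfrac{2\gamma}{\beta}\int_0^T\!\int |\n_p\mathcal{A}u|^2\pi_s$, which precisely controls the $(n+1)$-th order $p$-derivatives. The $\delta\Lopgen$-type contribution — after taking $\mathcal{A} = L_\alpha = \alpha\n_p - \n_q$, replacing $\n^2 U$ by $\n^2 U_\nu + \n^2 W$, and integrating by parts — is bounded by $\|\n W\|_{L^\infty}$ times a mix of $|L_\alpha u|^2$ and $|\n_p L_\alpha u|^2$ terms; the first is absorbed by $-2\nu\alpha|L_\alpha u|^2$ and the second by the dissipation term, which is why the threshold $\rho_s$ in~\eqref{eq: def rho m} appears. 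The mixed-derivative operator $L_\alpha$ is also essential to transfer dissipation to $\n_q u$; your sketch omits this mechanism. Finally, the $L^2(\pi_s)$ estimates on $u$ and its derivatives are converted to the $L^\infty_{\K_s}$ estimates in the statement via the Sobolev embedding theorem, at the cost of increasing the weight index $s$.

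In short: your decomposition is the right conceptual move, but you need the $L^2$ energy identity (to generate dissipation of one extra $p$-derivative), the mixed derivatives $L_\alpha$ (to recover $q$-dissipation), and Sobolev embedding at the end — replacing a naive $L^\infty$ Gr\"onwall step, which cannot absorb a first-order perturbation operator.
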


The parameter~$\rho_s$ can in fact be made explicit, see~\eqref{eq: def rho m} below. The decay estimate~\eqref{eq: functional estimates Kopec} shows that the derivatives of the evolution operator can be controlled in appropriate weighted Hilbert spaces. Note however that the Lyapunov functions entering in the estimates are not the same a priori on both sides of the inequality~\eqref{eq: functional estimates Kopec}. Let us emphasize, though, that we can obtain a control in all spaces $L^\infty_{\K_s}$ for $s$ sufficiently large (depending on the order of derivation).

\section{Analysis of the statistical error}
\label{section analysis of statistical error}

The asymptotic variance characterizes the statistical error. In Section~\ref{sec:asymptotic_variance}, we show that the asymptotic variance is well defined for the modified Langevin dynamics. We can in fact prove a stronger result, namely that a Central Limit Theorem (CLT) holds true for ergodic averages over one trajectory. In a second step, we more carefully analyze in Section~\ref{sec:perturbative_variance} the properties of the variance of the AR-Langevin dynamics by proving a linear response result in the limit of a vanishing lower bound on the kinetic energies. To obtain the latter results, we rely on the estimates provided by Lemma~\ref{main theorem kopec}.

\subsection{A Central Limit theorem for ergodic averages}
\label{sec:asymptotic_variance}

Let us first write the asymptotic variance in terms of the generator of the dynamics. To simplify the notation, we introduce the orthogonal projection $\Pi_{\mu}$ onto the orthogonal of the kernel of the operator~$\Lopgen$ (with respect to the $L^2(\mu)$ scalar product): for any $\psi \in L^2(\mu)$,
\[
\Pi_{\mu} \psi:=\psi - \int_{\mathcal{E}}\psi \, d\mu.
\]
Since $L^{\infty}_{\K_{s}} \subset L^2(\mu)$, we can define $\ds\widetilde{L^{\infty}_{\K_{s}}}=\Pi_{\mu}\left( L^{\infty}_{\K_{s}}\right)$. The ergodicity result~\eqref{eq thm uniform convergence} allows us to conclude that the operator $\Lopgen$ is invertible on~$\widetilde{L^{\infty}_{\K_{s}}}$ since the following operator equality holds on $\mathcal{B}\left(\widetilde{L^{\infty}_{\K_{s}}}\right)$, the Banach space of bounded operators on~$\widetilde{L^{\infty}_{\K_{s}}}$:
\[
\Lopgen^{-1} = \int_0^{+\infty} \mathrm{e}^{t \Lopgen} \, dt.
\]
This leads to the following resolvent bounds (the second part being a direct corollary of Lemma~\ref{main theorem kopec}).

\begin{corollary}
\label{corollary convergence thm}
Suppose that Assumption~\ref{hyp mod Langevin ergodicity} holds. Then, for any $s \in \mathbb{N}^*$,
\begin{equation}
\label{eq: corollary convergence thm}
\left\|\Lopgen^{-1}\right\|_{\mathcal{B}\left(\widetilde{L^{\infty}_{\K_{s}}}\right)}\leq \frac{C_s}{\lambda_s},
\end{equation}
where $\lambda_s,C_s$ are the constants introduced in Theorem~\ref{thm uniform convergence}.
Suppose in addition that Assumption~\ref{hyp THM} holds, and fix $A\in\mathscr{S}$. For any $n\geq 1$, there exist $\widetilde{n},s_n\in \mathbb{N}$ and $\lambda_n>0$ such that, for $s \geq s_n$ and $\displaystyle G_{\nu}\leq\rho_{s}$ with $\rho_{s}>0$ sufficiently small (depending on~$s$ but not on~$n$), there is $r \in \mathbb{N}$ and $C>0$ for which
\begin{equation}
\label{eq: functional estimates Kopec_resolvent}
\forall \left|k\right|\leq n, \qquad \left\|\du^k \Lopgen^{-1} \Pi_{\mu}A\right\|_{L^{\infty}_{\K_s}}\leq \frac{C}{\lambda_n}  \left\| A\right\|_{W^{\widetilde{n},\infty}_{\K_r}}.
\end{equation}
\end{corollary}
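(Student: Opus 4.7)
The plan is to obtain both resolvent bounds by inserting the decay estimates of Theorem~\ref{thm uniform convergence} and Lemma~\ref{main theorem kopec} into the integral representation
\[
\Lopgen^{-1} = \int_0^{+\infty} \mathrm{e}^{t\Lopgen}\,dt
\]
already stated (on the invariant subspace $\widetilde{L^{\infty}_{\K_s}}$) just before the corollary. Once this is set up, the two estimates reduce to integrating exponentials in~$t$; the only points to justify are the convergence of the improper integrals in the relevant norms and, for the second bound, the interchange of a spatial derivative with the time integral.

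For~\eqref{eq: corollary convergence thm}, I would fix $f\in\widetilde{L^{\infty}_{\K_s}}$, so that $\int f\,d\mu = 0$ and the exponential decay provided by Theorem~\ref{thm uniform convergence} yields
\[
\left\|\mathrm{e}^{t\Lopgen}f\right\|_{L^{\infty}_{\K_s}} \leq C_s \mathrm{e}^{-\lambda_s t}\left\|f\right\|_{L^{\infty}_{\K_s}}.
\]
Since this bound is integrable on $[0,+\infty)$, the Bochner integral above converges in $L^{\infty}_{\K_s}$ and pulling the norm inside gives $\|\Lopgen^{-1}f\|_{L^{\infty}_{\K_s}} \leq (C_s/\lambda_s)\|f\|_{L^{\infty}_{\K_s}}$. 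Taking the supremum over unit-norm $f$ delivers~\eqref{eq: corollary convergence thm}.

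For~\eqref{eq: functional estimates Kopec_resolvent}, I would fix $A\in\mathscr{S}$ and $n\geq 1$, choose $s\geq s_n$ and $G_\nu\leq \rho_s$ so that Lemma~\ref{main theorem kopec} applies, and use the pointwise bound
\[
\left\|\du^k \mathrm{e}^{t\Lopgen}\Pi_\mu A\right\|_{L^{\infty}_{\K_s}} \leq C\left\|A\right\|_{W^{\widetilde{n},\infty}_{\K_r}}\mathrm{e}^{-\lambda_n t}
\]
for every $|k|\leq n$. Uniform integrability in $t$ of this family, together with continuity in $t$ of $\du^k \mathrm{e}^{t\Lopgen}\Pi_\mu A$, permits by dominated convergence to commute $\du^k$ with $\int_0^{+\infty}\cdot\,dt$, yielding $\du^k\Lopgen^{-1}\Pi_\mu A = \int_0^{+\infty} \du^k\mathrm{e}^{t\Lopgen}\Pi_\mu A\,dt$. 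Moving the norm inside and integrating the exponential then produces~\eqref{eq: functional estimates Kopec_resolvent} with constant $C/\lambda_n$.

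There is no real obstacle here: all heavy analytical work is already done in Theorem~\ref{thm uniform convergence} and Lemma~\ref{main theorem kopec}, and the argument amounts to an integration against $\mathrm{e}^{-\lambda t}$. The only mildly delicate point is the derivative/integral commutation in the second part, which is handled by the very uniform-in-$t$ exponential decay supplied by the lemma.
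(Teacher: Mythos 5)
Your proof is correct and follows essentially the same approach the paper intends: the integral representation $\Lopgen^{-1}=\int_0^{+\infty}\mathrm{e}^{t\Lopgen}\,dt$ on $\widetilde{L^{\infty}_{\K_s}}$ is given explicitly right before the corollary, and the paper treats both bounds as immediate consequences of integrating the exponential decay supplied by Theorem~\ref{thm uniform convergence} and Lemma~\ref{main theorem kopec}. Your additional care about commuting $\du^k$ with the time integral is a reasonable elaboration of what the paper leaves implicit and introduces no gap.
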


This already allows us to conclude that the asymptotic variance of the time average $\widehat{A}_t$ defined in~\eqref{def convergence ergodic averages} is well defined for any observable $A\in L^{\infty}_{\K_{r}}$ since
\[
\begin{aligned}
\sigma_A^2 &= \lim_{t \rightarrow \infty} t \, \E\Big[ \left(\hat{A}_t-\E_{\mu}\left(A\right)\right)^2\Big] \\
 & = \lim_{t \rightarrow \infty}2\int_0^t\int_{\El}\left(1-\frac{s}{t}\right)\left(\rm{e}^{s\Lopgen}\Pi_{\mu}A\right)\left(\Pi_{\mu}A \right) d\mu\\
&= 2\int_0^{\infty}\int_{\El}\left(\mr{e}^{s\Lopgen}\Pi_{\mu}A\right)\left(\Pi_{\mu}A\right) d\mu
\end{aligned}
\]
by the dominated convergence theorem. Therefore,
\begin{equation}
\label{eq: variance expression}
\sigma^{2}_{A}=2\int_{\El} \left(\Pi_{\mu} A\right) \left(-\Lope^{-1}\Pi_{\mu}A\right) \, d\mu.
\end{equation}
In fact, a Central Limit Theorem can be shown to hold for $\widehat{A}_t$ using standard results (see \emph{e.g.}~\cite{bhattacharya1982functional}).

\subsection{Perturbative study of the variance for the AR-Langevin dynamics}
\label{sec:perturbative_variance}

Our aim in this section is to better understand, from a quantitative viewpoint, the behavior of the asymptotic variance for the AR-Langevin dynamics defined in Section~\ref{sec:ARPS_presentation}, at least in some limiting regime where the parameter $\er$ is small. For intermediate values, we need to rely on numerical simulations (see Section~\ref{sec:numerics}).

The regime where both $\er$ and $\ef$ go to~0 is somewhat singular since the transition from $U(p)=0$ to $U(p)=U_{\rm std}(p)$ becomes quite abrupt, which prevents a rigorous theoretical analysis. The regimes where either $\er$ or $\ef$ go to infinity are also of dubious interest since the dynamics strongly perturbs the standard Langevin dynamics. Therefore, we restrict ourselves to the situation where $\er \to 0$ with $\ef$ fixed.

In order to highlight the dependence of the AR kinetic energy function on the restraining parameters $0\leq \er<\ef$, we denote it by $U_{\er,\ef}$ in the remainder of this section. Let us however first give a more precise definition of this function, having in mind that $\ef$ is fixed while $\er$ eventually goes to~0. We introduce to this end an interpolation function $f_{0,\ef} \in C^{\infty}\left(\R\right)$ such that
\begin{equation}
0 \leq f_{0,\ef} \leq 1, \qquad f_{0,\ef}(x) = 1 \ \mathrm{for} \ x \leq 0, \qquad f_{0,\ef}(x) = 0 \ \mathrm{for} \ x \geq \ef,
\label{eq: def fZeroK 1}
\end{equation}
and
\[
\forall n \geq 1, \qquad f_{0,\ef}^{(n)}(0) = f_{0,\ef}^{(n)}(\ef) = 0.
\label{eq: def fZeroK 2}
\]
We next define an interpolation function $f_{\er,\ef}$ obtained from the function $f_{0,\ef}$ by an appropriate shift of the lower bound and a rescaling. More precisely, $f_{\er,\ef}(x) = f_{0,\ef}(\theta_{\er}(x))$ with
\begin{equation}
\label{eq: definition theta}
\theta_{\er}(x):=\left\{\ds
\begin{aligned}
x-\er, & \quad \text{for } x \leq \er, \\
\frac{\ef}{\ef-\er}(x-\er), & \quad \text{for }  \er\leq x\leq \ef, \\
x , & \quad \text{for } x \geq \ef.
\end{aligned}
\right.
\end{equation}
A plot of $f_{\er, \ef}$ is provided in Figure~\ref{figure arps redefinition shift functions left}.

\begin{figure}[t]
  \centering
  \includegraphics[width=0.8\textwidth]{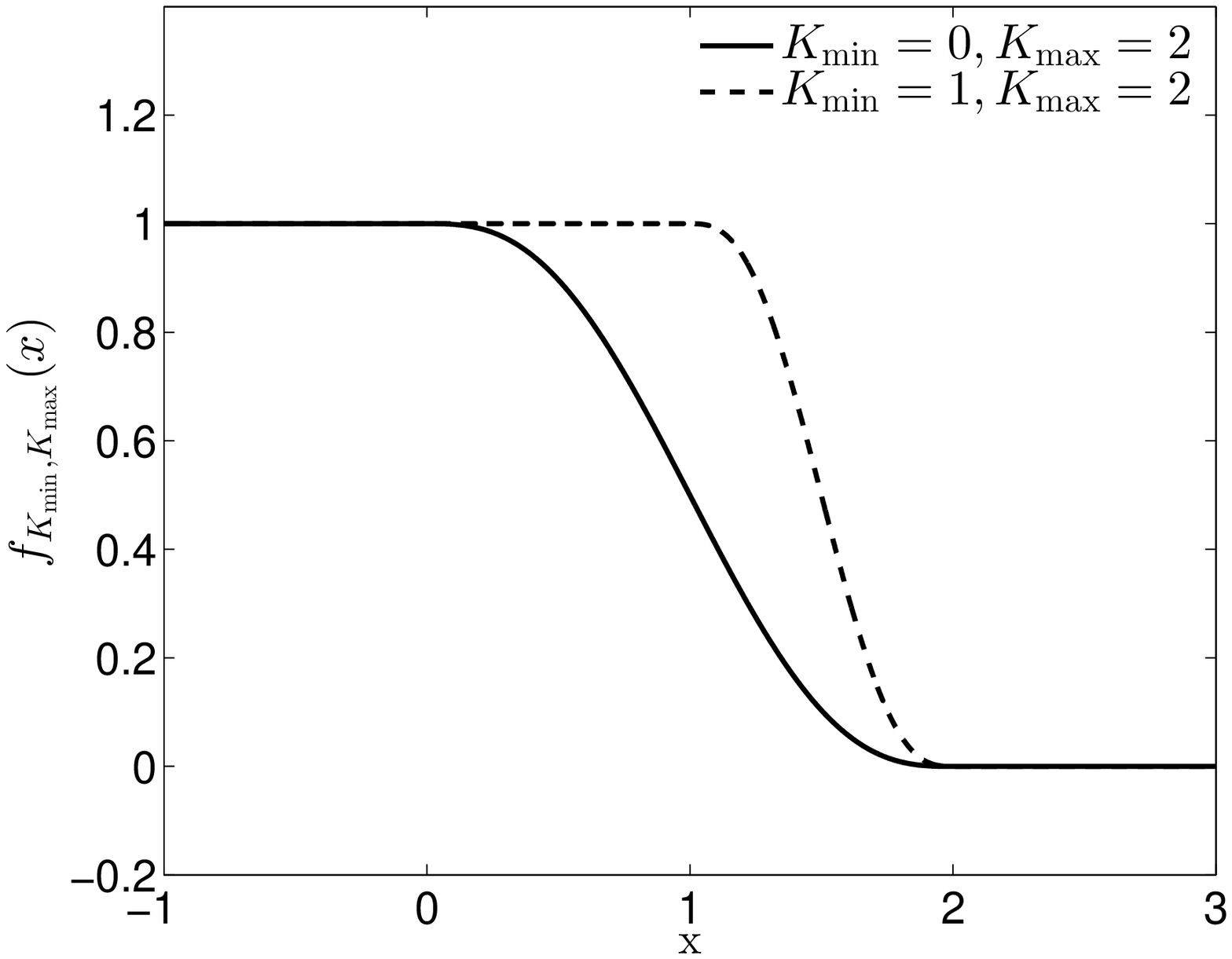}
  \caption{Functions $f_{0,\ef}$ and $f_{\er,\ef}$ for $\ef=2$ and $\er=1$.}
  \label{figure arps redefinition shift functions left}
	\end{figure}

\begin{figure}[t]
  \centering
  \includegraphics[width=0.8\textwidth]{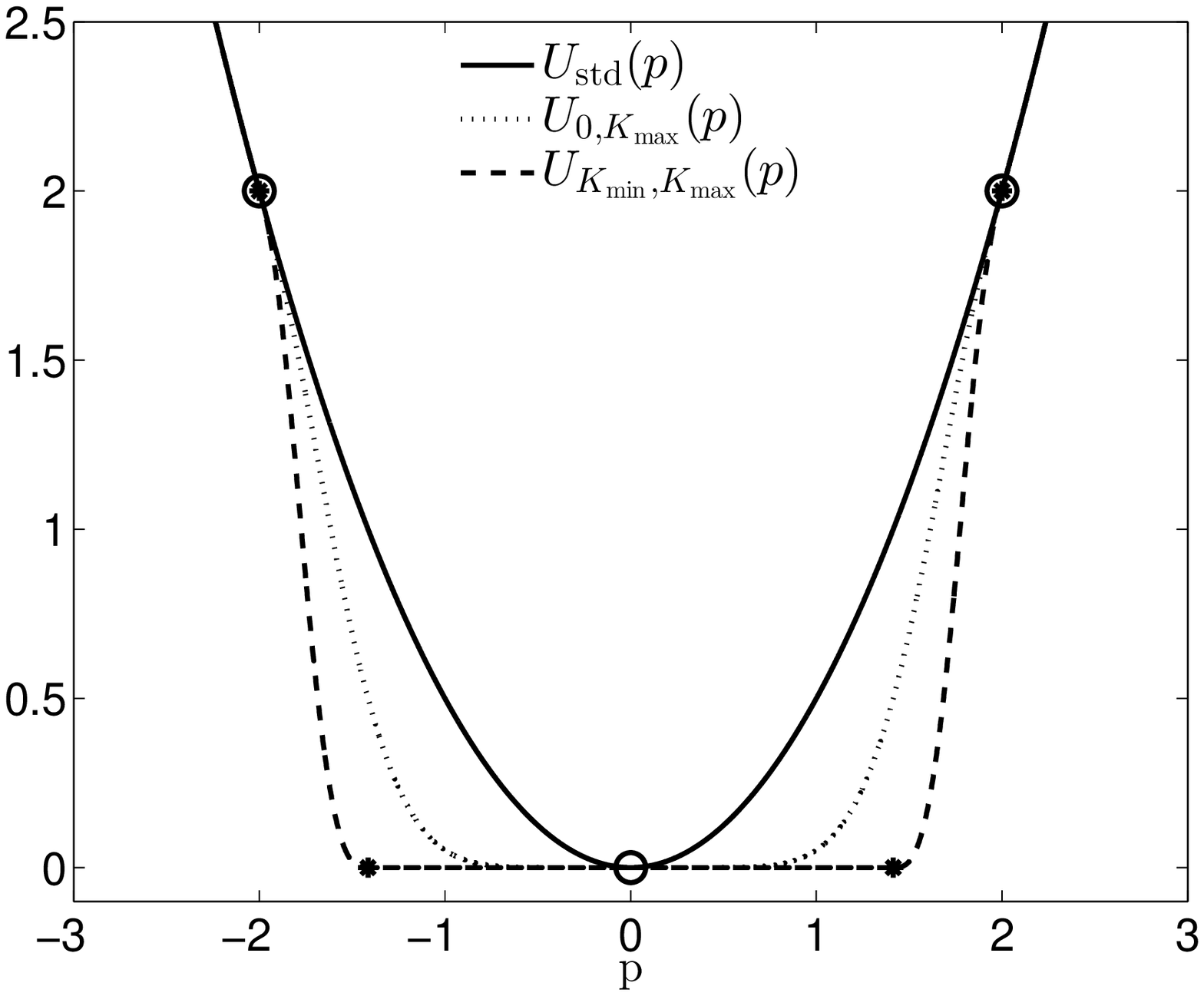}
  \caption{Standard kinetic energy function $\Ustd$, as well as two AR  kinetic energy functions $U_{\er,\ef}$ with $\ef=2$ and $\er=0$ or $1$.}
  \label{figure arps redefinition shift functions right}
	\end{figure}


\begin{definition}[AR kinetic energy function]
\label{definition ARPS Langevin}
For two parameters $0\leq\er<\ef$, the AR kinetic energy function $U_{\er, \ef}$ is defined as
\begin{equation}
U_{\er, \ef}(p):=\sum_{i=1}^Nu_{\er,\ef}(p_i),
\label{definition U}
\end{equation}
where the individual kinetic energy functions are
\begin{equation}
  \label{eq: definition zeta}
  u_{\er,\ef}(p_i):=\left\{
  \begin{aligned}
    \ds
    0,   &\quad \text{for } \frac{p_i^2}{2m_i}\leq \er, \\
    \left[ 1-f_{\er,\ef}\left(\frac{p_i^2}{2m_i}\right)\right]\frac{p_i^2}{2m_i}, &\quad  \text{for } \frac{p_i^2}{2m_i} \in [\er,\ef], \\
    \frac{p_i^2}{2m_i},   &\quad \text{for } \frac{p_i^2}{2m_i}\geq \ef. \\
  \end{aligned}
\right.
\end{equation}
\end{definition}

Of course, $U_{\er,\ef}(p)$ converges to $U_{0,\ef}(p)$ as $\er \to 0$. The limiting kinetic energy function $U_{0,\ef}$ corresponds to what we call the Zero-$\ef$-AR-Langevin dynamics (see Figure~\ref{figure arps redefinition shift functions right} for an illustration). Let us emphasize that the limiting dynamics is not the standard Langevin dynamics, so that the expansion in powers of $\er$ of the variance we provide is with respect to the limiting variance of the dynamics corresponding to~$U_{0,\ef}$. To simplify the notation, we denote by $\sigma^2(\er)$ the variance associated with the kinetic energy~$U_{\er,\ef}$.
\begin{proposition}
\label{proposition variance perturbation}
There exists $\ef^* > 0$ such that, for any $0<\ef \leq \ef^*$, there is a constant $\mathscr{K} > 0$ for which
\begin{equation}
 \forall 0\leq \er \leq \frac{\ef}{2},
 \qquad
 \sigma_A^2(\er)=\sigma_A^2(0) + \mathscr{K} \er + \mathrm{O}(\er^2).
\label{eq: proposition variance perturbation}
\end{equation}
\end{proposition}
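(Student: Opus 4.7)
The plan is a perturbation expansion of the variance formula
\[
\sigma_A^2(\er) = -2 \int_{\mathcal{E}} \bigl(\Pi_{\mu_{\er,\ef}} A\bigr) \, \Lopgen_{\er,\ef}^{-1} \Pi_{\mu_{\er,\ef}} A \, d\mu_{\er,\ef}
\]
around $\er = 0$ at fixed $\ef$, viewing $\Lopgen_{\er,\ef}$ as a smooth-in-$\er$ perturbation of $\LopZero := \Lopgen_{0,\ef}$. Following the remark after Assumption~\ref{assumption potential}, I restrict to observables of the form $A = A(q)$; since the marginal of $\mu_{\er,\ef}$ in~$q$ coincides with $\bar{\mu}$ for every $\er$, the projection $\widetilde{A} := \Pi_{\mu_{\er,\ef}} A = A - \int A\, d\bar{\mu}$ is actually independent of $\er$.

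The first step is to expand every $\er$-dependent object. From~\eqref{eq: definition zeta}--\eqref{eq: definition theta}, together with the flatness $f_{0,\ef}^{(n)}(0)=0$ for $n\geq 1$, one verifies that $\er \mapsto U_{\er, \ef}$ is $C^{1,1}$ on $[0, \ef/2]$ with values in suitable weighted Sobolev spaces, and
\[
U_{\er, \ef} = U_{0, \ef} + \er \, W + \er^2 R_{\er},
\]
where $W$ is an explicit smooth function compactly supported in $\{p_i^2/(2m_i) \leq \ef\}$ (obtained from $\partial_{\er}\theta_{\er}|_{\er=0}$ and $f_{0,\ef}'$), and $R_{\er}$ together with its derivatives is bounded uniformly in $\er \in [0, \ef/2]$. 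Inserting this into~\eqref{eq: generator modified Langevin} yields $\Lopgen_{\er,\ef} = \LopZero + \er \Lopgen_1 + \er^2 \mathcal{R}_{\er}$ with the first-order differential operator $\Lopgen_1 = \n W \cdot \n_q - \gamma \n W \cdot \n_p$, while the density of $\mu_{\er,\ef}$ expands as $\rho_{\er,\ef} = \rho_{0,\ef}\bigl[1 + \er \beta(\langle W \rangle_{\mu_0} - W) + \er^2 \tilde{r}_{\er}\bigr]$.

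Next, I would check that $\LopZero$ satisfies the hypotheses of Lemma~\ref{main theorem kopec}: the choice $U_\nu = \Ustd$ gives $\nu = 1/\max_i m_i$ in~\eqref{condition dn2U strictly pos}, and a direct estimate on $\{p_i^2/(2m_i) \leq \ef\}$ yields $\|\n(U_{0, \ef} - \Ustd)\|_{L^\infty} = O(\sqrt{\ef})$ in~\eqref{condition dnU-dnUnu}. Hence for $\ef \leq \ef^*$ small enough, the smallness condition $G_\nu \leq \rho_{s}$ is met, and Corollary~\ref{corollary convergence thm} furnishes $\phi_0 := \LopZero^{-1}\widetilde{A} \in \mathscr{W}^{n,\infty}_{\K_{s}}$ for any prescribed $n$. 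The same bounds hold uniformly for $\Lopgen_{\er,\ef}$ with $\er \in [0, \ef/2]$, since $\|\n(U_{\er, \ef} - \Ustd)\|_{L^\infty}$ also stays of order $\sqrt{\ef}$. An integration-by-parts argument using the invariance of $\mu_{\er,\ef}$ under $\Lopgen_{\er,\ef}$ order-by-order, together with the separability of $\mu_0$, gives $\int \Lopgen_1 \phi_0\, d\mu_0 = 0$, so that $\phi_1 := -\LopZero^{-1} \Lopgen_1 \phi_0$ is well-defined in $\mathscr{W}^{n-1,\infty}_{\K_{s}}$. The residual $\psi_{\er} := \Lopgen_{\er}^{-1}\widetilde{A} - \phi_0 - \er \phi_1$ then solves $\Lopgen_{\er} \psi_{\er} = -\er^2(\mathcal{R}_{\er}\phi_0 + \Lopgen_1\phi_1) + O(\er^3)$ in $L^\infty_{\K_s}$, so the uniform resolvent bound~\eqref{eq: corollary convergence thm} yields $\|\psi_{\er}\|_{L^\infty_{\K_s}} = O(\er^2)$. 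Substituting the expansions of the inverse and of $d\mu_{\er,\ef}$ into the variance formula and collecting powers of $\er$ produces~\eqref{eq: proposition variance perturbation} with
\[
\mathscr{K} = -2 \int_{\mathcal{E}} \widetilde{A} \, \phi_1 \, d\mu_0 - 2\beta \int_{\mathcal{E}} \widetilde{A} \, \phi_0 \bigl(\langle W \rangle_{\mu_0} - W\bigr) \, d\mu_0.
\]

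The main obstacle is the derivative control in the residual estimate: $\Lopgen_1$ being a first-order operator in $(p,q)$, bounding $\Lopgen_1 \phi_0$ in $L^\infty_{\K_s}$ requires $\phi_0 \in \mathscr{W}^{1,\infty}_{\K_s}$, which only follows from Lemma~\ref{main theorem kopec} under the smallness assumption $G_\nu \leq \rho_s$ — this is exactly what forces the hypothesis $\ef \leq \ef^*$. Propagating those weighted derivative bounds \emph{uniformly} in $\er \in [0, \ef/2]$ (rather than pointwise at a single $\er$) is the technical heart of the argument; it follows from the fact that the constants in Lemma~\ref{main theorem kopec} depend on $U$ only through $G_\nu$ and uniform bounds on its derivatives, both of which vary continuously with $\er$ and remain controlled on $[0, \ef/2]$.
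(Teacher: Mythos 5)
Your proposal matches the paper's overall strategy: view $\Lopgen_{\er,\ef}$ as a perturbation of $\LopZero = \Lopgen_{0,\ef}$, expand the kinetic energy, the invariant measure, and the Poisson solution in powers of $\er$, invoke Lemma~\ref{main theorem kopec} for derivative control of $\LopZero^{-1}$, and take $\ef$ small enough that the smallness hypothesis $G_\nu \leq \rho_s$ from Assumption~\ref{hyp THM} holds uniformly for $\er \in [0,\ef/2]$. The paper organizes this through a truncated Neumann series $\Phi^n_{A,\er} = \sum_{k=0}^n (\Lop_0^{-1}\Pi_0 D_{\er}\cdot\Lpert)^k\Lop_0^{-1}\Pi_0 A$ (Lemmas~\ref{theorem truncated variance} and~\ref{corollary: estimate LopD n}) and takes $n$ arbitrary; you truncate explicitly at first order $\phi_0 + \er\phi_1$ with a residual estimate, which is exactly $\Phi^1_{A,\er}$ modulo the $\er$-expansion of $D_{\er}$. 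Your restriction to $A = A(q)$ is what makes $\int \Lop_1\phi_0\,d\mu_0 = 0$ (since $W = W(p)$ and $\mu_0$ is a product measure), and hence $\phi_1 = -\LopZero^{-1}\Lop_1\phi_0$ well-defined; the paper instead inserts the projector $\Pi_0$ before each application of $\Lop_0^{-1}$ and thereby handles general $A \in \mathscr{S}$. Either route works.

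There is one place where you understate what actually needs proving, and it happens to be exactly what the paper flags as the hard part: the assertion that $\er \mapsto U_{\er,\ef}$ admits a first-order Taylor expansion $U_{\er,\ef} = U_{0,\ef} + \er W + \er^2 R_\er$ with $R_\er$ uniformly bounded together with its $p$-derivatives (and likewise for $D_\er = \nabla U_{0,\ef} - \nabla U_{\er,\ef}$). You write ``one verifies that $\er\mapsto U_{\er,\ef}$ is $C^{1,1}$\ldots with values in suitable weighted Sobolev spaces.'' But $\theta_{\er}$ is only piecewise smooth with kinks at $x=\er$ and $x=\ef$ whose locations move with $\er$, so $\partial_{\er}\theta_{\er}$ is discontinuous in $x$, and $\du^{\alpha}_p\,\partial_{\er}U_{\er,\ef}$ is \emph{not} continuous in $(\er,p)$ pointwise. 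The remark immediately after the Proposition states this explicitly: ``the kinetic energy is not a smooth function of $\er$ because the shift function is only piecewise smooth.'' What rescues the argument is the flatness of $f_{0,\ef}$ at $0$ and $\ef$; exploiting it, the paper proves the quantitative difference-quotient estimate of Lemma~\ref{lemma KminKmax tau} and then the $W^{r,\infty}$ first-order expansion of $D_{\er}$ in Lemma~\ref{lemma DkimKim conv}, with a case analysis on $p_i \in \mathscr{C}_{i,\er}$ versus $p_i \in \mathscr{C}_{i,\ef}\setminus\mathscr{C}_{i,\er}$ and careful bookkeeping of the factor $\ef/(\ef-\er)$ in~\eqref{eq:derivative_f_Kmin}. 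This is the technical core of the proof, and your proposal treats it as a one-line verification. Filling it in requires essentially the content of Lemmas~\ref{lemma Cinfty in p f}, \ref{lemma KminKmax tau}, \ref{lemma DkimKim conv} and Lemma~\ref{lemma AdmuKim new} (the latter for the measure expansion); once you have those, the rest of your outline goes through.

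One further remark, unrelated to the quality of your argument: neither your proof nor the paper's actually establishes the sign $\mathscr{K} > 0$ claimed in the statement; the paper's own building block, Proposition~\ref{variance perturbation main theorem}, only asserts $\mathscr{K} \in \R$.
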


The proof can be read in Section~\ref{proof of the proposition variance perturbation}. The assumption that $\ef$ is sufficiently small ensures that Assumption~\ref{hyp THM} holds (see Section~\ref{proof of lemma arps condition}). The result is formally clear. The difficulty in proving it is that the kinetic energy is not a smooth function of~$\er$ because the shift function is only piecewise smooth.

\begin{remark}
\label{rmk:LR_var}
An inspection of the proof of Proposition~\ref{proposition variance perturbation} shows that the linear response result can be generalized to non-zero values of~$\er$ and in fact to linear responses in the parameter~$\ef$ as well. For the latter case, we consider $f_{0,\ef}(x) = f_{0,1}(x/\ef)$. Denoting now by $\sigma^2(\er,\ef)$ the variance associated with the kinetic energy~$U_{\er,\ef}$, it can be proved that, for $0 < \er < \ef$ not too large, there are $a,b \in \mathbb{R}$ such that, for $\delta,\eta \in \mathbb{R}$ sufficiently small,
\[
\sigma^2(\er+\delta,\ef+\eta) = \sigma^2(\er,\ef) + a\delta + b\eta + \mathrm{O}(\eta^2 + \delta^2).
\]
\end{remark}

\section{Numerical results}
\label{sec:numerics}

The aim of this section is to quantify the evolution of the variance of AR-Langevin dynamics as the parameters of the kinetic energy function are modified. We first consider in Section~\ref{sec:Galerkin} a simple system in spatial dimension~1, for which the variance can be very precisely computed using a Galerkin-type approximation. We next consider more realistic particle systems in Section~\ref{subsection A more realistic system: second order discretization}, relying on molecular dynamics simulations to estimate the variance. In this section, the function $f_{0,\ef}(x)$ is chosen to be of the form $f_{0,1}(x/\ef)$, with $f_{0,1}$ a fifth-order spline function.

\subsection{A simple one-dimensional system}
\label{sec:Galerkin}

We first consider a single particle in spatial dimension $d=1$, in the periodic domain $\mathcal{D} = 2\pi \mathbb{T}$ and at inverse temperature $\beta = 1$. In this case, it is possible to directly approximate the asymptotic variance~\eqref{eq: variance expression} using some Galerkin discretization, as in~\cite{risken1984fokker} or~\cite{latorre2013corrections}.

We denote by $\Lope_{\er,\ef}$ the generator of the modified Langevin dynamics associated with the AR kinetic energy function $U_{\er,\ef}$ defined in~\eqref{definition U}, by $\mu_{\er,\ef}$ the associated canonical measure, and by $\Pi_{\er,\ef}$ the projector onto functions of $L^2(\mu_{\er,\ef})$ with average~0 with respect to~$\mu_{\er,\ef}$.

For a given observable~$A$, we first approximate the solution of the following Poisson equation:
\begin{equation}
\label{eq:Poisson_A}
-\Lope_{\er,\ef} \Phi_A = \Pi_{\er,\ef} A,
\end{equation}
and then compute the variance as given by~\eqref{eq: variance expression}:
\[
\sigma_A^2 = 2\int_\mathcal{E} \Phi_A \, A \, d\mu_{\er,\ef}.
\]
To achieve this, we introduce the basis functions $\psi_{nk}(q,p):= G_k(q)\,H_n(p)$, where $G_k(q) = (2\pi)^{-1/2} \mathrm{e}^{\mathrm{i} kq}$ (for $k\in\mathbb{Z}$) and $H_n(p)$ are the Hermite polynomials:
\[
\qquad H_n(p) = (-1)^n \mathrm{e}^{p^2/2}\frac{d^n}{dp^n}\Big(\mathrm{e}^{-p^2/2}\Big), \forall n\in \mathbb{N}.
\]
The choice of $G_k$ is natural in view of the spatial periodicity of the functions under consideration, while Hermite polynomials are eigenfunctions of the generator associated with the Ornstein-Uhlenbeck process on the momenta for the standard quadratic kinetic energy $p^2/2$. Note however that, when the kinetic energy is modified as $U_{\er,\ef}$, the Hermite polynomials are no longer orthogonal for the $L^2(\mu_{\er,\ef})$ scalar product.

We approximate the Poisson equation~\eqref{eq:Poisson_A} on the basis \hfill \newline$\mathcal{V}_{N_G,N_H} = \{ \psi_{nk} \}_{0 \leq n \leq N_H, \, -N_G \leq k \leq N_G}$ for given integers $N_G,H_H \geq 1$, and we look for approximate solutions of the form $\Pi_{\er,\ef}\Phi_A^{N_G,N_H}$ with
\[
\Phi_A^{N_G,N_H} = \sum_{n = -N_H}^{N_H} \sum_{k=0}^{N_G} \left[b_{N_G,N_H}\right]_{nk} \psi_{nk},
\]
where $b_{N_G,N_H} = (b_{nk})_{0 \leq n \leq N_H, \, -N_G \leq k \leq N_G}$ is a vector of size $(2N_G+1)(N_H+1)$. Restricting~\eqref{eq:Poisson_A} to $\mathcal{V}_{N_G,N_H}$ leads to
\begin{equation}
M_{N_G,N_H} b_{N_G,N_H} = a_{N_G,N_H},
\label{eq: poisson matrix galerkin}
\end{equation}
where $M_{N_G,N_H}$ is a matrix of size $(2N_G+1)(N_H+1) \times (2N_G+1)(N_H+1)$ and $a_{N_G,N_H}$ a vector of size $(2N_G+1)(N_H+1)$, whose entries respectively read
\[
\begin{aligned}
\left[M_{N_G,N_H}\right]_{nk,ml} & = \left\langle \psi_{ml},-\Lop_{\er,\ef} \psi_{nk}\right\rangle_{L^2(\mu_{\er,\ef})}, \\
\left[a_{N_G,N_H}\right]_{ml} & = \left\langle \psi_{ml},\Pi_{\er,\ef} A \right\rangle_{L^2(\mu_{\er,\ef})}. \\
\end{aligned}
\]
The approximated solution $\Phi_A^{N_G,N_H}$ of the Poisson equation \eqref{eq:Poisson_A} can therefore be computed by solving~\eqref{eq: poisson matrix galerkin}. Note however that some care is needed at this stage since $\Lope_{\er,\ef}$ is not invertible on $\mathcal{V}_{N_G,N_H}$, because the basis functions $\{ \psi_{nk} \}_{0 \leq n \leq N_H, \, -N_G \leq k \leq N_G}$ are not of integral~0 with respect to~$\mu_{\er,\ef}$. We correct this by performing a singular value decomposition of~$M_{N_G,N_H}$, removing the component of $a_{N_G,N_H}$ associated with the singular value~0, and computing the inverse of $M_{N_G,N_H}$ on the subspace generated by the eigenvectors associated with non-zero eigenvalues. In practice, we compute the entries of $a_{N_G,N_H}$ and $M_{N_G,N_H}$ by numerical quadrature. Since the Hermite polynomials are no longer orthogonal for the $L^2(\mu_{\er,\ef})$ scalar product, quadratures are required both in position and momentum variables. The variance is finally approximated as
\[
\sigma_A^2(N_G,N_H) = 2 \int_\mathcal{E} A \, \Phi_A^{N_G,N_H} \, d\mu_{\er,\ef} = 2 b_{N_G,N_H}^T a_{N_G,N_H}.
\]

In the simulations presented in this section, the potential is $V(q)=\cos(q)$, the observable under study is $A = V$, and we always set $N_H = 2N_G-1$. Figure~\ref{fig:convergence Galerkin} presents the convergence of the variance with respect to the basis size, for the standard Langevin dynamics and the AR Langevin dynamics with $\ef=2$ and various values of $\er$. The results show that the choice $N_G=12$ is sufficient in all cases to approximate the asymptotic value. We checked in addition in one case, namely for the standard dynamics, that the values we obtain are very close to a reference value obtained with $N_G=30$: the relative variation is of order $10^{-8}$ for $N_G=10$, $10^{-10}$ for $N_G=12$  and $10^{-11}$ for $N_G=14$. We therefore set $N_G=12$ in the remainder of this section.

\begin{figure}[t]
  \centering
  \includegraphics[width=0.80\textwidth]{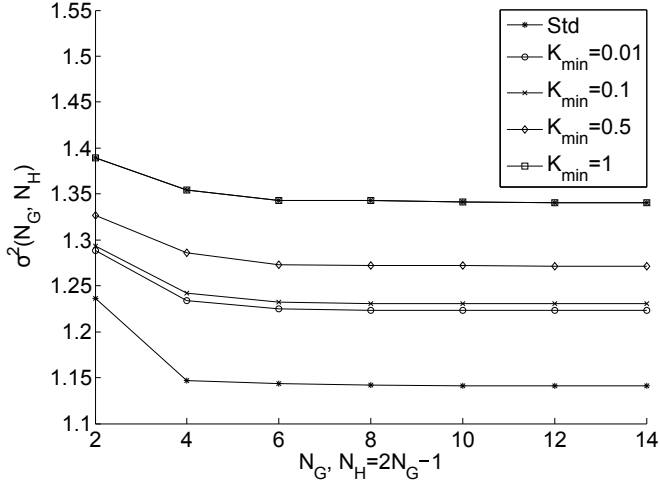}
  \caption{Convergence of the Galerkin approximation  in the basis size $N_G$ and $N_H=2N_G-1$: approximation of the variance of observable $A=V$ for the standard dynamics and the AR dynamics with fixed parameter $\ef=2$ and various values of $\er$.}
  \label{fig:convergence Galerkin}
	\end{figure}
		
The variation of the computed variance for $A=V$ is plotted in Figure~\ref{fig:varianceGalerkinAllKmaxKmin} for various parameters $0\leq \er<\ef$ of the AR-Langevin dynamics. Note that, as expected, the variance increases with increasing values of $\er$ for fixed $\ef$, but also with increasing values of $\ef$ for fixed $\er$. We next illustrate the linear response results of Proposition~\ref{proposition variance perturbation} and Remark~\ref{rmk:LR_var} in Figures~\ref{fig:Variance_Kmax5_overKmin left} and ~\ref{fig:Variance_Kmax5_overKmin right}: in both situations, the variance increases linearly with the parameter under consideration is varied in a sufficiently small neighborhood of its initial value. After that initial regime, nonlinear variations appear. Note also that the relative increase of the variance is more pronounced as a function of $\ef$ than $\er$.

\begin{figure}[t]
  \centering
  \includegraphics[width=0.80\textwidth]{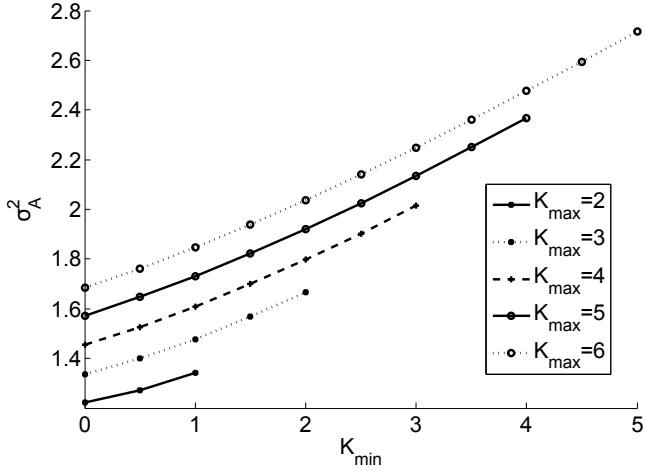}
  \caption{Asymptotic variance of time averages for $A=V$, approximated by the Galerkin method, as a function of $\er$ and for several values of~$\ef$.}
 \label{fig:varianceGalerkinAllKmaxKmin}
\end{figure}

\begin{figure}[t]
  \centering
  \includegraphics[width=0.8\textwidth]{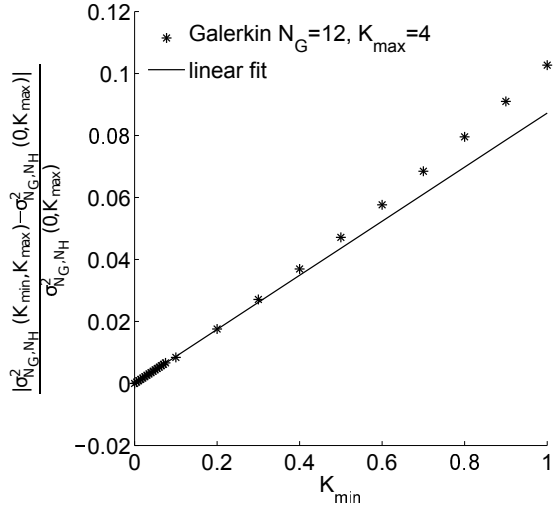}
  \caption{Relative difference between the variance $\sigma^2_{N_G,N_H}(\er,\ef)$ and its initial value computed for reference parameters. (Left) Fixed upper bound $\ef=4$, and reference value $\er = 0$.}
  \label{fig:Variance_Kmax5_overKmin left}
	\end{figure}

\begin{figure}[t]
  \centering
  \includegraphics[width=0.8\textwidth]{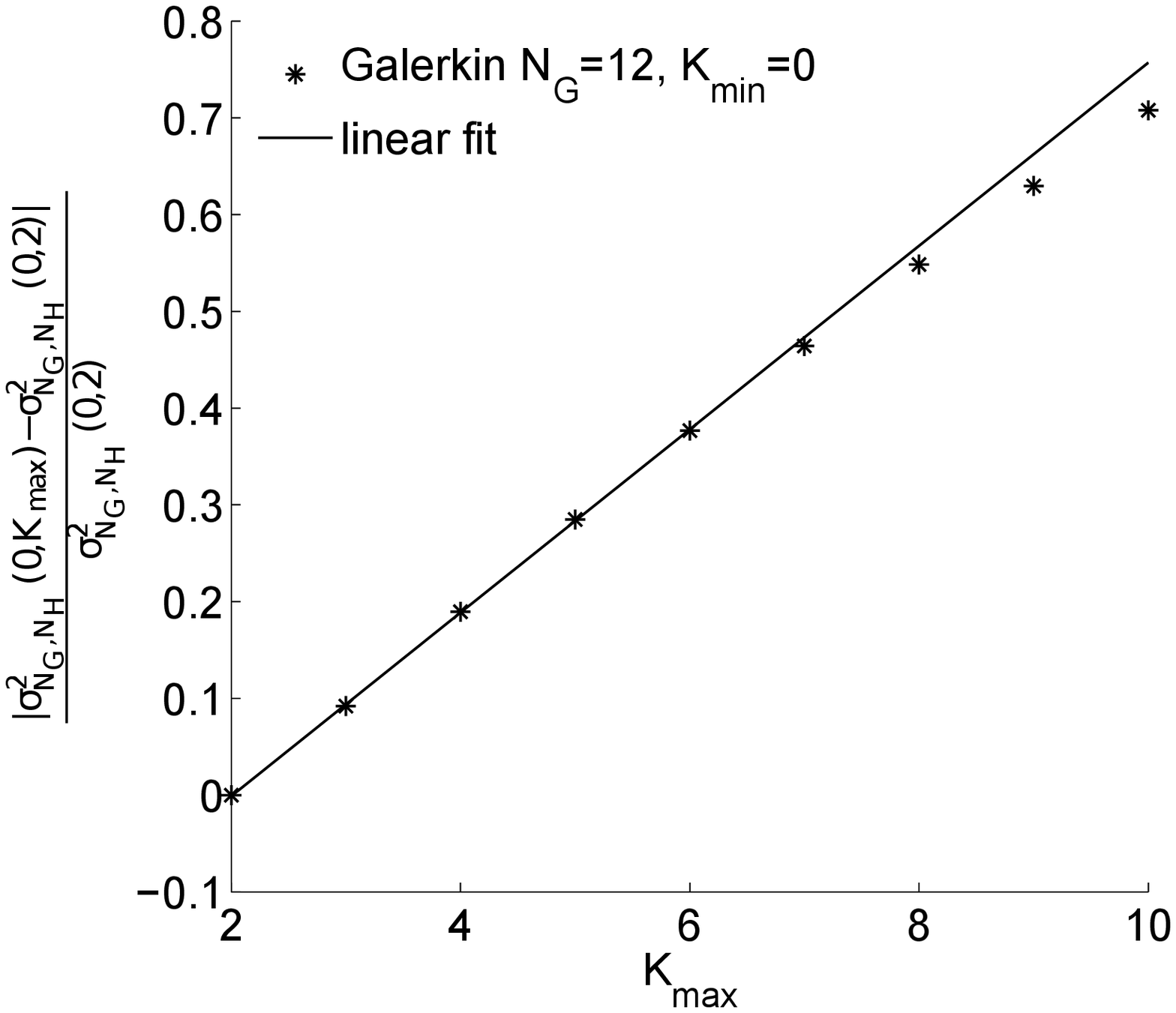}
  \caption{Same as Figure \ref{fig:Variance_Kmax5_overKmin left}. Fixed lower bound $\er=0$, and reference value $\ef = 2$.}
  \label{fig:Variance_Kmax5_overKmin right}
	\end{figure}


\begin{remark}
\label{remark smoothness stability}
In practice, the idea usually is to set the lower bound $\er$ sufficiently large when performing Monte Carlo simulations, in order to decrease as much as possible the computational cost. The gap $\ef-\er$ should however not be too small in order to have a sufficiently smooth transition from a vanishing kinetic energy to a quadratic one. This requires therefore $\ef$ to be quite large if $\er$ is large. The results presented in Figure~\ref{fig:Variance_Kmax5_overKmin right} suggest that this may not be the optimal choice, unless the algorithmic speed-up is quite large.
\end{remark}

\subsection{A more realistic system}
\label{subsection A more realistic system: second order discretization}

In order to study the variation of the variance as a function of $\er$ and $\ef$ in systems of higher dimensions, we resort to Monte Carlo simulations. This requires discretizing the AR-Langevin dynamics~\eqref{modified Langevin}, and we resort to a scheme of weak order~2, obtained by a splitting strategy where the generator of the modified Langevin dynamics~\eqref{eq: generator modified Langevin} is decomposed into three parts:
\[
A:=\n U(p) \cdot \n_q, \qquad B:=-\n V(q)\cdot \n_p, \qquad C:= -\n U(p)\cdot \n_p + \frac{1}{\beta }\de_p \,.
\]
The transition kernel obtained by a Strang splitting reads $P_{\de t}=\mr{e}^{\gamma \de t C/2}\mr{e}^{\de t B/2 } \mr{e}^{\de t A }\mr{e}^{\de t B/2 } \mr{e}^{\gamma \de t C/2}$. Contrarily to the standard kinetic energy functions, the elementary evolution associated with $C$ cannot be integrated analytically. To preserve the order of the scheme, we approximate $\mr{e}^{\gamma \de t/2 C }$ by a midpoint rule, encoded by a transition kernel $P_{\de t}^{\gamma,C}$ satisfying $P_{\de t}^{\gamma,C}\varphi=\mr{e}^{\gamma \de t C}\varphi +\mr{O}(\de t^3) $ for smooth test functions $\varphi$. This gives the following discretization scheme:
\[
\displaystyle
\left\{
\begin{aligned}
\ds p^{n+1/4}& = p^n - \gamma\n U \left(\frac{p^{n+1/4}+p^n}{2}\right)\frac{\de t}{2} + \sqrt{\frac{\gamma\de t}{\beta}}G^n,\\
p^{n+1/2} &=  p^{n+1/4}-\n V(q^n)\frac{\de t}{2}, \\
q^{n+1}&=q^n+\n U(p^{n+1/2})\de t, \\
p^{n+3/4} &=  p^{n+1/2}-\n V(q^{n+1})\frac{\de t}{2}, \\
\ds p^{n+1}& = p^{n+3/4} - \gamma\n U \left(\frac{p^{n+1}+p^{n+3/4}}{2}\right)\frac{\de t}{2} + \sqrt{\frac{\gamma\de t}{\beta}}G^{n+1/2},
\end{aligned}
\right.
\]
where $G^n,G^{n+1/2}$ are i.i.d. standard $d$-dimensional Gaussian random variables. The first and the last line are obtained by implicit schemes, solved in practice by a fixed point strategy (the termination criterion being that the distance between successive iterates is smaller than $10^{-10}$, and the initial iterate being obtained by a Euler-Maruyama step). By following the same approach as in~\cite{Matthews}, it can indeed be proved that this scheme is of weak order~2; see~\cite{NumericsInPreparation} for further precisions.

The ergodicity of some second-order schemes was proved for the standard Langevin dynamics in \cite{Matthews}. Since the AR-Langevin dynamics can be seen as a perturbation of the standard Langevin dynamics, it can be proved by combining the proofs from~\cite{Matthews} and the proof of Theorem~\ref{thm uniform convergence} that, when $0\leq \er<\ef$ are sufficiently small, the corresponding discretization of the AR-Langevin dynamics remains ergodic (see \cite{NumericsInPreparation}). The corresponding invariant measure is denoted by $\mu_{\er, \de t}$. It also follows by the results of~\cite{Matthews} that the error on averages of smooth observables $\varphi \in \mathscr{S}$ with respect to $\mu_{\er, \de t}$ is of order~2, i.e. there exists $a \in \R$ such that
\[
\int_{\El}\varphi \, d\mu_{\er,\de t}=\int_{\El}\varphi \, d\mu_{\er} + a\de t^2 + \mr{O}\left(\de t^3\right).
\]
As already mentioned in Remark~\ref{remark smoothness stability}, the reduction of the gap between the parameters $\er$ and $\ef$ reduces the smoothness of the transition between the restrained dynamics and the full dynamics. This raises issues in the stability of the scheme, which can be partly cured by resorting to a Metropolis-Hastings correction (\cite{MRRTT53,Hastings70} and ~\cite{NumericsInPreparation}).

The system we consider is composed of $N=49$ particles in dimension~2, so that $d=98$ and $\mathcal{D} = (L\mathbb{T})^{2N}$. The masses are set to~1 for all particles. Among these particles, two particles (numbered 1 and 2 in the following) are designated to form a dimer while the others are solvent particles. All particles, except the two particles forming the dimer, interact through the purely repulsive WCA pair potential, which is a truncated Lennard-Jones potential~\cite{SBB88}:
\[
V_{\rm WCA}(r) = \left \{ \begin{array}{cl}
\displaystyle 4 \varepsilon \left [ \left ( \frac{\sigma}{r} \right )^{12}
  - \left ( \frac{\sigma}{r}\right )^6 \right ] + \varepsilon & \quad {\rm if \ } r \leq r_0, \\
0 & \quad {\rm if \ } r > r_0,
\end{array} \right.
\]
where $r$ denotes the distance between two particles, $\varepsilon$ and $\sigma$ are two positive parameters and $r_0=2^{1/6}\sigma$. The interaction potential between the two particles of the dimer is a double-well potential
\begin{equation}
  \label{intro:eq:VS}
  V_{\rm D}(r) = h \left [ 1 - \frac{(r-r_0-w)^2}{w^2} \right ]^2,
\end{equation}
where $h$ and $w$ are two positive parameters. The potential $V_{\rm D}$ has two energy minima. The first one, at $r=r_0$, corresponds to the compact state. The second one, at $r=r_0+2w$, corresponds to the stretched state. The total energy of the system is therefore, for $q \in (L\mathbb{T})^{dN}$ with $d=2$,
\[
V(q) = V_{\rm D}(|q_1-q_2|) + V_{\rm SS}(q_3,\dots,q_N) + V_{\rm DS}(q),
\]
where the solvent-solvent and dimer-solvent potential energies respectively read
\[
V_{\rm SS}(q_3,\dots,q_N) = \sum_{3 \leq i<j \leq N} V_{\rm WCA}(|q_i-q_j|),
\qquad
V_{\rm DS}(q) = \sum_{i=1,2} \sum_{3 \leq j \leq N} V_{\rm WCA}(|q_i-q_j|).
\]
We choose $\beta = 1$, $\e_{LJ}=1$, $\sigma_{LJ}=1$, $h=1$, $w=1$, and set the particle density $\rho=N/L^2$ to~0.56 in the numerical results presented in this section, sufficiently high to ensure that the solvent markedly modifies the distribution of configurations of the dimer compared to the gas phase.

The source of metastability in the system is the double-well potential on the dimer. In such a system, it makes sense to restrain only solvent particles (since they account for most of the computational cost), and keep the standard kinetic energy for the particles forming the dimer (since the observable depends on their positions). As noted in Remark~\ref{remark individual parameters}, the method allows us to choose different individual kinetic energies for different particles. Since the solvent interacts with the dimer, we study how the variance of time averages of observables related to the configuration of the dimer, such as the dimer potential energy~$A=V_{\rm D}$, depend on the restraining parameters chosen for the solvent particles. We also estimate the variance of time averages based on observables depending only on the solvent degrees of freedom, such as the solvent-solvent potential energy $A=V_{\rm SS}$.

The asymptotic variance of time averages for a given observable~$A$ is estimated by approximating the integrated auto-correlation function
\[
\sigma^2_A=2\int_0^{\infty}\E_{\mu_{\er,\ef}}\left[\left(\Pi_{\mu}A\right)(q_0,p_0)\left(\Pi_{\mu}A\right)(q_t,p_t)\right]dt,
\]
where the expectation is with respect to initial conditions $(q_0,p_0)\sim \mu_{\er,\ef}$ and all realizations of the AR Langevin dynamics. This is done by first truncating the upper bound in the integral by a sufficiently large time $T_{\mr{corr}}$, and using a trapezoidal rule:
\[
\sigma^2_A\approx \sigma^2_{A,M,\de t} := \de t\left(\frac{\widetilde{C}_0^M}{2}+\sum_{j=1}^{I_{\rm{corr}}} \widetilde{C}_j^M\right)
\]
where $I_{\rm{corr}}=\left\lfloor \frac{T_{\mr{corr}}}{\de t}\right\rfloor$, and the empirical averages over $M$ realizations of trajectories of $I_{\rm{corr}}$ steps are defined as
\[
\qquad\widetilde{C}^M_j := C^M_j - \widehat{A}_j^M \widehat{A}_0^M , \quad j\in \left\{1, \ldots, I_{\mr{corr}} \right\}\,,
\]
with
\[
C_j^M :=
\frac{1}{M}\sum_{m=1}^M A(q_j^m,p_j^m)A(q_0^m,p_0^m), \quad \widehat{A}_j^M := \frac{1}{M}\sum_{m=1}^M A(q_j^m,p_j^m)\,.
\]
The initial condition $(q_0^{m+1},p_0^{m+1})$ for the $m+1$th trajectory is obtained from the last configuration of the $m$th configuration, namely $(q_{I_{\rm{corr}}}^{m+1},p_{I_{\rm{corr}}}^{m+1})$. Figure~\ref{fig:acf} presents the auto-correlation function obtained for $A=V_{\rm D}$. The results show that the choice $T_{\rm corr} = 3$ is reasonable.


\begin{figure}[t]
  \centering
  \includegraphics[width=0.8\textwidth]{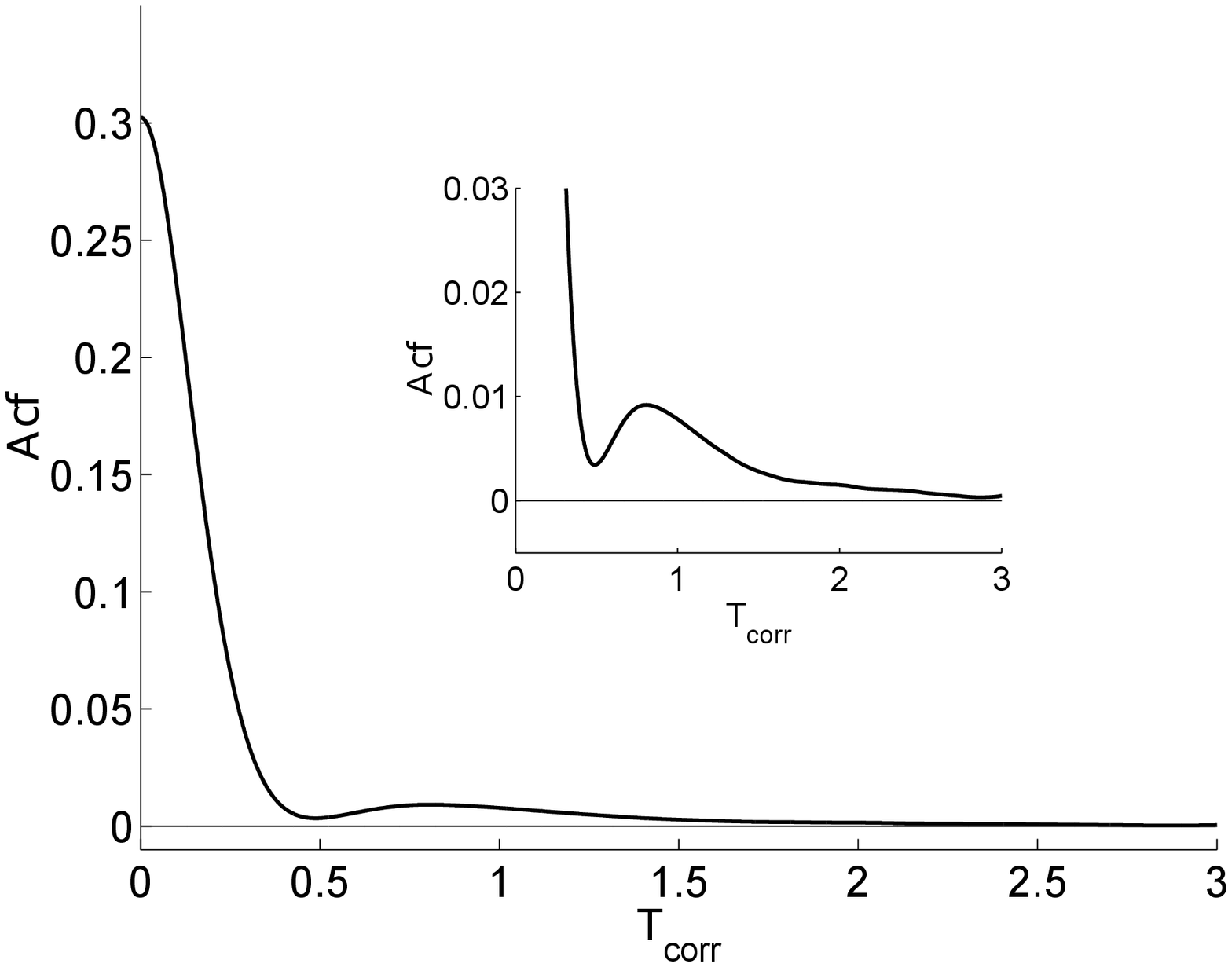}
  \caption{Auto-correlation function $\E_{\mu}\left[\left(\Pi_{\mu}A\right)(p_0,q_0)\left(\Pi_{\mu}A\right)(p_t,q_t)\right]$ for $A=V_{D}$ as a function of time.}
  \label{fig:acf}
	\end{figure}

The results of~\cite{Matthews,NumericsInPreparation} show that the errors on the approximation of the variance should be of order~$\de t^2$ when $T_{\rm corr} \to +\infty$. This is illustrated in Figures~\ref{fig:variance zero coeff left},~\ref{fig:variance zero coeff right},~\ref{fig:variance max coeff left} and~\ref{fig:variance max coeff right}, which present the convergence of $\sigma^2_{A,M,\de t}$ as a function of $\de t$ for $M = 3 \times 10^6$.
 It is possible to extrapolate the value of the variance at $\de t=0$ by fitting $\sigma^2_{A,M,\de t}$ as $a_0+a_1\de t^2$. Note that the errors on the variance are bigger in the case $\er=2.7$, which is expected due to the smaller gap between the parameters $\ef,\er$. In the sequel, all the reported approximations of the variance are obtained by computing $\sigma^2_{A,M,\de t}$ for~6 values of the time step $\de t$, and extrapolating to the limit $\de t \to 0$ as in Figures~\ref{fig:variance zero coeff left},~\ref{fig:variance zero coeff right},~\ref{fig:variance max coeff left} and~\ref{fig:variance max coeff right}. More precisely, the time steps are chosen as $\de t_{0,k} =k \times 10^{-3}$ for $k=1,\dots,6$ when $\er = 0$, and $\de t_{\er^*,k}=k \times 10^{-4}$ for $\er^* = 2.7$. For intermediate values of $\er$, the time steps $\de t_{\er,k}$ are obtained by a linear interpolation between $\de t_{0,k}$ and $\de t_{\er^*,k}$.

\begin{figure}[t]
  \centering
  \includegraphics[width=0.8\textwidth]{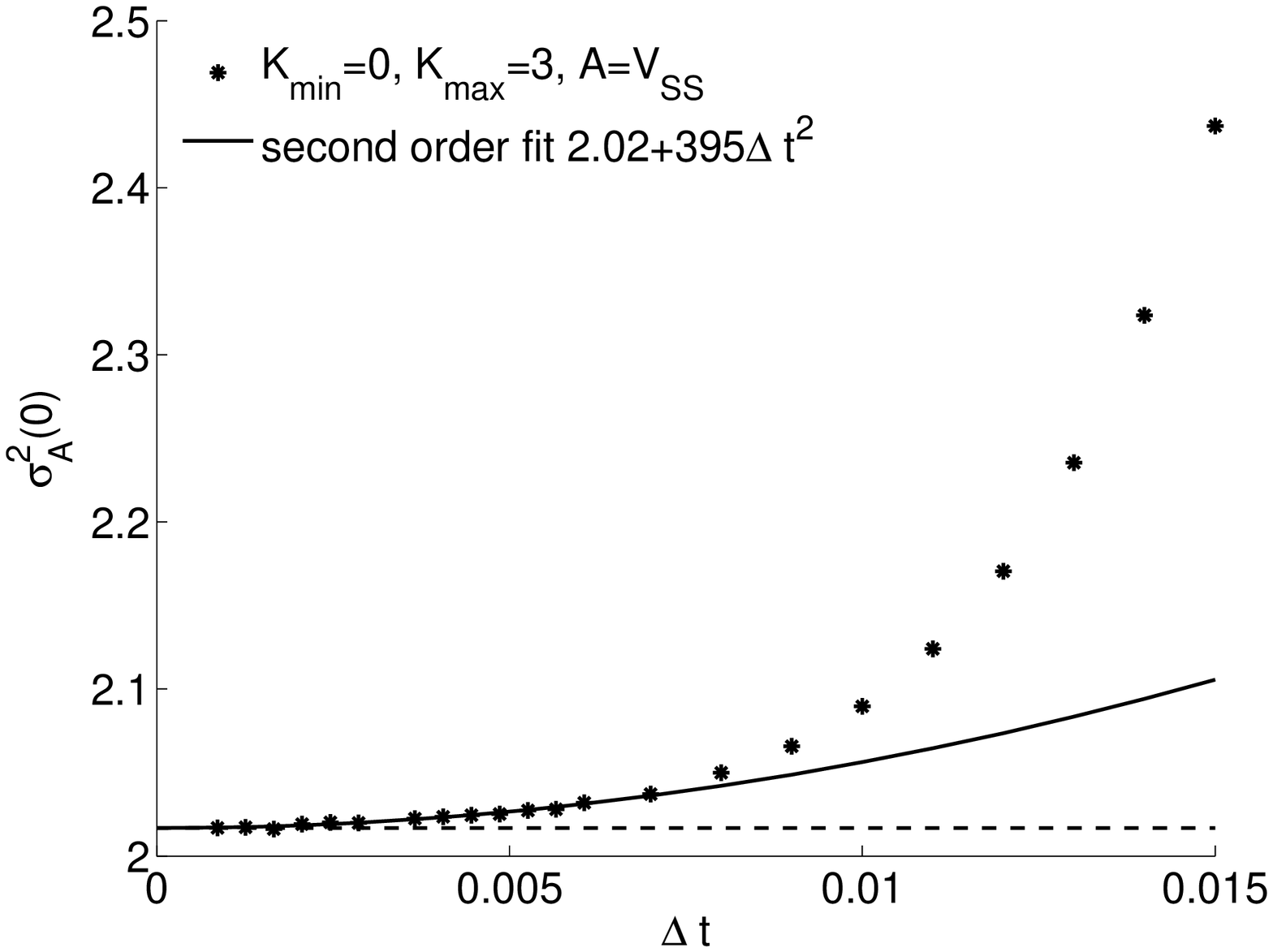}
  \caption{Estimated variance $\sigma^2_{A,M,\de t}$ for $A=V_{\mr{SS}}$, as a function of the step size $\de t$, for $\er=0$ and $ \ef=3$.}
  \label{fig:variance zero coeff left}
	\end{figure}

\begin{figure}[t]
  \centering
  \includegraphics[width=0.8\textwidth]{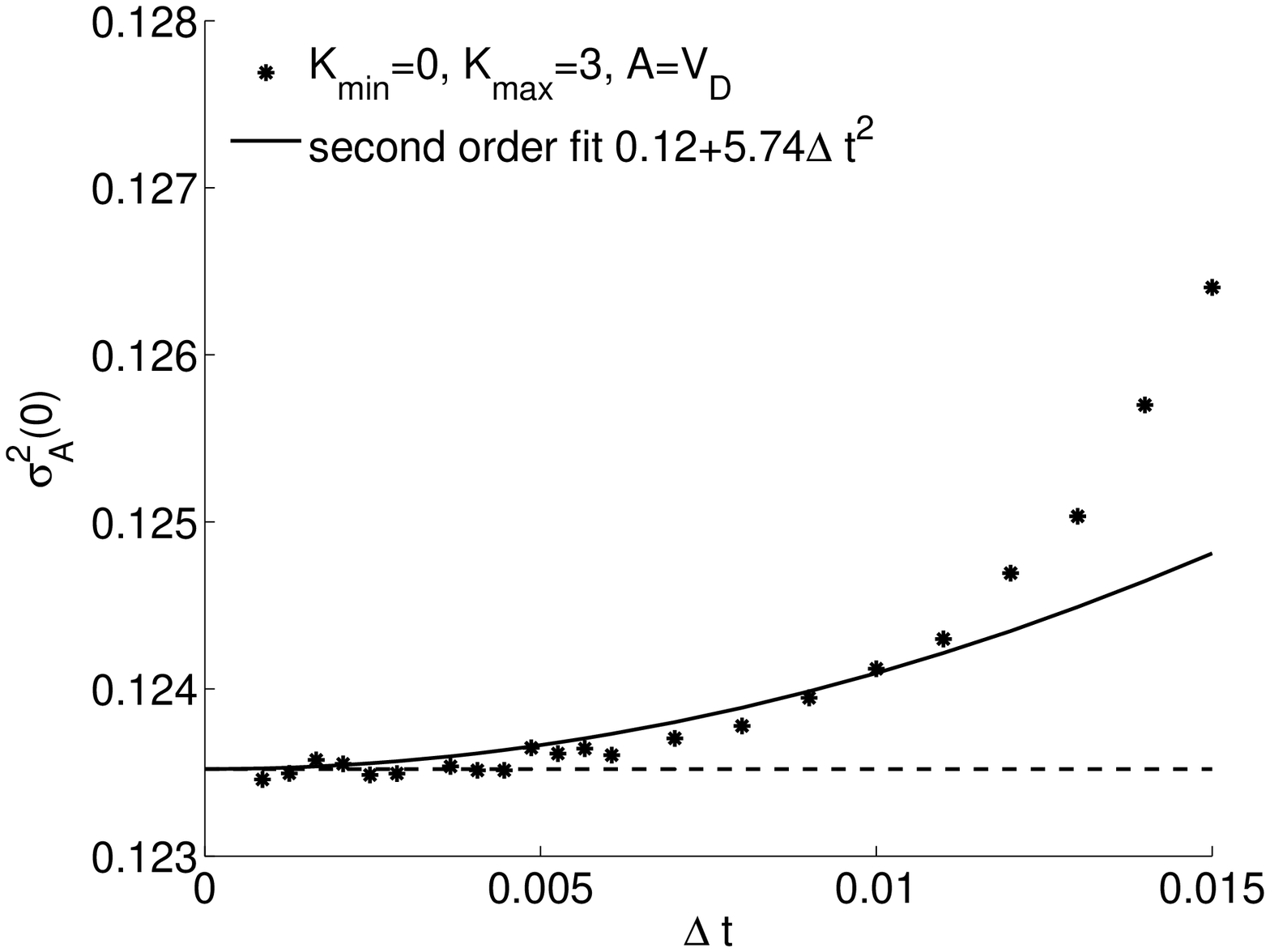}
  \caption{Estimated variance $\sigma^2_{A,M,\de t}$ for $A = V_{\mr{D}}$, as a function of the step size $\de t$, for $\er=0$ and $ \ef=3$.}
  \label{fig:variance zero coeff right}
	\end{figure}


\begin{figure}[t]
  \centering
  \includegraphics[width=0.8\textwidth]{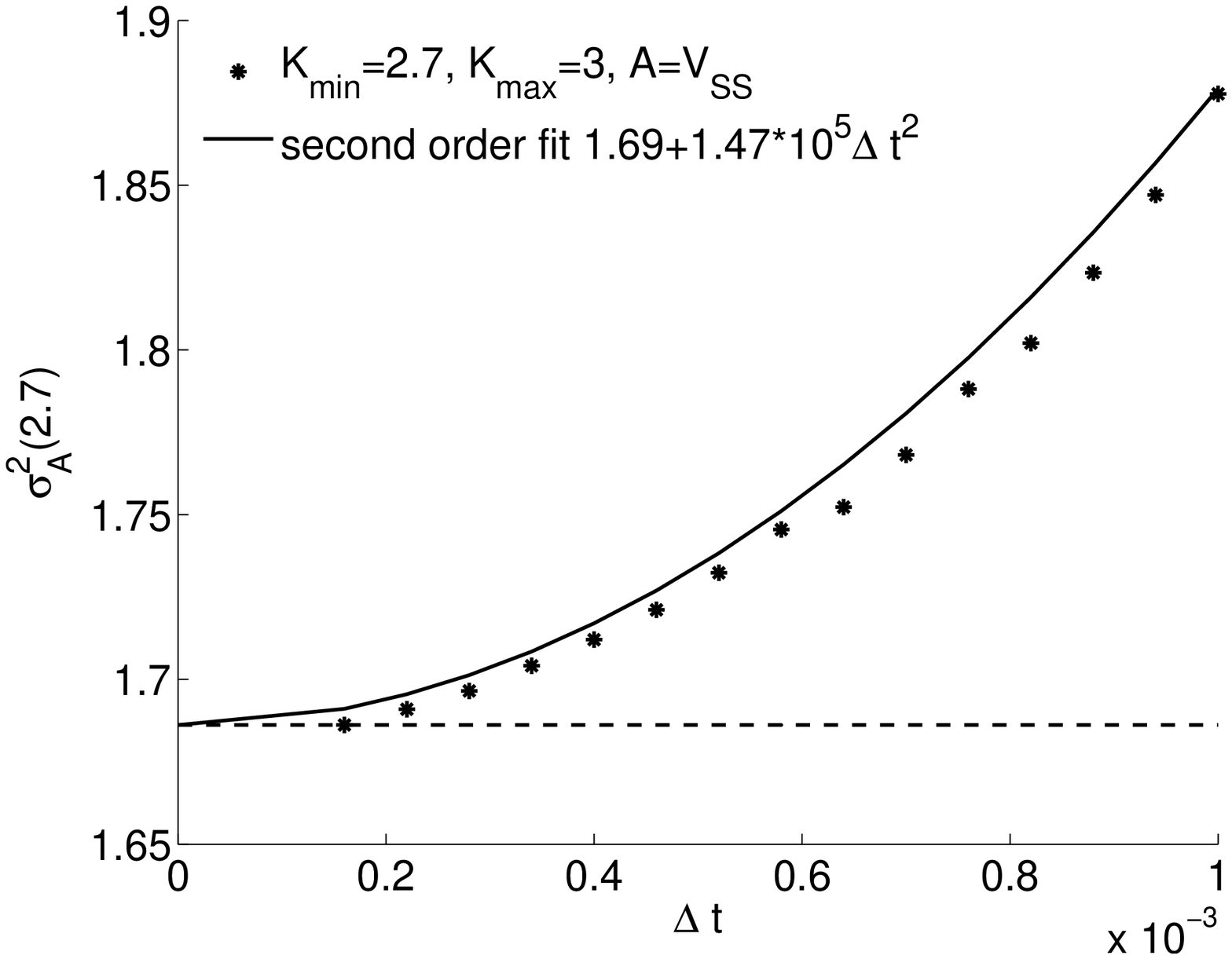}
  \caption{Same as Figure~\ref{fig:variance zero coeff left}, except that $\er=2.7$.}
  \label{fig:variance max coeff left}
	\end{figure}

\begin{figure}[t]
  \centering
  \includegraphics[width=0.8\textwidth]{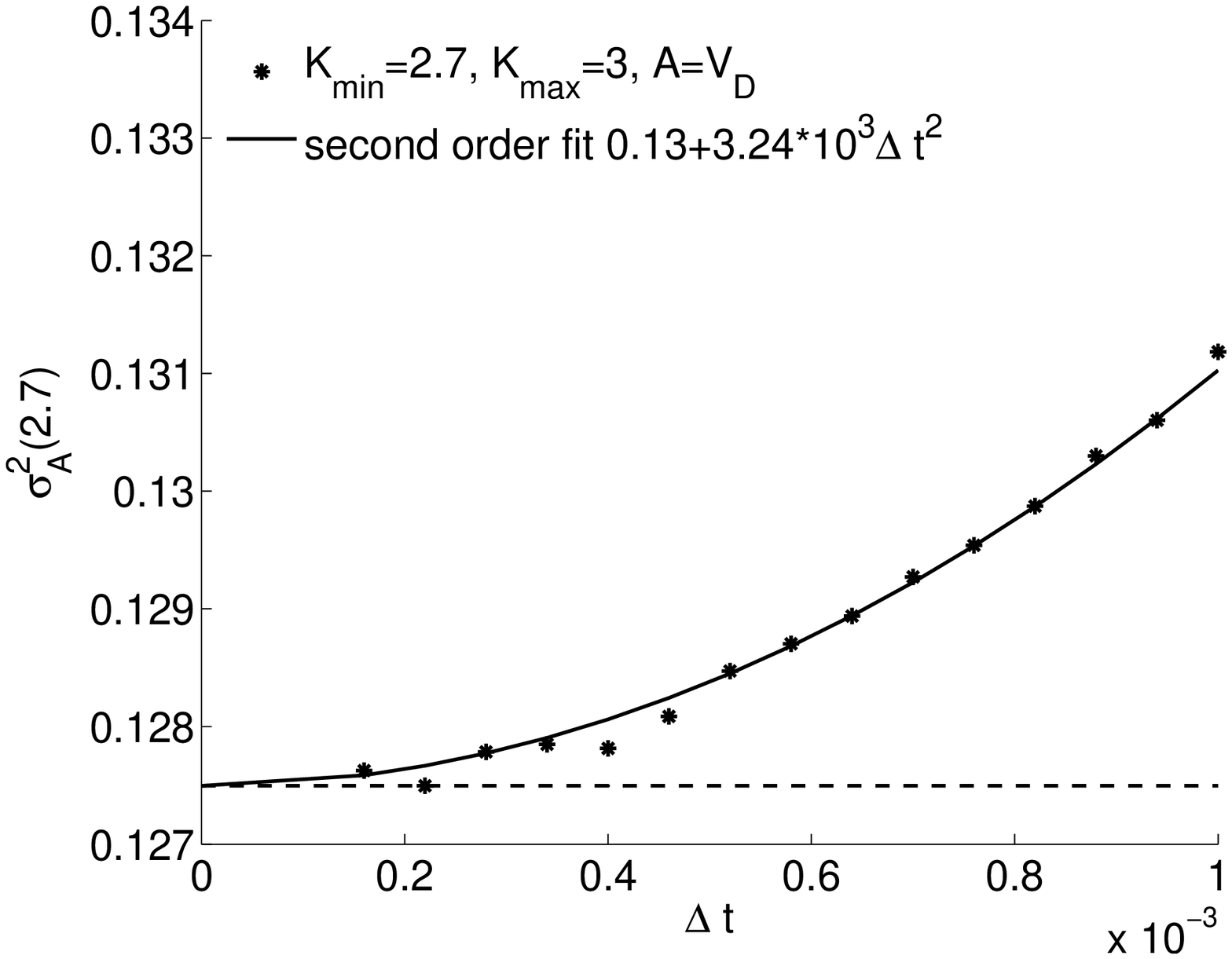}
  \caption{Same as Figure~\ref{fig:variance zero coeff right}, except that $\er=2.7$.}
  \label{fig:variance max coeff right}
	\end{figure}


The variations as a function of~$\er$ of the approximations of the variances $\sigma^2_{A}(\er)$ for the solvent-solvent potential energy $V_{\rm SS}$ and the dimer potential energy $V_{\rm D}$ are reported in Figures~\ref{fig:variance over kmin left} and \ref{fig:variance over kmin right}. Surprisingly, even though the solvent particles are restrained, the variance of the solvent-solvent potential decreases linearly for moderately small values of $\er$; whereas, as expected, the variance of the dimer potential, which is only implicitly influenced by the restraining parameters, increases linearly for these values of $\er$. In order to more easily compare the impacts of the restraining procedure, we plot in Figure~\ref{fig:comparisonRelDifferenceVariance_SS_DD left} the relative differences of the variance $\sigma^2(\er)$ and the variance of Zero-$\ef$-AR dynamics $\sigma_A^2(0)$ as a function of $\er$. For the two observables under consideration, the impact of an increase of the parameter $\er$ on the variance associated with the dimer potential is much weaker than on the variance related to the solvent potential. We also provide in Figure~\ref{fig:comparisonRelDifferenceVariance_SS_DD right} the percentage of restrained particles, which directly depends on the restraining parameter $\er$ and dictates the algorithmic speed-up. This supports the idea that the use of the AR-Langevin method for heterogeneous systems can be beneficial when the AR parameters are set to non-zero values for the part of the system which is not directly of interest (e.g. the solvent), while the standard kinetic energy should be kept for the degrees of freedom that are directly involved in the observable (e.g. the dimer).

\begin{figure}[t]
  \centering
  \includegraphics[width=0.8\textwidth]{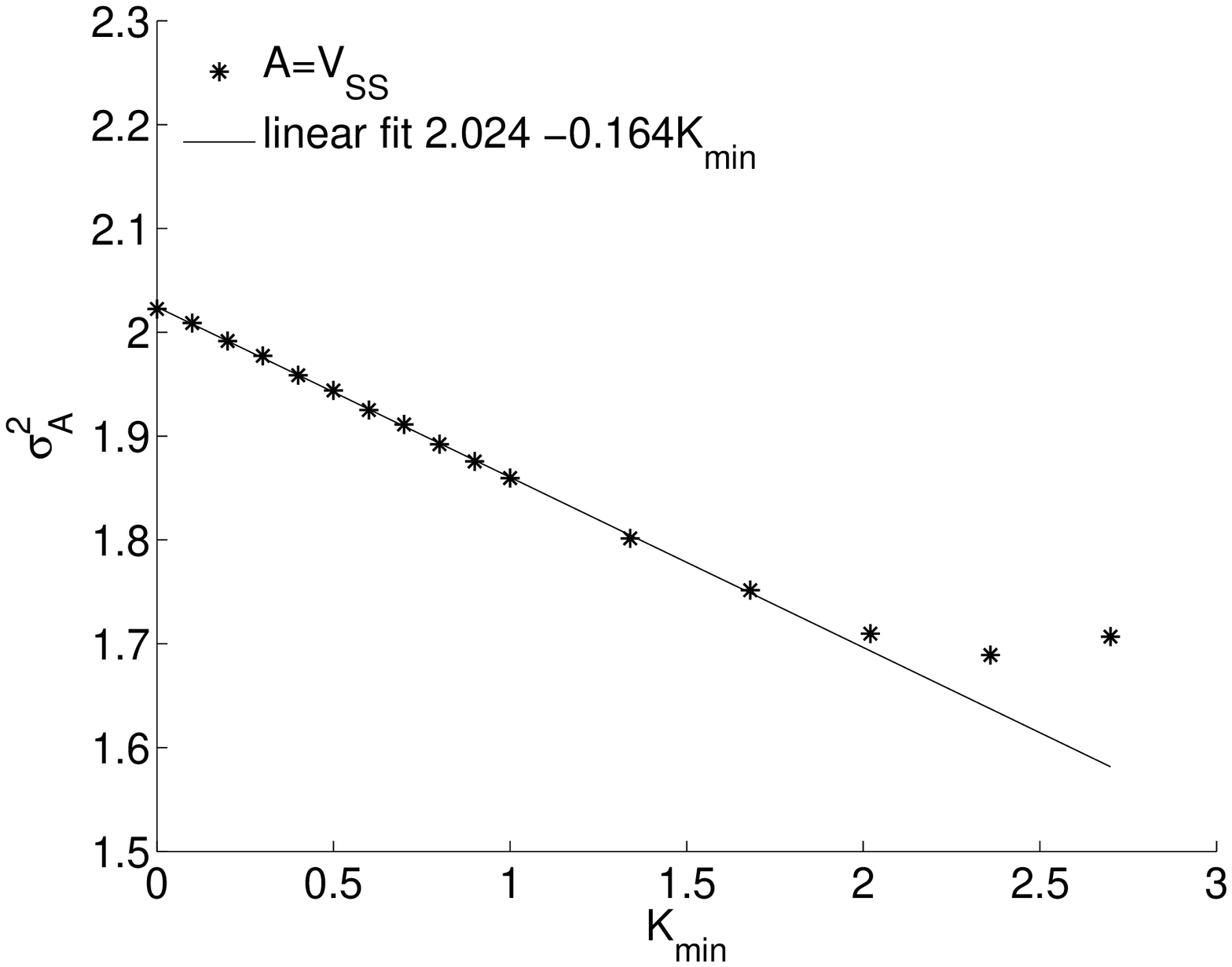}
  \caption{Estimated variance $\sigma^2_{A}$ for $A=V_{\mr{SS}}$ as a function of $\er\in \left[0,2.7\right]$ for $\ef=3$.}
  \label{fig:variance over kmin left}
	\end{figure}

\begin{figure}[t]
  \centering
  \includegraphics[width=0.8\textwidth]{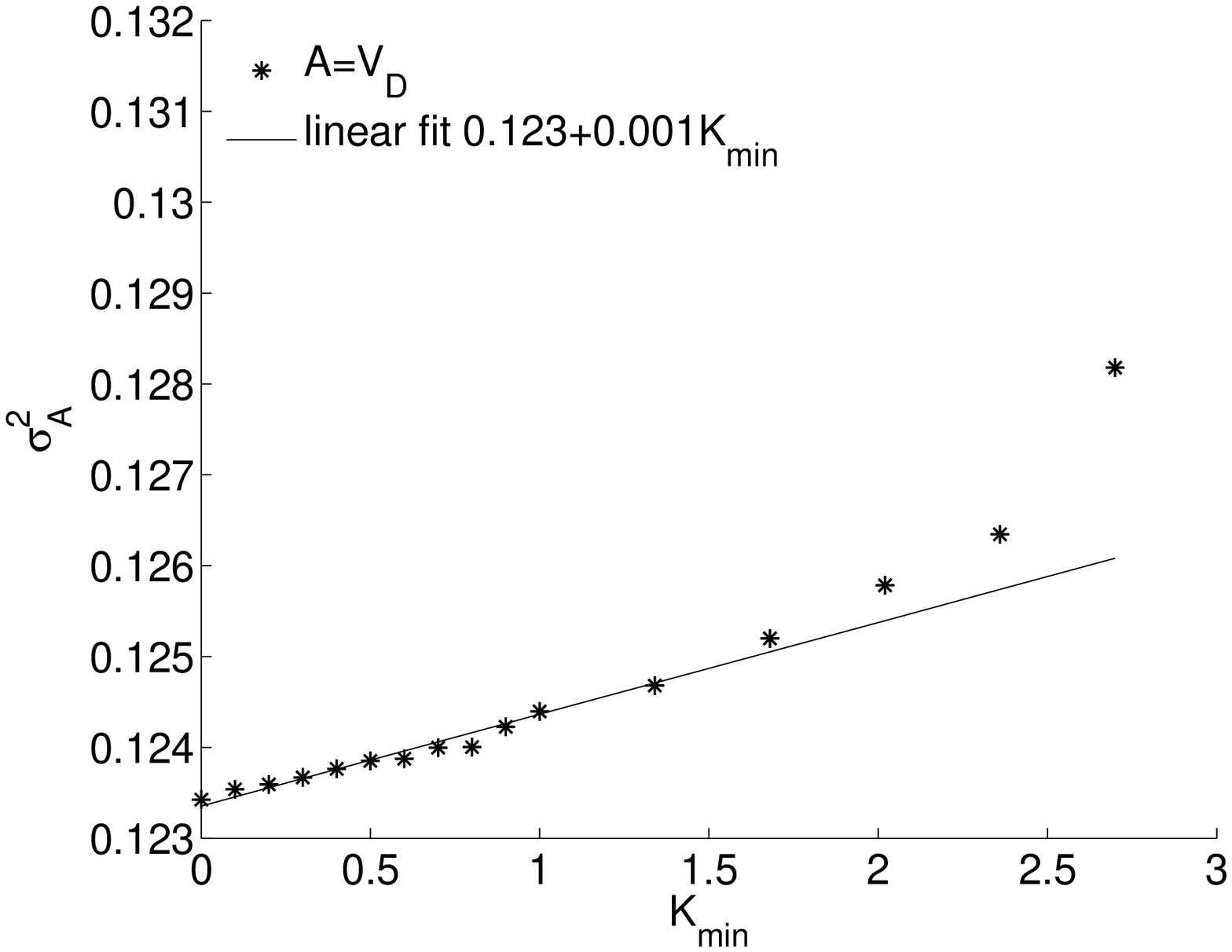}
  \caption{Estimated variance $\sigma^2_{A}$ for $A=V_{\rm D}$ as a function of $\er\in \left[0,2.7\right]$ for $\ef=3$.}
  \label{fig:variance over kmin right}
	\end{figure}


\begin{figure}[t]
  \centering
  \includegraphics[width=0.8\textwidth]{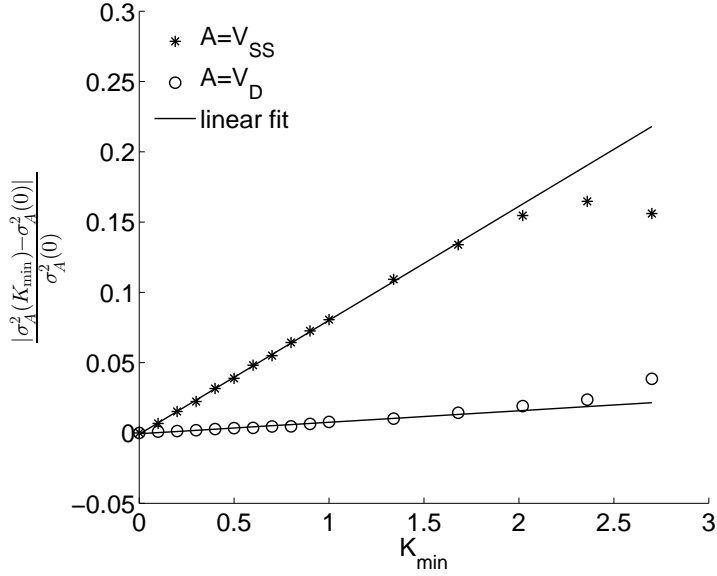}
  \caption{Relative variation in the estimated variances $\sigma_A^2(\er)$ with respect to the reference variances $\sigma_A^2(0)$.}
  \label{fig:comparisonRelDifferenceVariance_SS_DD left}
	\end{figure}

\begin{figure}[t]
  \centering
  \includegraphics[width=0.8\textwidth]{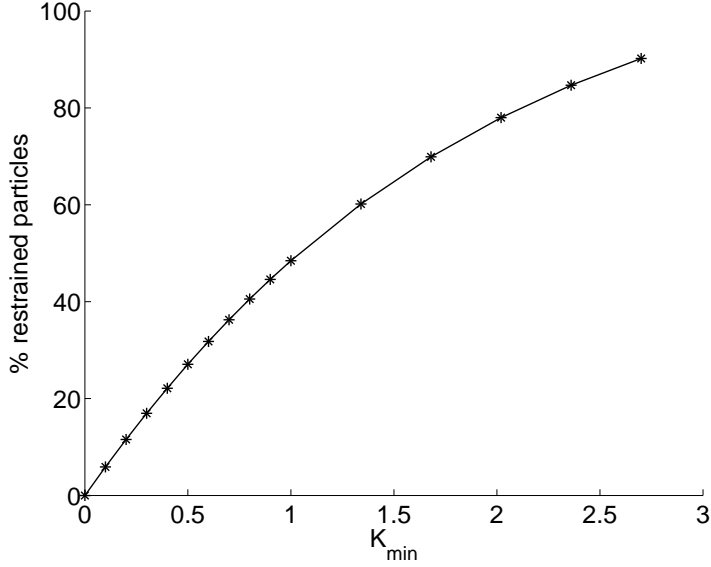}
  \caption{Percentage of restrained particles as a function of $\er$.}
  \label{fig:comparisonRelDifferenceVariance_SS_DD right}
	\end{figure}


\section{Proofs of the results}
\label{section proofs}

\subsection{Proof of Lemma~\ref{lemma Lyapunov}}
\label{proof of the lyapunov condition}

The modified Langevin equation can be written as a perturbation of the Langevin equation, namely
\begin{equation}
\label{perturbed Langevin proofs}
\left\{
\begin{aligned}
dq_t &=\left( M^{-1}p_t-\Ze(p_t)\right)dt, \\
dp_t &= -\n V(q_t)dt-\gamma \left(M^{-1}p_t-\Ze(p_t)\right)dt + \sqrt{\frac{2\gamma}{\beta}}\,dW_t,
\end{aligned}
\right.
\end{equation}
where $\Ze(p):=  \n U_{\rm{std}}(p)-\n \U(p) = M^{-1}p-\n\U(p)$ is uniformly bounded as $|\Ze(p)| \leq \ConstNablaU$ in view of Assumption~\ref{hyp mod Langevin ergodicity}. By a direct integration in time of the momenta dynamics,
\begin{equation}
\label{eq:integration_p_perturbation_dynamics}
p_t=\mathrm{e}^{-\gamma t}p_0 + \mathcal{F}_{t} + \mathcal{G}_{t},
\qquad
\mathcal{F}_{t}=\int_{0}^t \Big(-\n V(q_s)+\gamma\Ze(p_s)\Big)\mathrm{e}^{-\gamma (t-s)} \, ds,
\end{equation}
where
\[
\mathcal{G}_{t}=\sqrt{\frac{2\gamma}{\beta}}\int_0^t \mathrm{e}^{-\gamma (t-s)}dW_s
\]
is a Gaussian random variable with mean zero and covariance $\left(1-\mathrm{e}^{-2\gamma t}\right)\beta^{-1}$. Note also that $\mathcal{F}_{t}$ is uniformly bounded; more precisely, $\left|\mathcal{F}_{t}\right|\leq \left\|\n V\right\|_{L^{\infty}}/\gamma+\ConstNablaU$.

Let us first consider the case $s=1$. We introduce $\alpha_t := \mathrm{e}^{-\gamma t}<1$ for a given time $t>0$. With this notation,
\[
\begin{aligned}
\left|p_t\right|^2 = \left|\alpha_t p_0 + \mathcal{F}_{t} + \mathcal{G}_{t}\right|^2 &= \alpha_t^2 \left|p_t\right|^2 + 2\alpha_t p_t^T (\mathcal{F}_t+ \mathcal{G}_t) + \left|\mathcal{F}_t\right|^2 + 2\mathcal{F}_t\mathcal{G}_t + \left|\mathcal{G}_t\right|^2 \\
&\leq \alpha_t^2(1+\e) \left|p_t\right|^2 +\left(2+\frac{1}{4\e}\right)\mathcal{F}_t^2 +\left|\mathcal{G}_t\right|^2+2\alpha_t p_t^T \mathcal{G}_t,
\end{aligned}
\]
where we used Young's inequality to obtain the last line, with a constant $\e>0$ sufficiently small so that $\alpha_t^2(1+\e)<1$. We next take the expectation of the previous inequality, conditionally to the filtration of events up to time~$t$. Since $\mathbb{E}\left[p_t^T \mathcal{G}_t \left| \right. \mathscr{F}_t\right]=0$, it follows
\[
\mathbb{E}\left[\K_1(q_t,p_t) \Big| \mathscr{F}_t\right]\leq \alpha^2(1+\e)\K_1(q_t,p_t) + R,
\]
for some constant $R>0$. This shows the Lyapunov condition for $n=1$. The higher order conditions ($n>1$) can be proved as in~\cite[Section~5.1.5]{Joubaud}, by noting that $|p_t|^{2s}$ is equal to $\alpha_t^{2s} |p_0|^{2s}$ plus some lower order polynomial in~$p_0$.

\subsection{Proof of Lemma~\ref{lemma minorization condition}}
\label{proof of the minorization condition}

The main idea is, as in~\cite[Section~5.1.5]{Joubaud}, to compare the modified Langevin dynamics to the standard Langevin dynamics with zero forces, for which a minorizing measure $\nu_{p^*, t}$ can be explicitly constructed. From the rewriting~\eqref{perturbed Langevin proofs}, we deduce, in view of the momenta evolution~\eqref{eq:integration_p_perturbation_dynamics},
\[
q_t = q_0 + \int_{0}^t \Big(p_s-\Ze(p_s)\Big) ds = q_0 + \int_{0}^t \mathrm{e}^{-\gamma s}p_0 \, ds + \widetilde{\mathcal{G}}_t + \widetilde{\mathcal{F}}_t,
\]
where periodic boundary conditions are considered, and
\[
\widetilde{\mathcal{F}}_t:= \int_{0}^t \mathcal{F}_{s} \, ds - \int_{0}^t \Ze(p_s) \, ds, \qquad \widetilde{\mathcal{G}}_t = \int_{0}^t \mathcal{G}_{s} \, ds.
\]
Note that $\widetilde{\mathcal{F}}_t$ is bounded as
\[
\left|\widetilde{\mathcal{F}}_t\right| \leq \left(\frac{\left\|\n V\right\|_{L^{\infty}}}{\gamma}+ 2\ConstNablaU\right)t,
\]
whereas $\widetilde{\mathcal{G}}_t$ is a Gaussian random variable, which is correlated to $\mathcal{G}_t$. A simple computation shows that
\[
\mathcal{V}:=\mathbb{E}\left[(\widetilde{\mathcal{G}}_t,\mathcal{G}_t)^{T}(\widetilde{\mathcal{G}}_t,\mathcal{G}_t)\right]=
\left(\begin{matrix}
\mathcal{V}_{qq} & \mathcal{V}_{qp} \\
\mathcal{V}_{pq} & \mathcal{V}_{pp}
\end{matrix}
\right),
\quad \qquad
\left\{
\begin{aligned}
\mathcal{V}_{qq}&=\frac{1}{\beta\gamma}\left(2t-\frac{1}{\gamma}\left(3-4\alpha_t+\alpha_t^2\right)\right),\\
\mathcal{V}_{qp} &= \frac{1}{\beta\gamma}\left(1-\alpha_t\right)^2, \\
\mathcal{V}_{pp}&= \frac{1}{\beta}\left(1-\alpha_t^2\right),
\end{aligned}
\right.
\]
where $\alpha_t = \mathrm{e}^{-\gamma t}$ is the same constant as in Section~\ref{proof of the lyapunov condition}. Therefore, for a given measurable set $B \in \mathscr{B}(\mathcal{E})$,
\begin{equation}
\label{eq:minorization_reformulated_with_Gaussians}
\mathbb{P}\left((q_{t},p_{t})\in B \, \Big| \, \left|p_{0}\right| \leq p_{*} \right) \geq \mathbb{P}\left( \left(\widetilde{\mathcal{G}}_t,\mathcal{G}_t\right)\in B - \left(\mathcal{Q}_t,\mathcal{P}_t\right) \, \Big| \, \left|p_{0}\right| \leq p_{*} \right),
\end{equation}
where
\[
\mathcal{Q}_t := q_{0}+ \frac{1-\alpha_t}{\gamma}p_0+\widetilde{\mathcal{F}}_t,
\qquad
\mathcal{P}_t := \alpha_t p_0 + \mathcal{F}_t,
\]
are both bounded by some constant $R > 0$ (depending on $p^*$ and~$t$) when $|p_0| \leq p^*$. Note that there is an inequality in~\eqref{eq:minorization_reformulated_with_Gaussians} since we neglect in fact the periodic images of $q_t$ when writing it as $\mathcal{Q}_t + \widetilde{\mathcal{G}}_t$, the latter two quantities being interpreted as elements of $\R^d$.
Since the matrix $\mathcal{V}$ is definite positive, we can finally consider the following minorizing measure:
\[
\nu_{p^*, t}(B):=Z_R^{-1}\inf_{\left|\mathcal{Q}\right|,\left|\mathcal{P}\right|\leq R}\int_{B - (\mathcal{Q},\mathcal{P})} \exp\left(-\frac{x^T\mathcal{V}^{-1}x}{2}\right) dx,
\]
where $Z_R>0$ is a normalization constant. The proof is concluded by defining $\kappa =(2\pi)^{-d}\det\left(\mathcal{V}\right)^{-1/2}Z_R$.

\subsection{Proof of Lemma~\ref{main theorem kopec}}
\label{proof proposition kopec}

\subsubsection{General structure of the proof}

The proof follows the strategy of~\cite[Proposition A.1]{Kopec}. We recall in this section the general outline of this proof, and highlight the required extensions. The proofs of these extensions are then provided in Section~\ref{Proof of lemma in sketch of the proof}. Without restriction of generality, and in order to simplify the notation, we assume that $A=\Pi_{\mu}A$. We introduce weight functions
\[
\pi_s(p):=\frac{1}{\K_s(p)},
\]
where the Lyapunov functions~$\K_s$ are defined in~\eqref{eq: Lyapunov function}.
We also define
\[
u(t,q,p) = \left( \mathrm{e}^{t\LopU}A \right)(q,p) = \mathbb{E}\left[A(q_t,p_t) \left| \right. (q_0,p_0)=(q,p)\right].
\]
The following result, central in this proof, gives estimates on derivatives of $u(t)$ in the weighted spaces $L^2(\pi_s)$ (see Section~\ref{Proof of lemma in sketch of the proof} for the proof).

\begin{lemma}
\label{proposition Kopec 1}
Suppose that Assumptions~\ref{hyp mod Langevin ergodicity} and~\ref{hyp THM} hold. For any $n\geq 1$, there exists $\lambda_n>0$ and $s_n \in \mathbb{N}$ such that, for $s \geq s_n$ and $\displaystyle G_{\nu}\leq\rho_s$ with $\rho_s$ sufficiently small, there is $r\in\mathbb{N}$ and $C>0$ for which
\begin{equation}
\forall \left|k\right|\leq n, \qquad
\int_{\mathcal{E}}\left|\du^ku(t,q,p)\right|^2\pi_s(p) \, dp \, dq\leq  C \left\|A\right\|_{W^{n,\infty}_{ \K_{r}}}^2\mathrm{exp}(-\lambda t).
\label{eq:estimate in weighted norm}
\end{equation}
\end{lemma}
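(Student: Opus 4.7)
The plan is to prove the weighted $L^2$ estimate by induction on the derivative order $n$, following the hypocoercive strategy of~\cite{Talay,Kopec} but carefully tracking the perturbative handling of the lack of strict convexity of $\nabla^2 U$. The base case $n=0$ follows immediately from Theorem~\ref{thm uniform convergence}: for $s'$ sufficiently large compared to~$s$ the pointwise bound $|u(t,q,p)| \leq C \K_s(p)\,\|A\|_{L^\infty_{\K_s}}\,\mathrm{e}^{-\lambda_s t}$ yields the $L^2(\pi_{s'})$ estimate after integrating against $\pi_{s'}$ and observing that $\int_{\mathcal{E}} \K_s^2 \pi_{s'}\,dp\,dq < +\infty$ when $s'-s$ is sufficiently large (recall that $\mathcal{D}$ has finite volume). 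This is where the parameter~$s_n$ first enters.

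For the inductive step, assuming the bound has been established for all multi-indices of order strictly less than $n$ with some larger weight index $s''$, I would introduce a modified energy functional of H\'erau--Nier--Villani type,
\[
F_n(u(t)) = \sum_{|k|\leq n} A_k \int_{\mathcal{E}} |\du^k u|^2\, \pi_s \, dp\,dq + \sum_{|k|\leq n-1} \sum_{i,j} B_{k,ij} \int_{\mathcal{E}} (\du^k \du_{q_i} u)(\du^k \du_{p_j} u)\, \pi_s \, dp\,dq,
\]
with positive coefficients $A_k$ and small coefficients $B_{k,ij}$ chosen so that $F_n$ is equivalent to $\sum_{|k|\leq n}\|\du^k u\|_{L^2(\pi_s)}^2$. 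The cross terms are introduced to generate coercivity in the $q$-directions, which the generator~$\Lopgen$ itself does not provide since its dissipative part acts only in~$p$. Differentiating $F_n$ in time using $\du_t \du^k u = \Lopgen \du^k u + [\du^k,\Lopgen]u$, and integrating by parts against $\pi_s$, the diffusive part contributes a coercive $-(\gamma/\beta)\sum_{|k|\leq n}\int |\n_p \du^k u|^2 \pi_s\,dp\,dq$, while the time derivative of the cross terms, combined with the commutator $[\n_p,\Lopgen] \supset -(\n^2 U)\n_q$, produces a coercive $q$-gradient term of the form $-\eta \sum_{|k|\leq n-1}\int (\n_q \du^k u)^T (\n^2 U) (\n_q \du^k u)\,\pi_s\,dp\,dq$ for some $\eta > 0$, provided that $\n^2 U$ is positive.

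To handle the fact that $\n^2 U$ may vanish on open sets, I would split $U = U_\nu + (U-U_\nu)$ and write $\Lopgen = \Lopgen_\nu + \mathcal{R}$, where $\Lopgen_\nu$ is the generator associated with $U_\nu$ and
\[
\mathcal{R} = \n(U-U_\nu)\cdot\n_q - \gamma\,\n(U-U_\nu)\cdot \n_p
\]
is a first-order operator with coefficients bounded by $G_\nu$. Running the hypocoercive computation with $\Lopgen_\nu$ in place of $\Lopgen$ then works exactly as in~\cite{Kopec} since $\n^2 U_\nu \geq \nu > 0$, yielding $\dot F_n^\nu \leq -\lambda_n F_n + C_n \sum_{|k| < n}\|\du^k u\|_{L^2(\pi_{s''})}^2$, the lower-order remainder being absorbed by the inductive hypothesis. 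The perturbation $\mathcal{R}$ contributes, after Cauchy--Schwarz and integration by parts, a term bounded by $C\, G_\nu\, F_n$; choosing $G_\nu \leq \rho_s := \lambda_n/(2C)$ absorbs it into the coercive part, and Gr\"onwall's lemma concludes the induction with a slightly reduced rate. The initial value $F_n(u(0))$ is controlled by $\|A\|_{W^{\widetilde{n},\infty}_{\K_r}}^2$ for some $\widetilde{n},r$ depending on~$n$.

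The main obstacle is ensuring that the smallness threshold $\rho_s$ may be chosen independently of~$n$, as claimed in the statement. The direct action of $\mathcal{R}$ on $\du^k u$ produces a contribution whose coefficient is exactly $G_\nu$ times constants that depend only on the weight $\pi_s$ (through integration-by-parts remainders involving $\n_p\pi_s/\pi_s$, which grow polynomially in~$p$ and are reabsorbed into $\K_s$ by enlarging $s$). The commutators $[\mathcal{R},\du^k]$ for $|k|\geq 1$, on the other hand, bring in higher-order derivatives of $U-U_\nu$ up to order $n+1$: these are only bounded (via $U,U_\nu\in\mathscr{S}$) and \emph{not} small, but they always act on derivatives of $u$ of strictly lower order than~$n$, so they can be absorbed by the inductive hypothesis rather than by the smallness of~$G_\nu$. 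This structural separation—$G_\nu$-small contributions only at the top-order and bounded contributions at lower orders—is precisely what makes $\rho_s$ depend on $s$ but not on~$n$, and is the key technical point of the proof.
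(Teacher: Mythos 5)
Your approach reformulates the paper's argument via a H\'erau--Nier--Villani modified energy $F_n$ with explicit cross terms, rather than the paper's route of applying the integrated-in-time estimate of Lemma~\ref{lemma A.6} successively with $\mathcal{A}\in\{\mathrm{Id},\ L_\alpha,\ \du_{p_i},\ \du_{q_i}\}$ (and $L_\alpha\du^k_q$ for higher $n$). The two formalisms are morally equivalent --- your cross terms expand out the same $|L_\alpha u|^2$ with $L_\alpha = \alpha\n_p - \n_q$ --- but the paper's integrated form is better adapted to absorbing the non-small bounded coefficients arising from $\n^2 U$ acting on mixed derivatives, since those are pushed onto $\int_0^T\exp(\zeta t)\int|\n_p u|^2\pi_s$-type quantities already bounded in an earlier pass of the bootstrap, rather than having to be dominated by the decay rate in a Gr\"onwall inequality.

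There is, however, a genuine gap in the argument of your last paragraph. You claim that the commutators $[\mathcal{R},\du^k]$ bring in higher derivatives of $U-U_\nu$ that are only bounded (not $G_\nu$-small) but act on \emph{strictly} lower-order derivatives of $u$, and can therefore be absorbed by the inductive hypothesis. This is not correct. For a unit multi-index $\ell\leq k$ supported in the $p$-variables, $[\mathcal{R},\du^k]u$ contains the term $(\du^\ell\n(U-U_\nu))\cdot(\n_q-\gamma\n_p)\du^{k-\ell}u$, which is an order-$|k|=n$ derivative of $u$ multiplied by an entry of the Hessian $\n^2(U-U_\nu)$, i.e.\ bounded by Assumption~\ref{hyp THM} but \emph{not} of size $G_\nu$. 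When paired against $\du^k u$ in $\dot F_n$, this produces a contribution of size $\left\|\n^2(U-U_\nu)\right\|_{L^\infty} F_n$, which is not absorbed by induction (same order) and cannot be dominated by a decay rate that has no reason to exceed this constant. This is exactly the critical term that Lemma~\ref{lemma estimation operator A1} handles by integration by parts in $p$: the symmetric quadratic form $-2\alpha\int L_\alpha u^{T}\,\n^2(U-U_\nu)\,L_\alpha u\,\pi_s$ is rewritten so that one $p$-derivative falls off $\n^2(U-U_\nu)$, leaving $\n(U-U_\nu)$ (size $G_\nu$) multiplying $\n_p L_\alpha u$ (order $n+1$), which is then absorbed by the dissipative term $\frac{2\gamma}{\beta}\int_0^T\exp(\zeta t)\int|\n_p L_\alpha u|^2\pi_s$ sitting on the left-hand side of~\eqref{eq: lemma A.6}. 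Your sketch invokes ``Cauchy--Schwarz and integration by parts'' but does not exhibit this mechanism, and the structural separation you describe (small at top order, bounded at lower order) is not what actually makes $\rho_s$ $n$-independent. The true mechanism is that this single top-order IBP is the \emph{only} place smallness of $G_\nu$ is required, and its threshold~\eqref{eq: def rho m} depends on $\omega_s$, $\mathscr{G}_s$ and $\|\Delta U\|_{L^\infty}$ --- all functions of $s$ but not of $n$.
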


Assume in the sequel that $\displaystyle G_{\nu}\leq\rho_s$ for $s$ sufficiently large. In view of the estimates~\eqref{eq:estimate in weighted norm}, and using the fact that $\du^j \pi_s(p)=\psi_{j,s}(p)\pi_s(p)$ with $\psi_{j,s}(p) \to 0$ as $|p| \to +\infty$, we obtain that, for any $n\geq 1$, there exist $s_n \in \mathbb{N}$ such that, for $s\geq s_n$, it is possible to find $r \in \mathbb{N}$ and $C>0$ for which
\[
\forall \left|k\right|+\left|\ell\right|\leq n, \quad \forall t \geq 0,
\qquad
\int_{\mathcal{E}}\left|\du^\ell\Big(\du^k u(t,q,p)\pi_s(p)\Big)\right|^2 dp \, dq\leq C \left\|A\right\|_{W^{\tilde{n},\infty}_{ \K_{r}}}^2\mathrm{exp}(-\lambda t).
\]
By the Sobolev embedding theorem, we can conclude that, for any $n\geq 1$, there exist $s_n,\widetilde{n} \in \mathbb{N}$ such that, for $s\geq s_n$ and provided $G_\nu \leq \rho_s$, it is possible to find $r \in \mathbb{N}$ and $C > 0$ for which
\[
\forall |k| \leq n,
\qquad
\left|\du^ku(t,q,p)\right|\pi_s(p) \leq C \left\|A\right\|_{W^{\widetilde{n},\infty}_{ \K_{r}}}^2 \mathrm{exp}(-\lambda t).
\]
This concludes the proof of Lemma~\ref{main theorem kopec}.

\subsubsection{Proof of Lemma \ref{proposition Kopec 1}}
\label{Proof of lemma in sketch of the proof}

The main tool in the proof of Lemma~\ref{proposition Kopec 1} is the following estimate, which is the counterpart of~\cite[Lemma A.6]{Kopec} for our modified Langevin dynamics.

\begin{lemma}
\label{lemma A.6}
Let $\mathcal{A}$ be a linear operator. Assume that $U\in \mathscr{S}$ and $\de U \in L^{\infty}$. There exists an integer $s_*$ such that, for all $s\geq s_*$, there is a constant $\omega_s>0$ for which the following inequality holds true for any $\zeta,T > 0$:
\begin{equation}
\begin{aligned}
\ds
&\mathrm{exp}(\zeta T)\int_{\mathcal{E}} \left|\mathcal{A} u(t)\right|^2 \pi_s \, dq \, dp + \frac{2\gamma}{\beta}\int_{0}^T\exp(\zeta t) \left(\int_{\mathcal{E}}\left|\n_p \mathcal{A} u(t)\right|^2 \pi_s \, dq \, dp \right) dt   \\
\ds& \qquad \leq \int_{\mathcal{E}}\left| \mathcal{A} u(0)\right|^2 \pi_s \, dq \, dp +  \left(\omega_s +\gamma\left\|\de U\right\|_{L^{\infty}}+\zeta\right)\int_0^T \exp(\zeta t) \left(\int_{\mathcal{E}}\left|\mathcal{A}u(t)\right|^2\pi_{s} \, dp \, dq \right)dt\\
\ds& \qquad \qquad +2\int_{0}^T \exp (\zeta t) \left( \int_{\mathcal{E}}  \left[\mathcal{A},\LopU \right]u(t) \ \mathcal{A}u(t) \pi_s \, dq \, dp \right)dt \ .
\label{eq: lemma A.6}
\end{aligned}
\end{equation}
\end{lemma}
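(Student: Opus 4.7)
The plan is to derive the stated energy inequality from a pointwise-in-time bound on $\int_{\El} v^2 \pi_s \, dq \, dp$ where $v(t) := \mathcal{A} u(t)$, and then multiply by $e^{\zeta t}$ and integrate. Applying $\mathcal{A}$ to $\du_t u = \Lope u$ yields $\du_t v = \Lope v + [\mathcal{A}, \Lope] u$, so
\[
\frac{d}{dt} \int_{\El} v^2 \pi_s \, dq \, dp = 2 \int_{\El} v (\Lope v) \pi_s \, dq \, dp + 2 \int_{\El} v [\mathcal{A}, \Lope] u \, \pi_s \, dq \, dp.
\]
The commutator term will simply be carried on the right-hand side; the remaining work is to extract the dissipation $-\frac{2\gamma}{\beta}|\n_p v|^2$ and the explicit $\gamma \|\de U\|_{L^\infty}$ contribution from $2\int v(\Lope v)\pi_s$.

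Next I would split $\Lope$ according to its definition and integrate by parts each piece against $\pi_s(p)\,dq\,dp$, with no boundary terms since $q$ lives on the torus and $\pi_s$ decays algebraically in $p$. The pure transport part $\n U \cdot \n_q v$ integrates to $0$ because $\n U$ and $\pi_s$ depend only on $p$. The drift pieces $-\n V\cdot \n_p$ and $-\gamma \n U\cdot \n_p$ each produce, after one integration by parts, terms of the form $\int v^2 f \pi_s$ with $f$ a combination of $\n V\cdot(\n_p\pi_s/\pi_s)$, $\n U\cdot(\n_p\pi_s/\pi_s)$ and $\gamma \de U$. I would single out the last contribution as $\gamma \|\de U\|_{L^\infty}\int v^2 \pi_s$ and absorb the other bounded coefficients into $\omega_s$.

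The key computation is for the diffusion part. One integration by parts gives
\[
\frac{2\gamma}{\beta}\int v(\de_p v)\pi_s = -\frac{2\gamma}{\beta}\int |\n_p v|^2 \pi_s - \frac{\gamma}{\beta}\int \n_p(v^2)\cdot\n_p\pi_s,
\]
and a second integration by parts on the cross term yields $\frac{\gamma}{\beta}\int v^2 \de_p\pi_s$, which is controlled by $\|\de_p \pi_s/\pi_s\|_{L^\infty}\int v^2\pi_s$. The reason for doing two successive integrations by parts rather than applying Young's inequality to $\int v\,\n_p v\cdot\n_p\pi_s$ is that it preserves the full prefactor $\frac{2\gamma}{\beta}$ in front of $\int |\n_p v|^2 \pi_s$, as required by the statement.

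The main obstacle, and the origin of the restriction $s \geq s_*$, is to verify boundedness of the relevant weight ratios. With $\pi_s(p) = (1+|p|^{2s})^{-1}$ and $s \geq 1$, a direct computation gives $|\n_p \pi_s|/\pi_s = O(\min(|p|,|p|^{-1}))$ and $|\de_p \pi_s|/\pi_s = O(|p|^{-2})$ as $|p|\to\infty$; together with Assumption~\ref{hyp mod Langevin ergodicity}, which yields $|\n U(p)| \leq |M^{-1}p| + \ConstNablaU$, this suffices to bound $\n V\cdot(\n_p\pi_s/\pi_s)$, $\n U\cdot(\n_p\pi_s/\pi_s)$ and $\de_p\pi_s/\pi_s$ uniformly. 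Collecting these $L^\infty$ norms (weighted by $\gamma$ and $\|\n V\|_{L^\infty}$) into a single constant $\omega_s$, one obtains the pointwise bound
\[
\frac{d}{dt}\int v^2 \pi_s \leq -\frac{2\gamma}{\beta}\int |\n_p v|^2 \pi_s + (\omega_s + \gamma\|\de U\|_{L^\infty})\int v^2 \pi_s + 2\int v\,[\mathcal{A},\Lope]u\,\pi_s,
\]
from which the claim follows by writing $e^{\zeta t}\frac{d}{dt}F = \frac{d}{dt}(e^{\zeta t}F) - \zeta e^{\zeta t}F$ and integrating from $0$ to $T$.
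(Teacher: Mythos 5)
Your argument is in substance the same as the paper's: the paper extracts the dissipation via the identity $2v\Lope v=\Lope(v^2)-\tfrac{2\gamma}{\beta}|\nabla_p v|^2$ and then transfers $\Lope$ onto $\pi_s$ through its formal adjoint on $L^2(\mathcal{E})$, which is exactly your term-by-term integration by parts collected into one formula; the constant $\omega_s+\gamma\|\Delta U\|_{L^\infty}$ and the full prefactor $\tfrac{2\gamma}{\beta}$ in front of the gradient term come out the same way, and the final inequality follows, as you say, by multiplying by $\mathrm{e}^{\zeta t}$ and integrating in time.

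The one genuine flaw is your explanation of the threshold $s_*$. As your own estimates show, the ratios $|\nabla_p\pi_s|/\pi_s$, $|\Delta_p\pi_s|/\pi_s$, and the products $|\nabla V\cdot\nabla_p\pi_s|/\pi_s$, $|\nabla U\cdot\nabla_p\pi_s|/\pi_s$ are bounded uniformly in $p$ already for every $s\geq 1$; increasing $s$ does not improve those bounds, so controlling the zeroth-order coefficient is not what forces $s$ to be large. What actually requires $s\geq s_*$ is the integrability of the quantities being manipulated and the legitimacy of the integrations by parts: $u(t)=\mathrm{e}^{t\Lope}A$ and its derivatives grow polynomially in $p$ with a degree depending on $A$ and on the order of the differential operator $\mathcal{A}$, so $\int_{\mathcal{E}}|\mathcal{A}u(t)|^2\pi_s$, $\int_{\mathcal{E}}|\nabla_p\mathcal{A}u(t)|^2\pi_s$ and the commutator term are finite, and the boundary terms at infinity vanish, only if the decay rate $2s$ of $\pi_s$ beats that polynomial growth. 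As written, your argument would appear to give the lemma with $s_*=1$ independently of $\mathcal{A}$, which is not the content of the statement and would be false once $\mathcal{A}$ is a higher-order derivative as in the later applications of this lemma.
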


In fact, a careful inspection of the proof shows that, since $U\in\mathscr{S}$, it is possible to avoid the assumption $\de U\in L^{\infty}$ by appropriately increasing the Lyapunov index~$s$. Since $\de U\in L^{\infty}$ for AR-Langevin dynamics, we however keep this assumption.

\begin{proof}
A simple computation shows that
\begin{equation}
2 \LopU \mathcal{A}u(t)\, \mathcal{A}u(t)=\LopU\left( \left|\mathcal{A}u(t)\right|^2\right) - \frac{2\gamma}{\beta}\left|\n_p \mathcal{A}u(t)\right|^2\,.
\label{eq: lemma 1}
\end{equation}
The formal adjoint of the operator $\Lop$ in $L^2(\El)$ is given by
\[
\LopU^{\dagger}=-\left(\n \U \cdot\n_q - \n V\cdot\n_p\right) +\gamma \left(\Delta \U +\n \U\cdot\n_p + \frac{1}{\beta}\Delta_p\right).
\]
In view of Assumption~\ref{hyp mod Langevin ergodicity}, there exists therefore $\omega_s > 0$ such that
\begin{align}
\LopU ^{\dagger}\pi_s &=\left[-\left(\n V+\gamma\n \U\right)\cdot \frac{\n \K_s}{\K_s} +\gamma \Delta \U  - \frac{\gamma}{\beta}\frac{\Delta\K_s}{\K_s} +\frac{2\gamma}{\beta}\frac{\left|\n \K_s\right|^2}{\K_s^2}\right]\pi_s \nonumber\\
&\leq  \left(\omega_s +\gamma\left\|\de U\right\|_{L^{\infty}}\right)\pi_{s}.
\label{eq: estimation adjoint}
\end{align}
With this estimation, we can follow exactly the proof of~\cite[Lemma A.6]{Kopec}, \emph{i.e.} write the expression for $\frac{d}{dt}\left[\mathrm{exp}(\zeta t) \left|\mathcal{A}u(t)\right|^2\right]$, use~\eqref{eq: lemma 1}, integrate the resulting expression in time and with respect to $\pi_s \, dp \, dq$ (for $s$ sufficiently large), and finally use~\eqref{eq: estimation adjoint} to deduce~\eqref{eq: lemma A.6}.
\end{proof}

Let us now prove Lemma~\ref{proposition Kopec 1}. The complete proof is done by induction on $n$. We provide here the complete proofs for $n=0$ and $n=1$, and only sketch the extension to higher orders of derivation since the proof follows the same lines as in~\cite[Appendix~A]{Kopec}.

\paragraph{Case $n=0$.}
Recall first that, in view of Assumption~\ref{hyp mod Langevin ergodicity}, the exponential convergence of the law provided by Theorem~\ref{thm uniform convergence} holds. Denote by $\lambda_\ell$ the corresponding exponential rate of decay for a given $\ell \in \mathbb{N}^*$. For any $r \in \mathbb{N}$, we directly obtain the following decay estimates in $L^{2}(\pi_l)$ when $l> 2r+d/2$: there exists $\widetilde{C}_{l,r}> 0$ such that
\[
\int_{\mathcal{E}}\left|u(t)\right|^2\pi_l\leq \widetilde{C}_{l,r} \, \mathrm{e}^{-2\lambda_r t} \left\|A\right\|_{L^{\infty}_{\K_{r}}}^2.
\]
Note that this corresponds to the case $n=0$ in Lemma~\ref{proposition Kopec 1}.

\paragraph{Case $n=1$.}
We now prove the estimates in the case $n=1$. We first apply Lemma~\ref{lemma A.6} with $\mathcal{A}=\mr{Id}$: there exists $s_* \in \mathbb{N}$ such that, for all $s \geq s_*$ and $ \zeta<2\lambda_r$, there is $C>0$ and $r\in\mathbb{N}$ for which
\begin{equation}
\forall T \in \mathbb{R}_+, \qquad \int_0^T\exp(\zeta t)\left[\int_{\mathcal{E}}\left|\n_p u(t,q,p)\right|^2\pi_s(p) \, dq \, dp \right]dt\leq C \left\|A\right\|_{L^{\infty}_{ \K_{r}}}^2.
\label{eq: estimation dp}
\end{equation}
In order to control derivatives in~$q$, the key idea, going back to~\cite{Talay}, is to use mixed derivatives $\alpha \du_{p_i}-\du_{q_i}$ (for some parameter $\alpha>0$). This allows indeed to retrieve some dissipation in the $q$ direction when $\nabla^2 U$ is positive definite. The next lemma is the most important part of our proof since we show how to extend the use of mixed derivatives to the case when $\nabla^2 U$ is not positive definite.

\begin{lemma}
\label{lemma estimation operator A1}
Consider the operator $L_\alpha := \alpha \n_p-\n_q$ for some parameter $\alpha \in \mathbb{R}$. There exists $s_* \in \mathbb{N}$ such that, for $s \geq s_*$ and provided $G_\nu \leq \rho_s$ (for some constant $\rho_s>0$ defined in~\eqref{eq: def rho m} below), there is $r\in \mathbb{N}$, $\zeta<2\lambda_r$, $\alpha>0$ and $C>0$ for which
\begin{equation}
\forall T > 0, \
\ds\int_0^T\exp(\zeta t) \left[ \int_{\mathcal{E}} \left( \left|L_\alpha u(t,q,p)\right|^2 + \left|\n_pL_\alpha u(t,q,p)\right|^2\right) \pi_s(p)\,dq \, dp \right]dt \leq C \left\|A\right\|_{W^{1,\infty}_{ \K_{r}}}^2.
\label{eq: estimation operator L}
\end{equation}
\end{lemma}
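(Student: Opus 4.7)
The plan is to apply Lemma~\ref{lemma A.6} with $\mathcal{A} = L_\alpha$ and then carefully estimate the resulting commutator term, exploiting the decomposition $U = U_\nu + (U-U_\nu)$ provided by Assumption~\ref{hyp THM}. A direct calculation using $[\n_p,\LopU] = \n^2 U \, \n_q - \gamma\n^2 U \, \n_p$ and $[\n_q,\LopU] = -\n^2 V \, \n_p$ gives
\[
[L_\alpha, \LopU] = \alpha \n^2 U \, \n_q - \alpha\gamma \n^2 U \, \n_p + \n^2 V \, \n_p.
\]
Pairing this against $L_\alpha u = \alpha\n_p u - \n_q u$ with weight $\pi_s$ and expanding into six terms, the only summand with an a priori dissipative sign is $-\alpha\int_\mathcal{E}(\n^2 U \, \n_q u)\cdot \n_q u \, \pi_s\,dq\,dp$, which is the Talay-type source of dissipation in the $q$ variable that we need to activate.

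To extract this dissipation despite the lack of strict convexity of $\n^2 U$, I split the coercive term using $U = U_\nu + (U - U_\nu)$. The $U_\nu$ contribution yields, by~\eqref{condition dn2U strictly pos},
\[
-\alpha\int_\mathcal{E}(\n^2 U_\nu \, \n_q u)\cdot \n_q u \, \pi_s\,dq\,dp \leq -\alpha \nu \int_\mathcal{E}|\n_q u|^2 \pi_s \, dq\,dp.
\]
The remaining piece $-\alpha \int(\n^2(U-U_\nu)\n_q u)\cdot \n_q u\,\pi_s$ cannot be bounded directly in terms of $G_\nu$, since only $\n(U-U_\nu)$ and not $\n^2(U-U_\nu)$ is controlled in $L^\infty$. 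The main obstacle will be to handle this term: I integrate by parts once in~$p$, so that each resulting summand carries a factor $\partial_{p_i}(U-U_\nu)$ bounded by $G_\nu$, paired either with $\partial_q u \, \partial_q u \, \partial_p\pi_s$ (harmless for $s$ moderately large, since $|\n_p \pi_s|/\pi_s$ grows at worst like $|p|^{-1}$) or with a mixed derivative $\n_p\n_q u$. The mixed-derivative factors are then absorbed via Young's inequality using the dissipation $\frac{2\gamma}{\beta}|\n_p L_\alpha u|^2$ already present on the left-hand side of Lemma~\ref{lemma A.6}, together with the estimate~\eqref{eq: estimation dp} on $\n_p u$ (plus, if necessary, one extra application of Lemma~\ref{lemma A.6} with $\mathcal{A}=\n_p$ to handle $\n_p^2 u$, using $\n_p L_\alpha u = \alpha\n_p^2 u - \n_p\n_q u$).

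The five remaining terms in the commutator pairing all involve the globally bounded matrices $\n^2 U$ or $\n^2 V$ together with at least one factor of $\n_p u$; each is treated by Cauchy-Schwarz followed by Young's inequality of the form $\alpha|\text{coef}||\n_q u||\n_p u|\leq \frac{\alpha\nu}{8}|\n_q u|^2 + C|\n_p u|^2$, with the first summand absorbed into the coercive dissipation and the second controlled by~\eqref{eq: estimation dp}. Similarly, the purely $\n_p u$ terms are directly controlled by~\eqref{eq: estimation dp}.

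Closing the estimate is then a matter of selecting parameters in the right order: $\alpha>0$ small enough that the coercive $-\alpha\nu|\n_q u|^2$ dominates all terms absorbed into it; $\zeta<2\lambda_r$ small and $s$ sufficiently large so that the constant $\omega_s+\gamma\|\Delta U\|_{L^\infty}+\zeta$ in Lemma~\ref{lemma A.6} remains under control; and finally $G_\nu\leq\rho_s$ with $\rho_s>0$ sufficiently small so that the perturbation contributions coming from $\n^2(U-U_\nu)$ via the integration by parts are strictly dominated by the coercive dissipation. The explicit threshold $\rho_s$ will then be recorded as~\eqref{eq: def rho m}. Once $\int_0^T e^{\zeta t}\int|\n_q u|^2 \pi_s$ is controlled by $\|A\|_{W^{1,\infty}_{\K_r}}^2$, combining with~\eqref{eq: estimation dp} and the identity $|L_\alpha u|^2 \leq 2|\n_q u|^2 + 2\alpha^2|\n_p u|^2$ yields the $|L_\alpha u|^2$ bound in~\eqref{eq: estimation operator L}, while the $|\n_p L_\alpha u|^2$ bound follows from the dissipative term already on the left-hand side of Lemma~\ref{lemma A.6}.
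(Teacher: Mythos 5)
Your overall strategy is the right one: apply Lemma~\ref{lemma A.6} with $\mathcal{A}=L_{\alpha,i}$, split $U = U_\nu + (U-U_\nu)$ to extract coercivity from $\nabla^2 U_\nu \geq \nu$, and integrate by parts in $p$ on the $\nabla^2(U-U_\nu)$ contribution so as to trade the uncontrolled Hessian for the bounded gradient $\nabla(U-U_\nu)$. But the specific grouping you choose for the commutator leads to a gap that the paper's grouping is specifically designed to avoid. You keep $[L_\alpha,\LopU]$ in the form $\alpha\nabla^2 U\,\nabla_q - \alpha\gamma\nabla^2 U\,\nabla_p + \nabla^2 V\,\nabla_p$ and take $-\alpha\nabla_q u\cdot \nabla^2 U\nabla_q u$ as the dissipative term. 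The paper instead rewrites $[L_{\alpha,i},\LopU]=-\alpha(\nabla\partial_{p_i}U)\cdot L_\alpha -\alpha(\gamma-\alpha)(\nabla\partial_{p_i}U)\cdot\nabla_p +(\nabla\partial_{q_i}V)\cdot\nabla_p$, so the coercive quadratic form is $-\alpha L_\alpha u^T\nabla^2 U L_\alpha u$. This is not cosmetic: after integration by parts, the paper's form produces exactly $L_\alpha u$, $\nabla_p L_\alpha u$ and $\nabla\pi_s\cdot L_\alpha u$, so the remainder is absorbed directly into the $\tfrac{2\gamma}{\beta}\int|\nabla_p L_\alpha u|^2$ dissipation already sitting on the left of Lemma~\ref{lemma A.6}. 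With your grouping, the integration by parts produces $\nabla_p\nabla_q u$, and since $\nabla_p\nabla_q u = \alpha\nabla_p^2 u - \nabla_p L_\alpha u$, you need a separate control of $\int|\nabla_p^2 u|^2$. You propose obtaining it from Lemma~\ref{lemma A.6} with $\mathcal{A}=\nabla_p$, but that application's commutator $[\nabla_p,\LopU]=\nabla^2 U(\nabla_q-\gamma\nabla_p)$ pairs $\nabla_p u$ against $\nabla^2 U\,\nabla_q u$, and bounding it requires $\int|\nabla_q u|^2\pi_s$ --- the very quantity you are trying to establish. A coupled small-parameter bootstrap might still close, but your sketch does not notice the circularity nor show how to break it; the $L_\alpha$-aligned grouping is the missing move.

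Your parameter-selection discussion also has two concrete errors. You say \emph{$\alpha$ small enough} for the coercive term to dominate, but the positive term $(\omega_s+\gamma\|\Delta U\|_{L^\infty}+\zeta)\int|L_\alpha u|^2$ from Lemma~\ref{lemma A.6} carries a coefficient that is independent of $\alpha$, so $\alpha$ must be chosen \emph{large} enough to beat it; $\alpha$ must also stay small enough that $\alpha\,G_\nu$ can be absorbed by $\tfrac{2\gamma}{\beta}\int|\nabla_p L_\alpha u|^2$. The resulting sandwich $\frac{\omega_s+\gamma\|\Delta U\|_{L^\infty}^2+\zeta}{2(\nu-\e_3(\mathscr{G}_s+1)G_\nu)}<\alpha<\frac{\gamma\e_3}{\beta G_\nu}$ is precisely what forces the threshold $G_\nu\le\rho_s$ of~\eqref{eq: def rho m}. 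You also say that taking $s$ large keeps $\omega_s+\gamma\|\Delta U\|_{L^\infty}+\zeta$ ``under control'' and that the $\nabla_p\pi_s/\pi_s$ terms are ``harmless for $s$ moderately large''. Both $\omega_s$ and $\mathscr{G}_s$ \emph{grow} with $s$, so increasing $s$ shrinks the admissible interval for $\alpha$ and makes $\rho_s$ smaller, not larger; $s$ is constrained from below only by the integrability needs of Lemma~\ref{lemma A.6} and $L_\alpha A\in\widetilde{\mathscr{W}}^{1,\infty}_{\K_r}$, and the cost of whatever $s$ you use is recorded in $\rho_s$.
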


\begin{proof}
Define $L_{\alpha,i}:=\alpha \du_{p_i}-\du_{q_i}$ for $i\in \left\{1,\ldots, d\right\}$. The commutator of $L_{\alpha,i}$ and $\LopU$ is
\[
\begin{aligned}
\left[L_{\alpha,i}, \LopU \right]&=-\alpha \left(\n_p \du_{p_i}\U\right)\cdot \left(\gamma \n_p-\n_q\right)  + \left(\n_q \du_{q_i}V\right)\cdot \n_p \\
&=-\alpha \left(\n_p \du_{p_i}\U\right)\cdot L_\alpha -\alpha \left(\gamma-\alpha\right)\left(\n_p \du_{p_i}\U\right) \cdot  \n_p  + \left(\n_q \du_{q_i}V\right)\cdot\n_p \ . \label{eq: [L_i,Lop]}
\end{aligned}
\]
Introducing $C_V:=\sup_{i,j=1,\ldots,d}\left\|\du_{q_iq_j}^2V\right\|_{L^{\infty}}$, a simple computation shows that
\[
\begin{aligned}
 &2 \sum_{i=1}^d L_{\alpha,i} u(t)[L_{\alpha,i}, \LopU ]u(t) =\\
&\quad=  -2\alpha \sum_{i=1}^d  \sum_{j=1}^{d}	L_iu(t)\left(\du_{p_j} \du_{p_i}U\right) L_{\alpha,j} u(t)
\label{eq: computation [L,L] new} \\
& \qquad + 2\alpha(\alpha-\gamma) \sum_{i=1}^d L_{\alpha,i} u(t) \left(\n \du_{p_i} U\right)\cdot \n_p u(t)
+ 2\sum_{i=1}^dL_{\alpha,i} u(t) \n\left( \du_{q_i} V\right)\cdot \n_p u(t) \\
& \quad\leq  \left( \e_1+\e_2-2\nu\alpha\right) \left|L_\alpha u(t)\right|^2
- 2\alpha  \sum_{i=1}^d\sum_{j=1}^{d} L_{\alpha,i}u(t) \du_{p_j} \du_{p_i}(U -  U_{\nu}) L_{\alpha,j} u(t) \\
& \qquad +\frac{\alpha^2 (\gamma-\alpha)^2}{\e_1}\left( \sup_{\left|j\right|=2}\left\|\du^j U\right\|_{L^{\infty}}^2 \right) \left|\n_p u(t)\right|^2 + \frac{C_V^2}{\e_2}\left|\n_p u(t)\right|^2
\end{aligned}
\]
for any $\e_1, \e_2>0$. With this preliminary computation, we can now choose $\mathcal{A}=L_{\alpha,i}$ in Lemma~\ref{lemma A.6} and sum over $i=1,\dots,d$: for $s \geq s_*$ with $s_*$ sufficiently large,
\begin{align}
& \exp(\zeta T)\int_{\mathcal{E}}\left|L_{\alpha} u(t)\right|^2 \pi_s \, dq \, dp + \frac{2\gamma}{\beta}\int_{0}^T\exp(\zeta t) \left( \int_{\mathcal{E}}\sum_{i=1}^d\left|\n_p L_{\alpha,i} u(t)\right|^2 \pi_s \, dq \, dp \right) dt  \nonumber \\
&\quad \leq\int_{\mathcal{E}} \left| L_{\alpha} u(0)\right|^2 \pi_s \, dq \, dp \nonumber \\
&\quad   +\left(\omega_s +\gamma\left\|\de U\right\|_{L^{\infty}}^2+\zeta+\e_1+\e_2-2\nu\alpha\right)\int_{0}^T\exp(\zeta t) \left( \int_{\mathcal{E}} \left| L_{\alpha} u(t)\right|^2 \pi_s \, dq \, dp \right) dt  \label{eq: ipp 2} \\
&\quad   -2\alpha\int_{0}^T\exp(\zeta t) \left( \int_{\mathcal{E}} L_\alpha u(t)^T \Big[\nabla^2 (U -  U_{\nu})\Big] L_\alpha u(t) \, \pi_s \, dq \, dp \right)dt \label{eq: ipp 3}\\
&\quad   +\frac{\alpha^2 (\gamma-\alpha)^2}{\e_1} \left(\sup_{\left|j\right|=2}\left\|\du^jU\right\|_{L^{\infty}}^2\right)\int_{0}^T\exp(\zeta t) \left( \int_{\El} \left|\n_pu(t)\right|^2 \pi_s \, dq \, dp \right) dt
\nonumber \\
&\quad   +\frac{C_V^2}{\e_2}\int_{0}^T\exp(\zeta t) \left( \int_{\El}\left|\n_pu(t)\right|^2 \pi_s \, dq \, dp \right) dt \nonumber \, .
\end{align}
Since $L_\alpha u(0) = (\alpha \n_p-\n_q) A\in \widetilde{\mathscr{W}}^{1,\infty}_{\K_{r}}$ for some integer $r \leq s_*$ (upon increasing $s_*$), and in view of~\eqref{eq: estimation dp}, the first and the two last terms of the right hand side of the above inequality can be controlled uniformly in time for $\zeta < 2\lambda_r$.

It remains to take care of the terms~\eqref{eq: ipp 2} and~\eqref{eq: ipp 3}. Our strategy is to prove that they are negative when $\zeta< 2\lambda_r$, and can hence be transfered to the lef-hand side of the inequality. To simplify the notation, we denote $\widetilde{U}:=\U- U_{\nu}$. Recall that, by Assumption~\ref{condition dnU-dnUnu}, it holds $\|\n\widetilde{U}\|_{L^{\infty}}\leq G_{\nu}$.  An integration by parts shows that
\[
\begin{aligned}
&-\int_{\mathcal{E}}  \sum_{j=1}^d L_{\alpha,i}u(t)\, \left(\du_{p_j}\du_{ p_i} \widetilde{U}\right) L_{\alpha,j}u(t)\,\pi_s \, dq \, dp
=\int_{\mathcal{E}}\left( \nabla \widetilde{U}\cdot L_\alpha u(t) \right) \mathrm{div}_p\left(L_\alpha u(t)\right) \, \pi_s \, dq \, dp \\
&\quad+\int_{\mathcal{E}} \sum_{j=1}^d \left(\du_{p_i} \widetilde{U}\right) L_{\alpha,j}u(t) \Big(\du_{p_j} L_{\alpha,i}u(t)\Big) \pi_s \, dq \, dp +\int_{\mathcal{E}}\sum_{j=1}^d \left( \nabla \widetilde{U} \cdot L_\alpha u(t)\right)\Big( \nabla \pi_s \cdot L_\alpha u(t)\Big) dq\, dp.
\end{aligned}
\]
With this expression we now estimate the term (\ref{eq: ipp 3}) by
\[
\begin{aligned}
&-2\alpha\int_{0}^T\mathrm{exp}(\zeta t)\left(\int_{\mathcal{E}}\sum_{i=1}^d\sum_{j=1}^d L_iu(t)\left(\du_{p_j}\du_{ p_i} \widetilde{U}\right) L_ju(t)\, \pi_s \, dq \, dp \right) dt \\
&\qquad\qquad \leq 2(1+\mathscr{G}_s)\alpha\e_3\left\| \n\widetilde{U}\right\|_{L^{\infty}}\int_{0}^T\mathrm{exp}(\zeta t) \left(\int_{\mathcal{E}}\left|Lu(t)\right|^2\pi_s \, dq \, dp \right)dt\\
&\qquad \qquad \ \ \ +\frac{2\alpha}{\e_3}\left\| \n\widetilde{U}\right\|_{L^{\infty}}\int_{0}^T\mathrm{exp}(\zeta t)\left(\int_{\mathcal{E}}\sum_{i=1}^d \left|\n_p L_i u(t)\right|^2\pi_s \, dq \, dp \right) dt,
\label{eq: integration par partie}
\end{aligned}
\]
where we have used Young's inequality and introduced a constant $\mathscr{G}_s \in \mathbb{R}_+$ such that $\ds \left|\n_p\pi_s\right|\leq\mathscr{G}_s\pi_s$.

The following conditions are therefore sufficient to ensure that~\eqref{eq: ipp 2} and~\eqref{eq: ipp 3} are non-positive when $\zeta< 2\lambda_r$: there exists $\alpha>0$ such that
\[
\omega_s +\gamma\left\|\de U\right\|_{L^{\infty}}^2+\zeta-2\nu\alpha+2\alpha\e_3(\mathscr{G}_s+1)\left\| \n\widetilde{U}\right\|_{L^{\infty}}<0
\]
and
\[
\ds\frac{2\gamma}{\beta}>\frac{2\alpha}{\e_3}  \left\| \n\widetilde{U}\right\|_{L^{\infty}}.
\]
These conditions can be restated as
\[
\frac{\omega_s +\gamma\left\|\de U\right\|_{L^{\infty}}^2+\zeta}{2\left(\nu-\e_3(\mathscr{G}_s+1)\left\| \n\widetilde{U}\right\|_{L^{\infty}}\right)} < \alpha < \frac{\gamma \e_3}{\beta \left\| \n\widetilde{U}\right\|_{L^{\infty}}}.
\]
Since $\zeta$ can be chosen arbitrarily small (while still being positive), the latter condition holds provided $\left\| \n\widetilde{U}\right\|_{L^{\infty}}$:
\[
\left\| \n\widetilde{U}\right\|_{L^{\infty}}<\frac{2\gamma\nu\e_3}{\beta(\omega_s +\gamma\left\|\de U\right\|_{L^{\infty}}^2+\zeta) + 2\gamma(\mathscr{G}_s+1)\e_3^2}.
\]
After optimization with respect to $\e_3$, this leads to the final condition
\[
\left\| \n\widetilde{U}\right\|_{L^{\infty}}<\sqrt{\frac{\nu^2 \gamma}{2\beta(\omega_s +\gamma\left\|\de U\right\|_{L^{\infty}}^2+\zeta)(\mathscr{G}_s+1)}}.
\]
In conclusion, defining
\begin{equation}
\rho_s = \sqrt{\frac{\nu^2 \gamma}{2\beta(\omega_s +\gamma\left\|\de U\right\|_{L^{\infty}}^2)(\mathscr{G}_s+1)}},
\label{eq: def rho m}
\end{equation}
we see that the estimate~\eqref{eq: estimation operator L} holds when the constant $G_{\nu}$ from Assumption~\ref{hyp THM} satisfies $G_\nu \leq \rho_s$.
\end{proof}

The remainder of the proof of Lemma~\ref{proposition Kopec 1} is very similar to the corresponding proof in~\cite{Kopec}. We first combine~\eqref{eq: estimation dp} and Lemma~\ref{lemma estimation operator A1}: there exists $s_* \in \mathbb{N}$ such that for $s \geq s_*$ there exists an integer~$r$, a sufficiently small $\zeta<2\lambda_r$ and $\rho_s > 0$ such that if $G_{\nu}\leq\rho_s$, then there is a constant $C > 0$ for which
\begin{equation}
\forall T \geq 0,
\qquad
\int_0^T\mathrm{exp}(\zeta t)\left(\int_{\mathcal{E}}\left|\n_q u(t)\right|^2 \pi_s \, dq \, dp \right)dt \leq C \left\|A\right\|_{W^{1,\infty}_{ \K_{r}}}^2.
\label{eq: nq}
\end{equation}
We can now again apply Lemma~\ref{lemma A.6}, and sum the estimates obtained with $\mathcal{A}=\du_{p_i}$. Before stating the result, we bound the integrand of the term involving the commutator $\left[\du_{p_i},\LopU \right]$ (for $i=1, \ldots, d$) as:
\[
\begin{aligned}
\left| \sum_{i=1}^d\left[\du_{p_i},\LopU \right]u(t)\du_{p_i}u(t) \right| & = \left| \sum_{i=1}^d \nabla_p(\du_{p_i}U) \cdot (\nabla_q -\gamma\nabla_p)u(t) \, \du_{p_i}u(t) \right| \\
& \leq d \left( \sup_{\left|j\right|=2}\left\|\du^j U\right\|_{L^{\infty}}^2 \right) \left|(\nabla_q -\gamma\nabla_p)u(t)\right| \, |\nabla_p u(t)| \\
& \leq d \left( \sup_{\left|j\right|=2}\left\|\du^j U\right\|_{L^{\infty}}^2 \right) \left[\frac12 |\nabla_q u(t)|^2 + \left(\gamma+\frac12\right) |\nabla_p u(t)|^2\right].
\end{aligned}
\]
Then, for $s \geq s_*$ (with $s_*$ sufficiently large) and for all $T \geq 0$,
\[
\begin{aligned}
& \mathrm{exp}(\zeta T)\int_{\mathcal{E}}\left|\n_p u(t)\right|^2 \pi_s \, dq \, dp \\
& \leq \int_{\mathcal{E}}\left| \n_pu(0)\right|^2 \pi_s \, dq \, dp + d\left(\sup_{\left|j\right|=2}\left\|\du^j U\right\|_{L^{\infty}}\right)\int_{0}^T\mathrm{exp}(\zeta t)\left(\int_{\mathcal{E}} \left|\n_{q}u(t)\right|^2 \pi_s \, dq \, dp \right) dt \\
& \ \ \ +  \left(\omega_s +\zeta + \sup_{\left|j\right|=2} \left\|\du^j U\right\|_{L^{\infty}} \left(d+2\gamma d+\gamma\right)\right)\int_0^T  \mathrm{exp}(\zeta t) \left(\int_{\mathcal{E}}\left|\n_pu(t)\right|^2 \pi_{s} \, dq\,dp\right) dt.
\end{aligned}
\]
In view of~\eqref{eq: estimation dp} and~\eqref{eq: nq} and since $\n_pu(0)=\n_pA\in \widetilde{W}^{1,\infty}_{\K_{\widetilde{r}}}$ for some integer $\widetilde{r}\in\mathbb{N}$, we see that there exists $s_* \geq 1$ sufficiently large such that, for any $s \geq s_*$ and $\zeta>0$ sufficiently small, and provided $G_\nu \leq \rho_s$, there is a constant $C > 0$ and an integer~$r$ for which
\[
\int_{\mathcal{E}}\left|\n_p u(t)\right|^2 \pi_{s}\, dq \, dp \leq C \left\| A\right\|_{W^{1,\infty}_{\K_r}}^2 \mathrm{exp}(-\zeta T).
\]
To conclude to Lemma~\ref{proposition Kopec 1} for $n=1$, it remains to apply Lemma \ref{lemma A.6} with $\mathcal{A}=\n_q$ in order to obtain an estimate similar to the one above, but for $\left|\n_q u(t)\right|^2$. This is possible in view of the following bounds on the commutator: for all $i=1, \ldots, d$,
\[
\Big| \left[\du_{q_i},\LopU \right]u(t)\du_{q_i}u(t) \Big| = \left| \n \left(\du_{q_i}V\right)\cdot \n_p u(t)\du_{q_i}u(t) \right| \leq \left|\n_{p}u(t)\right|^2+C_V^2\left|\du_{q_i}u(t)\right|^2.
\]

\paragraph{General $n$.}
The remainder of the proof is done by induction of $n$ and relies on the control of the commutators  $\left[\du^k_q,\LopU \right]$ with $\left|k\right|=n$, which are independent of $\U $, as well as
\[
\left|\left[\du^k_p,\LopU \right]\psi\right|\leq \sum_{\substack{i\in \mathbb{N}^{2d}\\ \left|i\right|\leq n}}\mathcal{P}_i\left|\du^i\psi\right|,
\]
where $\mathcal{P}_i$ are positive polynomial functions that depend on the polynomial growth of $\U$ and its derivatives. These polynomial functions can be controlled with Lyapunov weights for sufficiently large indices. In addition, the same approach as in the proof of Lemma~\ref{lemma estimation operator A1} is used to estimate the extra term arising from missing positivity of $\nabla^2 U$, namely
\[
\begin{aligned}
&-2\alpha\int_{0}^T\mathrm{exp}(\zeta t)\left(\int_{\mathcal{E}}\sum_{i=1}^d\sum_{j=1}^d L_{\alpha,i}\left(\du^n_qu(t)\right) \left[ \du_{p_ip_j}^2 \left(\U -  U_{\nu}\right) \right]L_{\alpha,j}\left(\du^n_qu(t)\right)\pi_s \, dq \, dp \right) dt  \\
&\qquad\leq 2(1+\mathscr{G}_s)\alpha\e_3\left\| \n\widetilde{U}\right\|_{L^{\infty}}\int_{0}^T\mathrm{exp}(\zeta t)\left(\int_{\mathcal{E}}\left|L_\alpha\left(\du^n_qu(t)\right)\right|^2\pi_s \, dq \, dp \right) dt \\
&\qquad \ \ \ +\frac{2\alpha}{\e_3}\left\| \n\widetilde{U}\right\|_{L^{\infty}}\int_{0}^T\mathrm{exp}(\zeta t) \left(\int_{\mathcal{E}}\left|\n_p L_\alpha\left(\du^n_qu(t)\right)\right|^2\pi_s \, dq \, dp \right) dt.
\end{aligned}
\]
Therefore, the result is obtained when the same condition~\eqref{eq: def rho m} on $G_{\nu}$ is satisfied. Note however that this condition depends on~$s$, hence on~$n$ since $s$ has to be larger than some index~$s_n$.

\subsection{Proof of Proposition \ref{proposition variance perturbation}}
\label{proof of the proposition variance perturbation}

\subsubsection{General structure of the proof}

We define the AR perturbation function as
\[
D_{\er}(p):=\n U_{0, \ef}(p)-\n U_{\er, \ef}(p)\,.
\]
This allows to write the generator $\Lopgen_{\er,\ef}$ of the AR-Langevin dynamics~\eqref{eq: generator modified Langevin} as a perturbation of the generator $\Lopgen_{0, \ef}$:
\[
\Lopgen_{\er, \ef} = \Lopgen_{0, \ef} - D_{\er}(p)\cdot \Lpert, \qquad \Lpert := \n_q-\gamma \n_p.
\]
For notational convenience we omit the subscript $\ef$ and simply write $\Lop_{\er}:=\Lopgen_{\er,\ef}$. We also denote by $\mu_{\er}$ the invariant measure associated with $\Lop_{\er}$, and by $\Pi_{\er}$ the projection
\[
\Pi_{\er}f = f - \int_\El f \, d\mu_{\er}.
\]

For a given observable $A\in\mathscr{S}$, the asymptotic variance associated with the corresponding time averages reads, in view of~\eqref{eq: variance expression}:
\begin{equation}
\ds\sigma^2_A(\er) = -2\int_{\El} \Phi_{A,\er} A \, d\mu_{\er},
\label{eq: variance F}
\end{equation}
where $\Phi_{A,\er}\in \mathscr{S}$ is the unique solution in $L^\infty_{\K_s}$ ($s$ being such that $A \in L^\infty_{\K_s}$) of the following Poisson equation:
\begin{equation}
\Lop_{\er} \Phi_{A,\er}=\Pi_{\er}A, \qquad \Pi_{\er} \Phi_{A,\er} = 0.
\label{eq: poisson F}
\end{equation}
Similarly, the limiting variance for $\er = 0$ can be rewritten as
\begin{equation}
\sigma^2_A(0) = -2\int_{\El} \Phi_{A,0}A \, d\mu_{0},
\qquad
\Lop_0 \Phi_{A,0}=\Pi_{0}A,
\qquad
\Pi_0 \Phi_{A,0} = 0.
\label{eq: original variance}
\end{equation}

In order to prove the convergence of~\eqref{eq: variance F} to~\eqref{eq: original variance} and to identify the linear term in~$\er$, the idea is to expand $\mu_{\er}$ and $\Phi_{A,\er}$ in powers of $\er$. To this end, we rewrite the Poisson equation~\eqref{eq: poisson F} as
\[
\Pi_0 \left(\Pi_0 - \Lop_0^{-1}\Pi_0 D_{\er}\cdot \Lpert\right) \Phi_{A,\er} = \Lop_0 ^{-1}\Pi_0 A.
\]
The operator $\Lop_0^{-1}\Pi_0 D_{\er}\cdot \Lpert$ is not bounded (since $\Lpert$ contains derivatives in~$q$, which cannot be controlled by~$\Lop_0$), so that it is not possible to write the inverse of $\Pi_0 - \Lop_0^{-1}\Pi_0 D_{\er}\cdot \Lpert$ as some Neumann series. It is however possible to consider a pseudo-inverse operator by truncating the Neumann series at order~$n$. This motivates the introduction of the following approximation of the solution of~\eqref{eq: poisson F}:
\[
\Phi^{n}_{A,\er} := \sum_{k=0}^{n}\left(\Lop_0^{-1}\Pi_0D_{\er}\cdot \Lpert\right)^k \Lop_0^{-1}\Pi_0A \ .
\]
The corresponding approximation of the variance reads
\begin{equation}
\sigma^2_{A,n}(\er) := -2\int_{\El} \left(\Pi_{\er}\Phi_{A,\er}^n\right)A  \, d\mu_{\er}.
\label{def truncated variance}
\end{equation}
The connection with the exact variance~\eqref{eq: variance F} is given by the following lemma, which is proved in Section~\ref{proof of theorem truncated variance}. We introduce a critical value $\ef^*$ such that Assumption~\ref{hyp THM} is satisfied for $0 \leq \er \leq \ef/2$ and $\ef \leq \ef^*$ (see Section~\ref{proof of lemma arps condition}). This allows to resort to Lemma~\ref{main theorem kopec}.

\begin{lemma}
Fix $0 < \ef \leq \ef^*$. Then, for any $A\in \mathscr{S}$ and for all $n\geq 1$, there exists a constant $C_{A,n} >0$ such that
\[
\forall \ 0\leq\er\leq \frac{\ef}{2},
\qquad
\left|\sigma^2_{A}(\er)-\sigma^{2}_{A,n}(\er)\right|\leq C_{A,n} \er^{n+1}.
\]
\label{theorem truncated variance}
\end{lemma}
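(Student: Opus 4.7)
The plan is to write $\Phi_{A,\er}$ as the truncated Neumann expansion $\Phi^n_{A,\er}$ plus an explicit remainder of order $\er^{n+1}$, and then to pair this remainder against $\Pi_\er A$ under~$\mu_\er$ to control the variance difference.

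First I would use $\Lop_\er = \Lop_0 - D_\er \cdot \Lpert$ to rewrite the Poisson equation as $\Lop_0 \Phi_{A,\er} = \Pi_\er A + D_\er \cdot \Lpert \Phi_{A,\er}$. Both sides have zero $\mu_0$-mean (the left-hand side because $\int \Lop_0 f \, d\mu_0 = 0$), so applying $\Lop_0^{-1}\Pi_0$ and using $\Pi_0 \Pi_\er A = \Pi_0 A$ (the two projections differ by an additive constant) together with $\Lpert(\mathrm{const}) = 0$, one obtains, up to an additive constant that is irrelevant for the variance formula,
\[
\Phi_{A,\er} = \Lop_0^{-1}\Pi_0 A + \left(\Lop_0^{-1}\Pi_0 D_\er \cdot \Lpert\right) \Phi_{A,\er}.
\]
Iterating this identity $n+1$ times produces $\Phi_{A,\er} = \Phi^n_{A,\er} + R_{\er,n}$ with remainder $R_{\er,n} := (\Lop_0^{-1}\Pi_0 D_\er \cdot \Lpert)^{n+1} \Phi_{A,\er}$, so that
\[
\sigma^2_A(\er) - \sigma^2_{A,n}(\er) = -2\int_\El R_{\er,n} \, \Pi_\er A \, d\mu_\er.
\]
Everything thus reduces to bounding $R_{\er,n}$ in a weighted $L^\infty$ norm paired with $\Pi_\er A$ in $L^1(\mu_\er)$.

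To estimate $R_{\er,n}$, two ingredients are required. First, a perturbation bound $\|D_\er\|_{W^{k,\infty}} \leq C_k \er$ for each $k$ and $0 \leq \er \leq \ef/2$: from $f_{\er,\ef} = f_{0,\ef} \circ \theta_\er$ and the explicit form~\eqref{eq: definition theta}, one checks that $|\theta_\er(x) - x| \leq C(\ef)\er$ uniformly on $[0,\ef]$, and the smoothness of $f_{0,\ef}$ propagates this to the gradients of $U_{\er,\ef} - U_{0,\ef}$ on the compact set where $D_\er$ is supported. Hence $D_\er \cdot \Lpert$ is a bounded operator $\mathscr{W}^{k+1,\infty}_{\K_s} \to \mathscr{W}^{k,\infty}_{\K_s}$ of norm $\mathrm{O}(\er)$. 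Second, under the standing hypothesis $\ef \leq \ef^*$, Assumption~\ref{hyp THM} holds uniformly for $U_{\er,\ef}$ over $\er \in [0, \ef/2]$ (see Section~\ref{proof of lemma arps condition}), so Corollary~\ref{corollary convergence thm} applies both to $\Lop_0$ and to $\Lop_\er$ with constants independent of $\er$: this yields uniform regularity of $\Phi_{A,\er} = \Lop_\er^{-1}\Pi_\er A$ in weighted Sobolev spaces $\mathscr{W}^{N,\infty}_{\K_{s_N}}$ for every $N$, together with the required mapping estimates for $\Lop_0^{-1}\Pi_0$. Combining the two, each composition $\Lop_0^{-1}\Pi_0 D_\er \cdot \Lpert$ gains a factor $\er$ while costing a bounded number of derivatives and a controlled shift of Lyapunov index; choosing the initial regularity $N$ large enough to absorb $n+1$ iterations yields $\|R_{\er,n}\|_{L^\infty_{\K_s}} \leq C_{A,n} \er^{n+1}$. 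Since $A \in \mathscr{S}$ grows polynomially while $\mu_\er$ has super-polynomial decay, $\|\K_s \Pi_\er A\|_{L^1(\mu_\er)}$ is bounded uniformly in $\er$, and the claim follows.

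The main obstacle is the bookkeeping in the last step: one must verify that the Lyapunov indices and orders of differentiation produced by $n+1$ successive applications of Lemma~\ref{main theorem kopec} remain finite, and especially that the constants of Corollary~\ref{corollary convergence thm} can be taken uniformly in $\er \in [0, \ef/2]$. The latter uniformity hinges on the fact that, under $\ef \leq \ef^*$, the quantities $\omega_s$, $\sup_{|j|=2}\|\partial^j U_{\er,\ef}\|_{L^\infty}$, $\mathscr{G}_s$ and $\|\nabla(U_{\er,\ef} - U_{\mathrm{std}})\|_{L^\infty}$ entering the threshold~\eqref{eq: def rho m} are all bounded independently of $\er$, which is precisely why the restriction $\ef \leq \ef^*$ is imposed.
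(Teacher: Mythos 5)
Your proof is correct in substance but takes a different route from the paper's, and the difference matters for the strength of the auxiliary estimates you need to invoke.

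You decompose $\Pi_0\Phi_{A,\er}-\Phi_{A,\er}^n$ directly as the Neumann remainder $R_{\er,n}=(\Lop_0^{-1}\Pi_0 D_{\er}\cdot\Lpert)^{n+1}\Pi_0\Phi_{A,\er}$, and then bound $R_{\er,n}$ by repeatedly applying the composition $\Lop_0^{-1}\Pi_0 D_{\er}\cdot\Lpert$ to $\Pi_0\Phi_{A,\er}$. Since the innermost argument is the $\er$-dependent function $\Phi_{A,\er}=\Lop_{\er}^{-1}\Pi_{\er}A$, your argument requires \emph{uniform-in-$\er$ estimates on $\Phi_{A,\er}$ in high-order weighted Sobolev spaces} $\mathscr{W}^{N,\infty}_{\K_{s_N}}$, which means invoking the second part of Corollary~\ref{corollary convergence thm} (hence the full Lemma~\ref{main theorem kopec}) for $\Lop_{\er}$ with constants independent of $\er$. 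You correctly identify this as the main obstacle and note that it hinges on the uniformity of $\omega_s$, $\mathscr{G}_s$, etc.\ for $\ef\leq\ef^*$; this can be checked, but it is not done explicitly in the paper and adds a nontrivial layer of bookkeeping.

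The paper's proof avoids this extra work by applying $\Pi_{\er}\Lop_{\er}$ to $\Phi_{A,\er}-\Phi_{A,\er}^n$ \emph{before} estimating. This yields the closed-form expression $\Pi_{\er}(D_{\er}\cdot\Lpert)\big(\Lop_0^{-1}\Pi_0 D_{\er}\cdot\Lpert\big)^n\Lop_0^{-1}\Pi_0 A$, in which all operator compositions act on the fixed observable $A$ only, and $\Lop_{\er}$ appears nowhere inside the composition. One then bounds this quantity by Lemma~\ref{corollary: estimate LopD n} (which is purely a $\Lop_0$ statement) together with $\|D_{\er}\|_{L^{\infty}}=\mathrm{O}(\er)$, and finally applies $\Lop_{\er}^{-1}$ \emph{once}, using only its $\mathcal{B}\big(\widetilde{L^{\infty}_{\K_s}}\big)$ operator norm bound~\eqref{eq: corollary convergence thm}, whose uniformity in $\er$ follows immediately from Theorem~\ref{thm uniform convergence} under the easier Assumption~\ref{hyp mod Langevin ergodicity}. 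In short: both decompositions are algebraically equivalent, but the paper's ordering of the algebra relegates the $\er$-dependent solution operator to the very last step, where only the weakest (and manifestly uniform) resolvent bound is required, whereas yours requires uniform control of arbitrarily high derivatives of $\Phi_{A,\er}$, i.e.\ the full strength of the regularity estimates for $\Lop_{\er}$. Your version is a valid but heavier path; if you wish to keep it, you should state and prove (or at least cite explicitly the parts of the proof of Lemma~\ref{main theorem kopec} that justify) the uniform-in-$\er$ version of Lemma~\ref{main theorem kopec}, rather than treating it as a remark.
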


The key point in the proof of Lemma~\ref{theorem truncated variance} are the following estimates (see Section \ref{proof of Lemma estimate LopD n} for the proof).

\begin{lemma}
\label{corollary: estimate LopD n}
Fix $0 < \ef \leq \ef^*$ and $A\in \mathscr{S}$. For any $n\geq 1$, there exist $s_n,l_n\in \mathbb{N}$ such that, for any $s\geq s_n$, there is $r_n \in\mathbb{N}$ and $\tilde{C}_n>0$ for which
\[
\forall \, 0 \leq \er \leq \frac{\ef}{2},
\qquad
\left\| \left(\LopZero^{-1}\Pi_0 D_{\er}\cdot \Lpert\right)^n \Pi_0 A \right\|_{L^{\infty}_{\K_{s}}}\leq  \tilde{C}_n \er^n \left\| A\right\|_{W^{l_n,\infty}_{\K_{r_n}}} .
\label{eq: estimate LopD n}
\]
\end{lemma}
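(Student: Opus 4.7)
The plan is to prove the lemma by induction on $n \geq 1$. To make the induction actually work, I would strengthen the statement to also control a growing number of derivatives of
$B_n := (\LopZero^{-1}\Pi_0 D_{\er}\cdot \Lpert)^n \Pi_0 A$:
for each $n$ there exist integers $m_n, s_n, l_n, r_n$ and a constant $\tilde{C}_n>0$ such that
\[
\|B_n\|_{W^{m_n,\infty}_{\K_{s_n}}} \leq \tilde{C}_n \, \er^n \, \|A\|_{W^{l_n,\infty}_{\K_{r_n}}},
\]
for all $0 \leq \er \leq \ef/2$. The conclusion of the lemma then corresponds to the case $m_n = 0$. The two ingredients that drive the induction are (i) a Sobolev-type bound on $D_\er$ that is $\mathrm{O}(\er)$ in every derivative order, and (ii) the resolvent regularity estimates of Corollary~\ref{corollary convergence thm}.

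The preliminary estimate is that for every multi-index $k$, there is $C_k>0$ with
\[
\|\du^k D_{\er}\|_{L^{\infty}} \leq C_k \, \er, \qquad 0 \leq \er \leq \ef/2,
\]
and $D_{\er}$ is supported in the bounded set $\{(p_1,\dots,p_N) : |p_i|^2/(2m_i) \leq \ef,\ i=1,\dots,N\}$. To prove this, I would compute directly from Definition~\ref{definition ARPS Langevin}: the shift satisfies $\|\theta_{\er}-\mathrm{id}\|_{L^{\infty}([0,\ef])} \leq \er$ and the scaling factor $\ef/(\ef-\er)$ equals $1+\mathrm{O}(\er)$; moreover, the flatness conditions $f_{0,\ef}^{(n)}(0)=f_{0,\ef}^{(n)}(\ef)=0$ for $n\geq 1$ guarantee that $f_{\er,\ef}=f_{0,\ef}\circ\theta_{\er}$ remains $C^{\infty}$ across the transition points. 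A Taylor expansion then yields $\|\du^k(f_{\er,\ef}-f_{0,\ef})\|_{L^{\infty}} = \mathrm{O}(\er)$ for every $k$, from which the claim on $D_\er$ follows since $D_\er = \nabla_p(U_{0,\ef}-U_{\er,\ef})$.

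For the base case $n=1$, write $B_1 = \LopZero^{-1}\Pi_0[D_\er\cdot \Lpert A]$. The Leibniz rule, together with the preliminary bound, gives $\|D_\er\cdot \Lpert A\|_{W^{\tilde n,\infty}_{\K_r}} \leq C\er \, \|A\|_{W^{\tilde n+1,\infty}_{\K_r}}$, and applying Corollary~\ref{corollary convergence thm} (estimate~\eqref{eq: functional estimates Kopec_resolvent}) with appropriately chosen indices produces the required $W^{m_1,\infty}_{\K_{s_1}}$ bound on $B_1$. For the inductive step, I write $B_{n+1} = \LopZero^{-1}\Pi_0[D_\er\cdot \Lpert B_n]$; since $\Lpert$ consumes exactly one derivative, Leibniz plus the induction hypothesis yields $\|D_\er\cdot \Lpert B_n\|_{W^{m_n-1,\infty}_{\K_{s_n}}} \leq C \er^{n+1} \|A\|_{W^{l_n,\infty}_{\K_{r_n}}}$. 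Choosing $m_n$ larger than the value of $\tilde n$ required by~\eqref{eq: functional estimates Kopec_resolvent} to control $m_{n+1}$ derivatives of the output, one more application of Corollary~\ref{corollary convergence thm} closes the induction.

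The main obstacle is the careful bookkeeping of the four sequences $(m_n,s_n,l_n,r_n)$ so that all the hypotheses of Corollary~\ref{corollary convergence thm} are met simultaneously at every step. In particular, one must verify that the smallness condition $G_{\nu}\leq \rho_{s_n}$ of Lemma~\ref{main theorem kopec} holds for the finitely many indices $s_1,\dots,s_n$ appearing up to level $n$; this is secured by the choice of $\ef^*>0$ (taken small enough in the proof of Lemma~\ref{theorem truncated variance}, see Section~\ref{proof of lemma arps condition}), since with the natural choice $U_\nu = U_{\rm std}$ the quantity $\|\nabla(U_{0,\ef}-U_{\rm std})\|_{L^{\infty}}$ can be made arbitrarily small by taking $\ef$ small, uniformly on the finite collection of indices involved.
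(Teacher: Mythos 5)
Your proof is correct and follows essentially the same path as the paper: an induction on $n$ driven by (i) the uniform $W^{r,\infty}$-smallness of $D_{\er}$, namely $\left\|D_{\er}\right\|_{W^{r,\infty}}=\mathrm{O}(\er)$ together with compact support in $\mathscr{C}_{\ef}$ (this is exactly Corollary~\ref{lemma Ueps minus Uzero intro} together with the estimate~\eqref{eq: DkimKim esti Kmin2}), and (ii) the resolvent regularity bounds~\eqref{eq: corollary convergence thm} and~\eqref{eq: functional estimates Kopec_resolvent}. Your explicit strengthening of the induction hypothesis to control $m_n$ derivatives of $B_n$ is a useful clarification of what the paper leaves implicit in its phrase ``by recurrence,'' and your remark on ensuring $G_{\nu}\leq\rho_{s}$ over the finitely many indices $s$ that arise matches the role of the threshold $\ef^*$ discussed in Section~\ref{proof of lemma arps condition}.
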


Proposition~\ref{proposition variance perturbation} now straightforwardly follows by combining Lemma~\ref{theorem truncated variance} and the following expansion in powers of~$\er$ of the truncated variance (whose proof can be read in Section~\ref{proof of variance perturbation main theorem}).

\begin{proposition}
\label{variance perturbation main theorem}
Fix $0 < \ef \leq \ef^*$. There exists a constant $\mathscr{K}\in \R$ such that, for any $n\geq 1$ and $0\leq \er \leq \ef/2$ sufficiently small,
\[
\sigma_{A,n}^2(\er) = \sigma_A^2(0) + \mathscr{K} \er + \mathrm{O}(\er^2).
\]
\end{proposition}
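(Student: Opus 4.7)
The plan is to first show that only the $k=1$ term of the truncated Neumann series defining $\Phi^n_{A,\er}$ matters at order $\er$, and then to perform a first-order Taylor expansion of the variance. By Lemma~\ref{corollary: estimate LopD n}, each term of order $k \geq 2$ in the sum defining $\Phi^n_{A,\er}$ is bounded by $\tilde C_k \er^k$ in a suitable weighted $L^\infty$ space. Since $A \in \mathscr{S}$ and the momenta have bounded moments under $\mu_\er$ uniformly for $\er \in [0, \ef/2]$, these contributions are $\mathrm{O}(\er^2)$ when integrated against $A\,d\mu_\er$, so that $\sigma^2_{A,n}(\er) = \sigma^2_{A,1}(\er) + \mathrm{O}(\er^2)$. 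It therefore suffices to expand $\sigma^2_{A,1}(\er) = -2\int_{\El} \left(\Pi_\er \Phi^1_{A,\er}\right) A \, d\mu_\er$, where $\Phi^1_{A,\er} = \LopZero^{-1}\Pi_0 A + \LopZero^{-1}\Pi_0 (D_\er \cdot \Lpert)\LopZero^{-1}\Pi_0 A$.

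\textbf{First-order expansion.} Since only the momentum marginal of $\mu_\er$ depends on $\er$ (the position marginal being always $Z_V^{-1}e^{-\beta V(q)} dq$), and the $\er$-dependence in $\Phi^1_{A,\er}$ enters only through $D_\er = \n U_{0,\ef} - \n U_{\er,\ef}$, what remains is to establish expansions of the form $D_\er = \er \widetilde D + \mathrm{O}(\er^2)$ in a weighted $L^\infty$ norm and $\int \varphi \, d\mu_\er = \int \varphi \, d\mu_0 + \er\, \mathcal{M}(\varphi) + \mathrm{O}(\er^2)$ for smooth test functions $\varphi$ of polynomial growth. Substituting these expansions into the expression for $\sigma^2_{A,1}(\er)$, the $\er^0$ contribution reduces to $-2\int (\LopZero^{-1}\Pi_0 A)\, A\, d\mu_0$, which coincides with $\sigma_A^2(0)$ by~\eqref{eq: original variance}, while the $\er^1$ contribution yields an explicit formula for the constant $\mathscr{K}$, and the remaining terms contribute $\mathrm{O}(\er^2)$ by a second application of Lemma~\ref{corollary: estimate LopD n} to control $\Lpert$ applied to $\LopZero^{-1}\Pi_0 A$ in Sobolev spaces with weighted $L^\infty$ norms.

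\textbf{Main obstacle.} The hardest part will be justifying the $\er$-expansions of $D_\er$ and of $\mu_\er$. As noted after Proposition~\ref{proposition variance perturbation}, $u_{\er,\ef}(p_i)$ is not a $C^1$ function of $\er$ pointwise in $p_i$: its $\er$-derivative is discontinuous at the two transition points $\er = p_i^2/(2m_i)$ and $\ef = p_i^2/(2m_i)$, where the kinetic energy switches between the vanishing, interpolating and standard regimes. The resolution will rely on the crucial property~\eqref{eq: def fZeroK 2} that all derivatives of $f_{0,\ef}$ vanish at $0$ and at $\ef$, which guarantees that $u_{\er,\ef}$ and its $p$-derivatives match smoothly across the transition surfaces. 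Concretely, to differentiate an integral of the form $\int g(p)\, u_{\er,\ef}(p)\, dp$ or $\int g(p)\, \n U_{\er,\ef}(p)\, dp$ with respect to $\er$, one partitions the domain into the regions where each $p_i^2/(2m_i)$ lies below $\er$, between $\er$ and $\ef$, or above $\ef$, applies the Leibniz rule in each region separately, and checks that the boundary contributions at $p_i^2/(2m_i) \in \{\er,\ef\}$ vanish by the assumption on $f_{0,\ef}$. This yields the well-defined linear expansions of $D_\er$ and of $\mu_\er$ used in the previous step, at the cost of tracking precisely which derivatives of $f_{0,\ef}$ are needed to vanish at the transitions; the residual non-smooth contributions are then absorbed into the $\mathrm{O}(\er^2)$ remainder, which concludes the identification of $\mathscr{K}$.
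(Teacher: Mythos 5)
Your high-level structure matches the paper's: both proofs split $\Phi^n_{A,\er}$ into the $k=0$ and $k=1$ terms of the Neumann series plus an $\mathrm{O}(\er^2)$ remainder (via Lemma~\ref{corollary: estimate LopD n}), and then expand the $k=1$ term and the measure $\mu_{\er}$ to first order in $\er$. You correctly identify that the main technical difficulty lies in proving the first-order expansions of $D_{\er}$ and of $\mu_{\er}$, owing to the lack of $C^1$ dependence of $\theta_{\er}$ on $\er$.

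The gap is in your proposed technique for that crucial step. You sketch a Leibniz-rule argument for $\er$-derivatives of \emph{integrals} such as $\int g(p)\,\n U_{\er,\ef}(p)\,dp$, partitioning $p$-space into the three restraining regions and checking that the boundary terms vanish. That argument, even if carried out, would only give a first-order expansion of a fixed family of scalar integrals. But what is actually needed in your step~2 -- and what you yourself state is needed -- is a bound
\[
\left\|D_{\er} - \er\,\widetilde{D}\right\|_{W^{r,\infty}} \leq C_{r}\,\er^2
\]
for $r$ as large as necessary. Indeed, the first-order term in the variance is $-2\int (\Lop_0^{-1}\Pi_0 D_{\er}\cdot\Lpert)\Phi_{A,0}\, A\,d\mu_0$, and because the resolvent $\Lop_0^{-1}$ has to be controlled through the functional estimates of Lemma~\ref{main theorem kopec} (and re-applied in the higher iterates), one needs Sobolev-type weighted $L^\infty$ control of $D_{\er}-\er\widetilde{D}$ and of its $p$-derivatives, not merely weak differentiability of integrals against test functions. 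Passing from your integral estimates to uniform pointwise bounds is not automatic. The paper avoids this entirely by proving the pointwise estimate directly: Lemma~\ref{lemma KminKmax tau} uses the mean value theorem for $f^{(n)}_{0,\ef}$ together with the explicit bounds $|\theta_{\er}(x)-x|\leq\er$ and $|(\theta_{\er}(x)-x)/\er - \tilde\tau(x)| \leq C_\delta\er$, and then Lemma~\ref{lemma DkimKim conv} turns this into the required $W^{r,\infty}$ estimate on $D_{\er,i}-\er\mathscr{D}_i$ by a finite (but careful) case analysis near the kinks at $p_i^2/(2m_i)\in\{\er,\ef\}$. Your proposal would be repaired by replacing the differentiate-under-the-integral step with such a pointwise Taylor-comparison argument; as written, the claim \textquotedblleft this yields the well-defined linear expansions of $D_{\er}$\textquotedblright\ is not justified.

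A smaller remark: you write that the only $\er$-dependence enters through $D_{\er}$ and $\mu_{\er}$. The paper spells out that $\Pi_{\er}$ is a third source, though of course it is determined by $\mu_{\er}$; once you expand $\mu_{\er}$ via Lemma~\ref{lemma AdmuKim new} this is automatically handled, so your omission is stylistic rather than a gap.
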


\subsubsection{Technical results on expansions with respect to $\er$}
\label{proof Ueps minus Uzero intro}

Recall that the function $f_{0,\ef}$ (with $f_{0,\ef}$ defined in~\eqref{eq: def fZeroK 1}) belongs to $C^\infty(\mathbb{R},[0,1])$. The next result shows that the same is true for
\[
f_{\er,\ef} = f_{0,\ef} \circ \theta_{\er},
\]
with $\theta_{\er}$ defined in~\eqref{eq: definition theta}. This is not obvious a priori since $\theta_{\er}$ is only piecewise $C^\infty$, with singularities on the first order derivative at~$\er$ and~$\ef$. In fact, it can even be proved that $f_{\er,\ef} - f_{0,\ef}$ and all its derivatives are small when $\er$ is small.

\begin{lemma} For any $0\leq \er <\ef$, the function $f_{\er,\ef}$ belongs to $C^{\infty}(\mathbb{R},[0,1])$. Moreover, its derivatives have a compact support in $\left[0,\ef\right]$. Finally, for any $n_0\in\mathbb{N}$ and $\delta > 0$, there exists a constant $C_{n_0,\ef,\delta}>0$ such that
\begin{equation}
\forall \, 0\leq n\leq n_0, \quad \forall \, \er \in [0,\ef-\delta],
\qquad
\left\|f_{\er,\ef}^{(n)} - f_{0,\ef}^{(n)}\right\|_{L^{\infty}}\leq C_{n_0,\ef,\delta}\er.
\label{eq: corollary ftheta minus f 1}
\end{equation}
\label{lemma Cinfty in p f}
\end{lemma}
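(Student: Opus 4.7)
The plan is to exploit the fact that $\theta_{\er}$ is piecewise affine, so that on each of the three open intervals $(-\infty,\er)$, $(\er,\ef)$, $(\ef,+\infty)$ its second and higher derivatives vanish identically. Applying the Faà di Bruno formula to $f_{\er,\ef} = f_{0,\ef}\circ\theta_{\er}$ therefore collapses to a single term: the only surviving partition is the one of $\{1,\dots,n\}$ into singletons, giving
\[
f_{\er,\ef}^{(n)}(x) = f_{0,\ef}^{(n)}\bigl(\theta_{\er}(x)\bigr)\,\bigl(\theta_{\er}'(x)\bigr)^n
\]
on each piece. This identity is the workhorse for both the smoothness and the quantitative claim.

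First I will establish $C^{\infty}$ regularity. On $(-\infty,\er)$ one has $\theta_{\er}(x)\le 0$, hence $f_{\er,\ef}\equiv 1$ and all derivatives vanish; similarly on $(\ef,+\infty)$, $f_{\er,\ef}\equiv 0$. The delicate points are $x=\er$ and $x=\ef$. At $x=\er$, left derivatives of all orders are zero, while the one-sided limit from the right of the formula above yields $f_{0,\ef}^{(n)}(0)\bigl(\tfrac{\ef}{\ef-\er}\bigr)^n$, which vanishes for every $n\ge 1$ by the hypothesis $f_{0,\ef}^{(n)}(0)=0$. The same argument applies at $x=\ef$ using $f_{0,\ef}^{(n)}(\ef)=0$. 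A standard lemma on functions whose one-sided derivatives agree to all orders then gives $f_{\er,\ef}\in C^{\infty}(\R)$. Compact support of the derivatives in $[0,\ef]$ is immediate since $f_{\er,\ef}$ is constant on $(-\infty,\er]\supset(-\infty,0]$ and on $[\ef,+\infty)$.

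For the estimate I split $\R$ into the regions $(-\infty,0]$, $[0,\er]$, $[\er,\ef]$, $[\ef,+\infty)$. On the first and last region both $f_{\er,\ef}$ and $f_{0,\ef}$ are constant and coincide, so the difference of their $n$-th derivatives is $0$. On $[0,\er]$, $f_{\er,\ef}^{(n)}$ vanishes (for $n\ge 1$) while $f_{0,\ef}^{(n)}(x) = f_{0,\ef}^{(n)}(x)-f_{0,\ef}^{(n)}(0)$ is bounded by $\|f_{0,\ef}^{(n+1)}\|_{L^\infty} x \le \|f_{0,\ef}^{(n+1)}\|_{L^\infty}\er$. On $[\er,\ef]$, I write
\[
f_{\er,\ef}^{(n)}(x)-f_{0,\ef}^{(n)}(x) = f_{0,\ef}^{(n)}(\theta_{\er}(x))\Bigl[\bigl(\tfrac{\ef}{\ef-\er}\bigr)^n-1\Bigr] + \Bigl[f_{0,\ef}^{(n)}(\theta_{\er}(x))-f_{0,\ef}^{(n)}(x)\Bigr].
\]
The first bracket equals $\frac{\ef^n-(\ef-\er)^n}{(\ef-\er)^n}$, which is bounded by $n\ef^{n-1}\er/\delta^n$ since $\ef-\er\ge\delta$; this is the only place where the restriction $\er\le\ef-\delta$ enters. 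For the second bracket I use $|\theta_{\er}(x)-x| = \er|\ef-x|/(\ef-\er)\le\er$ on $[\er,\ef]$, so the mean value theorem gives a bound by $\|f_{0,\ef}^{(n+1)}\|_{L^\infty}\er$. Combining the three regions and taking the maximum of the resulting constants over $0\le n\le n_0$ produces the desired $C_{n_0,\ef,\delta}$.

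The only step that requires care is the verification of $C^{\infty}$ smoothness at the kinks $\er$ and $\ef$; this is the main obstacle, but it is completely resolved by the flatness conditions $f_{0,\ef}^{(n)}(0)=f_{0,\ef}^{(n)}(\ef)=0$ built into the definition of $f_{0,\ef}$ together with the collapse of the Faà di Bruno sum induced by the piecewise-affine nature of $\theta_{\er}$. Everything else is a direct manipulation of this closed-form expression for $f_{\er,\ef}^{(n)}$.
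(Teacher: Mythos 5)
Your proof is correct and takes essentially the same route as the paper: both exploit the closed-form expression $f_{\er,\ef}^{(n)} = f_{0,\ef}^{(n)}(\theta_{\er})\,(\theta_{\er}')^n$ (which you derive via the collapse of Fa\`a di Bruno for affine inner maps, and which the paper records piecewise), use the flatness conditions $f_{0,\ef}^{(n)}(0)=f_{0,\ef}^{(n)}(\ef)=0$ to match one-sided derivatives at the kinks, and then split $f_{\er,\ef}^{(n)}-f_{0,\ef}^{(n)}$ into the same two pieces --- a Lipschitz term controlled by $|\theta_{\er}(x)-x|\le\er$ and a prefactor term controlled by $|(\ef/(\ef-\er))^n-1|\lesssim\er/\delta^n$. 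The only cosmetic difference is that the paper writes the prefactor correction with an indicator $\1_{[\er,\ef]}$ while you split into regions, but the underlying decomposition and bounds are identical.
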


\begin{proof}
The function $\theta_{\er}$ is defined piecewise on three intervals $[0,\er)$, $(\er,\ef)$ and $(\ef,+\infty)$. In the interior of each interval, both $f_{0,\ef}$ and $\theta_{\er}$ are $C^\infty$, and so is therefore their composition. In addition, $f_{\er,\ef}$ is constant on $(\ef,+\infty)$, hence all derivatives vanish on this interval. To prove that $f_{\er,\ef}$ is $C^\infty$ with derivatives of compact support, it therefore suffices to prove that all derivatives can be extended by continuity at the points $\er$ and $\ef$.

Since $f_{0,\ef}$ is constant outside the interval $[\er,\ef]$, a simple computation shows that, for $n \geq 1$,
\begin{equation}
\label{eq:derivative_f_Kmin}
\left(f_{0,\ef}\circ\theta_{\er}\right)^{(n)}(x) =
\left\{ \begin{aligned}
0 & \qquad \mathrm{for} \ 0 \leq x < \er, \\
\left(\frac{\ef}{\ef-\er}\right)^nf_{0,\ef}^{(n)}(\theta_{\er}(x))\, & \qquad \mathrm{for} \ \er < x < \ef, \\
0 & \qquad \mathrm{for} \ x > \ef.
\end{aligned} \right.
\end{equation}
It is therefore obvious to check the continuity at $\er$ and $\ef$ since all derivatives of $f_{0,\ef}$ vanish at~0 and~$\ef$, and $\theta_{\er}(\er) = 0$ while $\theta_{\er}(\ef) = \ef$.

Moreover, it is easy to check that $\left| \theta_{\er}(x)-x \right| \leq \er$, so that the estimate~\eqref{eq: corollary ftheta minus f 1} already follows in the case $n=0$ since $f_{0,\ef}$ is Lipschitz continuous. To obtain the same result for higher order derivatives, we note that the $n$-th order derivative can be rewritten as
\[
f_{\er,\ef}^{(n)} = f_{0,\ef}^{(n)} \circ\theta_{\er}+ f_{0,\ef}^{(n)}\circ\theta_{\er}\left(\left[\left(\frac{\ef }{\ef -\er }\right)^n-1\right]\1_{\left[\er,\ef\right]}\right)\,.
\]
Therefore, $f_{\er,\ef}^{(n)} - f_{0,\ef}^{(n)}$ is the sum of (i) $f_{0,\ef}^{(n)} (\theta_{\er}) - f_{0,\ef}^{(n)}$, which is of order~$\er$ in $L^\infty$ norm by the same argument as before since $f_{0,\ef}^{(n)}$ is Lipschitz continuous; and (ii) a remainder term of order~$\er$ since $f_{0,\ef}^{(n)} \circ\theta_{\er}$ is uniformly bounded; while for any $0<\delta<\ef$ there exists $R_{n,\delta}>0$ such that
\[
\forall \, \er \in [0,\ef-\delta],
\qquad
\left|\left(\frac{\ef }{\ef -\er }\right)^n-1\right|\leq R_{n,\delta} \er\,.
\]
This allows to obtain the desired result.
\end{proof}	

In view of the definition~\eqref{definition U}-\eqref{eq: definition zeta} of $U_{\er,\ef}(p)=\sum_{i=1}^Nu_{\er,\ef}(p_i)$, we can deduce the following estimates on $U_{\er,\ef} - U_{0,\ef}$ and its derivatives, which allow in particular to control $D_{\er}$. To state the result, we introduce
\[
\mathscr{C}_{K}:=\left\{p \in \mathbb{R}^d \ \left| \ \forall i=1, \ldots N, \ \frac{p_i^2}{2 m_i}\leq K\right. \right \}\,.
\]

\begin{corollary}
\label{lemma Ueps minus Uzero intro}
For any $0\leq \er<\ef$, the function $U_{\er, \ef}$ belongs to $C^{\infty}$. For any $n\geq 0$ and $\left|\alpha\right|= n$, the function $\du^{\alpha}\left(U_{\er, \ef}-U_{0,\ef}\right)$ has a compact support in $\mathscr{C}_{\ef}$. Moreover, for any $n_0\geq 0$ and $\delta > 0$, there exists a constant $C_{n_0,\delta,\ef}>0$ such that
\begin{equation}
\forall \left|\alpha\right|\leq n_0, \ \forall \, \er \in [0,\ef-\delta],
\quad
\left\|\du^{\alpha} U_{\er, \ef}-\du^{\alpha} U_{0,\ef}\right\|_{L^{\infty}}\leq C_{n_0,\delta,\ef} \er \,.
\label{eq: Ueps minus Uzero intro}
\end{equation}
\label{corollary Ukmin Cinfty}
\end{corollary}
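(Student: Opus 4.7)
The plan is to reduce everything to Lemma \ref{lemma Cinfty in p f} by exploiting the separable structure $U_{\er,\ef}(p) = \sum_{i=1}^N u_{\er,\ef}(p_i)$, and rewriting the piecewise formula \eqref{eq: definition zeta} uniformly as
\[
u_{\er,\ef}(p_i) = \left[1 - f_{\er,\ef}\!\left(\frac{p_i^2}{2m_i}\right)\right]\frac{p_i^2}{2m_i}, \qquad p_i \in \R^D,
\]
which is valid on all of $\R^D$ because $f_{\er,\ef}$ equals $1$ on $(-\infty,\er]$ and $0$ on $[\ef,+\infty)$. Smoothness of $U_{\er,\ef}$ then follows at once from the $C^\infty$ regularity of $f_{\er,\ef}$ asserted in Lemma \ref{lemma Cinfty in p f}, composed with the smooth map $p_i \mapsto p_i^2/(2m_i)$ and summed over $i$.

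For the support statement and for the quantitative bound, the key identity is
\[
u_{\er,\ef}(p_i) - u_{0,\ef}(p_i) = \left[f_{0,\ef} - f_{\er,\ef}\right]\!\!\left(\frac{p_i^2}{2m_i}\right)\frac{p_i^2}{2m_i}.
\]
Lemma \ref{lemma Cinfty in p f} tells us that $f_{0,\ef} - f_{\er,\ef}$ and all its derivatives are supported in $[0,\ef]$, so this difference and every $p_i$-derivative of it vanish as soon as $p_i^2/(2m_i) > \ef$. The support claim for $\du^\alpha(U_{\er,\ef} - U_{0,\ef})$ then follows from the observation that distinct summands depend on disjoint coordinate blocks: a multi-index $\alpha$ touching more than one block annihilates every term, while a single-block multi-index produces a contribution confined to the restrained set of the corresponding particle inside $\mathscr{C}_{\ef}$.

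For the quantitative estimate, I would expand $\du^\alpha\bigl[(f_{0,\ef} - f_{\er,\ef})(p_i^2/(2m_i))\, p_i^2/(2m_i)\bigr]$ by repeated application of Leibniz's rule and the chain rule. Each resulting term has the form $P(p_i)\, (f_{0,\ef}^{(k)} - f_{\er,\ef}^{(k)})(p_i^2/(2m_i))$ for some $k \leq |\alpha|$ and some polynomial $P$ of controlled degree. Since such a term is nonzero only where $p_i^2/(2m_i) \leq \ef$, the polynomial factor is controlled by a constant depending only on $\ef$, $n_0$ and the masses; the other factor is bounded uniformly on $\R$ by $C_{n_0,\ef,\delta}\er$ thanks to \eqref{eq: corollary ftheta minus f 1}. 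Summing over the $N$ particles and the finitely many multi-indices generated by the expansion produces \eqref{eq: Ueps minus Uzero intro}.

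The main bookkeeping task is to verify that every constant produced along the way depends only on $n_0$, $\delta$, $\ef$ and the fixed masses, not on $\er$, so that the only $\er$-dependent factor left in the final bound is the explicit one inherited from Lemma \ref{lemma Cinfty in p f}. No additional analytic ingredient is needed beyond the product and chain rules and the uniformity afforded by the compact support $[0,\ef]$ of $f_{0,\ef} - f_{\er,\ef}$.
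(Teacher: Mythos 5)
Your proposal is correct in spirit and follows the route that the paper implicitly takes (the paper states this corollary without proof, as a direct consequence of Lemma~\ref{lemma Cinfty in p f}). The $C^\infty$ regularity and the quantitative bound \eqref{eq: Ueps minus Uzero intro} are argued correctly: the uniform rewriting $u_{\er,\ef}(p_i)=\left[1-f_{\er,\ef}\!\left(p_i^2/(2m_i)\right)\right]p_i^2/(2m_i)$ is valid on all of $\R^D$, the key identity for the difference is right, and the Leibniz/chain-rule expansion combined with the fact that $f_{0,\ef}-f_{\er,\ef}$ and all its derivatives vanish outside $[0,\ef]$ localizes every term to $|p_i|^2\leq 2m_i\ef$, which controls the polynomial prefactors; the remaining factor is then bounded by $C\er$ via \eqref{eq: corollary ftheta minus f 1}.

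The compact-support claim, however, deserves more care, and your justification of it has the inclusion backwards. For $N\geq 2$, each summand $u_{\er,\ef}(p_i)-u_{0,\ef}(p_i)$, viewed as a function on $\R^d$, is supported in the \emph{unbounded slab} $\{p\in\R^d:\,p_i^2/(2m_i)\leq\ef\}$, so $U_{\er,\ef}-U_{0,\ef}$ has support in the \emph{union} of these slabs, which is unbounded and is not contained in the compact \emph{intersection} $\mathscr{C}_\ef$. Your phrase ``confined to the restrained set of the corresponding particle inside $\mathscr{C}_\ef$'' reverses the containment: $\mathscr{C}_\ef$ sits inside each slab, not conversely. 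The same imprecision appears in the paper itself (also for the per-particle functions $\mathscr{D}_i$ in Lemma~\ref{lemma DkimKim conv}), and it is harmless for the $L^\infty$ estimate, which uses only that each term has compact support in the $p_i$ variable it depends on; but as literally stated, compact support of $\du^\alpha(U_{\er,\ef}-U_{0,\ef})$ inside $\mathscr{C}_\ef$ is false for $N\geq 2$, and your argument does not (and cannot) establish it. It is worth noting this explicitly rather than reproducing the slip.
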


In order to obtain more precise statements about the behavior of the functions $f_{\er,\ef}(x)$ for small values of~$\er$, a natural idea would be to perform Taylor expansions with respect to this parameter. The difficulty is however that the derivatives with respect to $\er$ of the shift function $\theta_{\er}(x)$ are not continuous in $x$. This prevents to write directly remainders of order~$\er^2$. Before stating the precise result in Lemma~\ref{lemma DkimKim conv}, we need another technical ingredient.

\begin{lemma}
\label{lemma KminKmax tau}
Fix $\ef > 0$ and define $\ds \tilde{\tau}(x):=\frac{x-\ef}{\ef}$. Then, for any $n\geq 0$ and $\delta > 0$, there exists $C_{n,\delta}>0$ such that
\begin{equation}
\label{eq: lemma bound fcirctheta minus f}
\forall \, \er \in [0,\ef-\delta],
\qquad
\left\|\frac{f_{0,\ef}^{(n)} \circ \theta_{\er} - f_{0,\ef}^{(n)}  }{\er}-f_{0,\ef}^{(n+1)} \circ \tilde{\tau}\right\|_{L^{\infty}}\leq C_{n,\delta} \, \er\,.
\end{equation}
\end{lemma}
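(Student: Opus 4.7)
The proof will proceed by partitioning $\mathbb{R}$ into regions dictated by the piecewise formula~\eqref{eq: definition theta} for $\theta_{\er}$, and controlling the quantity inside the $L^{\infty}$-norm separately on each of them. I would split $\mathbb{R}$ as $(-\infty,0] \cup [0,\er] \cup [\er,\ef] \cup [\ef,+\infty)$, treating first the two outer intervals (where the two terms of the lemma each vanish identically), then the narrow strip $[0,\er]$ (handled by the flatness of $f_{0,\ef}$ at~$0$), and finally the main interval $[\er,\ef]$ where a genuine Taylor expansion in $\er$ is available.

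On $(-\infty,0]$ one has $\theta_{\er}(x) = x - \er \leq -\er$, so both $f_{0,\ef}^{(n)}\circ\theta_{\er}$ and $f_{0,\ef}^{(n)}$ take the same constant value and the difference quotient vanishes; the residual term involving $\tilde{\tau}$ also vanishes because $f_{0,\ef}^{(n+1)}$ has support in $[0,\ef]$ by Lemma~\ref{lemma Cinfty in p f}. The same reasoning works on $[\ef,+\infty)$ since $\theta_{\er}(x) = x$ there and the $\tilde{\tau}$-term is zero where $f_{0,\ef}^{(n+1)}$ is. On the strip $[0,\er]$ both $x$ and $x-\er$ lie in a neighbourhood of~$0$ where $f_{0,\ef}$ is flat to all orders (since it equals the constant~$1$ on $(-\infty,0]$); the standard flat-function estimate yields $|f_{0,\ef}^{(n)}(x) - f_{0,\ef}^{(n)}(x-\er)| \leq C_{k}\, x^{k} \leq C_{k}\, \er^{k}$ for arbitrary~$k \geq 2$, so after division by $\er$ the contribution is $\mathrm{O}(\er)$; the other term is controlled analogously.

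The core of the proof is the interior region $[\er,\ef]$, where $\theta_{\er}(x) = \ef(x-\er)/(\ef-\er)$. Since $\ef - \er \geq \delta > 0$ uniformly, expanding $1/(\ef-\er)$ in powers of $\er/\ef$ gives
\[
\theta_{\er}(x) - x \;=\; \er\,\tilde{\tau}(x) + r(x,\er), \qquad |r(x,\er)| \leq C_{\delta}\, \er^{2},
\]
uniformly for $x \in [\er,\ef]$ and $\er \in [0,\ef-\delta]$. Applying Taylor's theorem to the smooth function $f_{0,\ef}^{(n)}$ at~$x$ with increment $\theta_{\er}(x) - x$, and using the uniform bounds on $f_{0,\ef}^{(n+1)}$ and $f_{0,\ef}^{(n+2)}$ provided by $f_{0,\ef} \in C^{\infty}(\mathbb{R},[0,1])$, yields
\[
f_{0,\ef}^{(n)}(\theta_{\er}(x)) - f_{0,\ef}^{(n)}(x) \;=\; \er\,\tilde{\tau}(x)\, f_{0,\ef}^{(n+1)}(x) + \mathrm{O}(\er^{2}).
\]
Dividing by $\er$ and gluing the estimates on the four regions gives the desired $\mathrm{O}(\er)$ bound.

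The main technical obstacle is precisely this piecewise nature: the map $(x,\er) \mapsto \theta_{\er}(x)$ fails to be $C^{1}$ across the corners $x = \er$ and $x = \ef$, which forbids a single global Taylor expansion in~$\er$. The feature that rescues the argument is the strong vanishing of $f_{0,\ef}$ and all its derivatives at~$0$ and~$\ef$ built into the construction~\eqref{eq: def fZeroK 1}: every non-smooth contribution generated by the corner at $x = \er$ is absorbed by the flat-to-all-orders behaviour of $f_{0,\ef}$ near~$0$, and the corner at $x = \ef$ is harmless because $\theta_{\er}$ is already the identity there. This is also what dictates the uniform lower bound $\ef - \er \geq \delta$, which prevents the coefficient $\ef/(\ef-\er)$ in~\eqref{eq:derivative_f_Kmin} from blowing up.
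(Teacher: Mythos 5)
Your proof is correct and is in essence the paper's argument: the shared key observation is that $\theta_{\er}(x)-x = \er\tilde{\tau}(x)+\mathrm{O}(\er^2)$ uniformly on the support of $f_{0,\ef}^{(n+1)}$, after which Taylor expansion of $f_{0,\ef}^{(n)}$ in the increment $\theta_{\er}(x)-x$, together with the boundedness of the higher derivatives of $f_{0,\ef}$, gives the $\mathrm{O}(\er)$ bound upon dividing by $\er$. Where you differ is in isolating $[0,\er]$ as a separate case and invoking the flat-to-all-orders vanishing of $f_{0,\ef}$ at the origin; that detour is harmless but unnecessary, since on $[0,\er]$ one also has $\theta_{\er}(x)-x=-\er$ and $\left|(-\er)-\er\tilde{\tau}(x)\right| = x\er/\ef \leq \er^2/\ef$, so the very same Taylor estimate you run on $[\er,\ef]$ applies verbatim there. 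The paper performs no such split: it checks $\left|(\theta_{\er}(x)-x)/\er-\tilde{\tau}(x)\right|\leq C_\delta\,\er$ in one stroke for $x\in[0,\ef]$, writes the Lagrange form $f_{0,\ef}^{(n)}(\theta_{\er}(x))-f_{0,\ef}^{(n)}(x)=f_{0,\ef}^{(n+1)}\!\left(x+t(\theta_{\er}(x)-x)\right)\!\left(\theta_{\er}(x)-x\right)$, and then splits by the triangle inequality into a Lipschitz piece and a piece controlled by the $\left|\cdot/\er-\tilde{\tau}\right|$ bound, each $\mathrm{O}(\er)$. So the flat vanishing at~$0$ and~$\ef$ is what makes $f_{\er,\ef}$ itself $C^\infty$ despite the kinks of $\theta_{\er}$, but it is not the mechanism that makes this particular estimate go through; the uniform control of $\theta_{\er}(x)-x$ suffices. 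One last remark, on notation: the displayed lemma writes $f_{0,\ef}^{(n+1)}\circ\tilde{\tau}$, but the quantity actually estimated --- by you on the interior region, and by the paper's proof as well as the subsequent use in Lemma~\ref{lemma DkimKim conv} --- is the \emph{product} $f_{0,\ef}^{(n+1)}(x)\,\tilde{\tau}(x)$. Under the literal composition reading the estimate is false: for $x\in(\ef,\ef^2+\ef)$ the difference quotient vanishes identically (since $\theta_{\er}(x)=x$ there) while $f_{0,\ef}^{(n+1)}(\tilde{\tau}(x))$ is generically nonzero, so no $\mathrm{O}(\er)$ bound is possible. Your treatment of the outer intervals is consistent only with the product reading, so you implicitly made the right interpretation; it is worth stating it explicitly.
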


\begin{proof}
Note that, formally, $\tilde{\tau}$ is the derivative of $\theta_{\er}$ on $\left[\er,\ef\right]$ with respect to~$\er$, evaluated at $\er = 0$. Recall also $\left|\theta_{\er}(x)-x\right|\leq \er$. Simple computations show that there exists $C_\delta>0$ such that
\[
\forall x \in \R^{+},
\qquad
\left|\frac{\theta_{\er}(x)-x}{\er} - \tilde{\tau}(x) \right|\leq C_{\delta}\er\,.
\]
Since $f_{0,\er}\in C^{\infty}$, there exists $t\in \left[0,1\right]$ such that
\begin{equation}
\label{eq: alpha tangent}
f_{0,\ef}^{(n)}\big(\theta_{\er}\left(x\right)\big)-f^{(n)}_{0,\ef}\left(x\right) = f^{(n+1)}_{0,\ef}\Big(x+t\big(\theta_{\er}(x)-x\big)\Big) \big(\theta_{\er}(x)-x\big)\,.
\end{equation}
Therefore, for $x \in [0,\ef]$,
\begin{equation}
\label{fnpone minus fn div kmin min fnp1tau}
\begin{aligned}
&\left|\frac{f_{0,\ef}^{(n)}\left(\theta_{\er}\left(x\right)\right)-f_{0,\ef}^{(n)}\left(x\right)}{\er}-f^{(n+1)}_{0,\ef}(x)\tilde{\tau}(x)\right| \\
&=\left |f^{(n+1)}_{0,\ef}\Big(x+t\big(\theta_{\er}(x)-x\big)\Big)\frac{\theta_{\er}(x)-x}{\er}-f^{(n+1)}_{0,\ef}(x)\tilde{\tau}(x)\right| \\
&\leq \left|f^{(n+1)}_{0,\ef}\Big(x+t\big(\theta_{\er}(x)-x\big)\Big) - f_{0,\ef}^{(n+1)}(x)\right| \, \left|\frac{\theta_{\er}(x)-x}{\er}\right| +\left | f^{(n+1)}_{0,\ef}(x)\right| \, \left |\frac{\theta_{\er}(x)-x}{\er} - \tilde{\tau}(x) \right| \\
&\leq \left(\left\| f^{(n+2)}_{0,\ef}\right\|_{L^{\infty}\left(\left[0,\ef\right]\right)}   + C_\delta \left\| f^{(n+1)}_{0,\ef}\right\|_{L^{\infty}\left(\left[0,\ef\right]\right)} \right)\er,
\end{aligned}
\end{equation}
where we have used the following equality: there exists $\alpha \in [0,1]$ such that
\[
f^{(n+1)}_{0,\ef}\Big(x+t\big(\theta_{\er}(x)-x\big)\Big) - f_{0,\ef}^{(n+1)}(x) = t \, f^{(n+2)}_{0,\ef}\Big(x+\alpha\big(\theta_{\er}(x)-x\big)\Big) \big(\theta_{\er}(x)-x\big),
\]
together with the bound $|\theta_{\er}(x)-x| \leq \er$.
\end{proof}

\begin{lemma}
Fix $\ef>0$. There exist functions $\mathscr{D}_i\in C^{\infty}(\R^d)$ (for $i=1, \ldots, N$), with compact support in $\mathscr{C}_{\ef}$, such that, for $0< \delta < \ef$ and $r \in \mathbb{N}$, there is $C_{r,\delta}>0$ such that
\begin{equation}
\er\in \left[0,\ef-\delta\right],\quad\left\|D_{\er,i}-\er\mathscr{D}_i\right\|_{W^{r,\infty}}\leq C_{r,\delta} \er^{2}.
\label{eq: DkimKim esti Kmin2}
\end{equation}
\label{lemma DkimKim conv}
\end{lemma}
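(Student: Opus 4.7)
My strategy is to take $\mathscr{D}_i$ as the formal derivative $\partial_{\er} D_{\er,i}\big|_{\er=0}$ and control the remainder in $W^{r,\infty}$ by a piecewise analysis in the scalar variable $K_i := |p_i|^2/(2m_i)$. Since $f_{\er,\ef}(K_i)=1$ for $K_i \leq \er$ and $f_{\er,\ef}(K_i) = 0$ for $K_i \geq \ef$, definition~\eqref{eq: definition zeta} can be rewritten globally on $\R^D$ as $u_{\er,\ef}(p_i) = [1-f_{\er,\ef}(K_i)]\,K_i$. Differentiating with respect to $p_i$ yields
\[
\nabla_{p_i} u_{\er,\ef}(p_i) = g_\er(K_i)\,\frac{p_i}{m_i}, \qquad g_\er(K) := 1 - f_{\er,\ef}(K) - K\,f_{\er,\ef}'(K),
\]
so that $D_{\er,i}(p) = \bigl[g_0(K_i) - g_\er(K_i)\bigr]\,p_i/m_i$. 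The problem thereby reduces to a scalar second-order expansion of $g_\er - g_0$ and its derivatives.

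The first key step is to upgrade Lemma~\ref{lemma KminKmax tau} into a quantitative second-order expansion. On $(\er,\ef)$, formula~\eqref{eq:derivative_f_Kmin} gives $f_{\er,\ef}^{(n)}(x) = \bigl(\ef/(\ef-\er)\bigr)^n f_{0,\ef}^{(n)}(\theta_\er(x))$. Combining the elementary Taylor expansions $\bigl(\ef/(\ef-\er)\bigr)^n = 1 + n\er/\ef + \mathrm{O}(\er^2)$ and $\theta_\er(x) - x = \er\,\tilde\tau(x) + \mathrm{O}(\er^2)$ (uniformly in $x \in [\er,\ef]$ when $\er \leq \ef - \delta$) leads to
\[
f_{\er,\ef}^{(n)}(x) = f_{0,\ef}^{(n)}(x) + \er\,h_n(x) + \mathrm{O}(\er^2), \qquad h_n(x) := \frac{n}{\ef}\,f_{0,\ef}^{(n)}(x) + f_{0,\ef}^{(n+1)}(x)\,\tilde\tau(x).
\]
On the thin interval $(0,\er)$, $f_{\er,\ef}^{(n)}$ vanishes identically for $n \geq 1$, while the infinite-order flatness of $f_{0,\ef}$ at the origin (since $f_{0,\ef}^{(k)}(0) = 0$ for every $k \geq 1$) gives $|f_{0,\ef}^{(n)}(x)| + |h_n(x)| = \mathrm{O}(\er^N)$ for every $N$; the case $n=0$ is treated analogously, using that $f_{\er,\ef}(x) = 1$ there and $f_{0,\ef}(x) = 1 + \mathrm{O}(x^N)$. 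On $(\ef,+\infty)$ all three quantities vanish. Hence one obtains $\|f_{\er,\ef}^{(n)} - f_{0,\ef}^{(n)} - \er h_n\|_{L^{\infty}} \leq C_{n,\delta}\,\er^2$ uniformly in $\er \in [0,\ef-\delta]$.

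The candidate is then
\[
\mathscr{D}_i(p) := \bigl[h_0(K_i) + K_i\,h_1(K_i)\bigr]\,\frac{p_i}{m_i}.
\]
This function belongs to $C^\infty(\R^d)$ and is compactly supported in $\mathscr{C}_\ef$: compact support is immediate from the fact that $h_0$ and $h_1$ are supported in $[0,\ef]$, while smoothness at $K_i = 0$ and $K_i = \ef$ follows again from the infinite-order vanishing of $f_{0,\ef}^{(k)}$ ($k \geq 1$) at those points, which dominates both the affine factor $\tilde\tau$ and the polynomial factor $K_i$. Plugging the expansions for $n=0$ and $n=1$ into
\[
g_0(K) - g_\er(K) = \bigl[f_{\er,\ef}(K) - f_{0,\ef}(K)\bigr] + K\,\bigl[f_{\er,\ef}'(K) - f_{0,\ef}'(K)\bigr]
\]
and multiplying by $p_i/m_i$ (uniformly bounded on the support $\mathscr{C}_\ef$) yields the $L^\infty$ case $r=0$ of~\eqref{eq: DkimKim esti Kmin2}.

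For $r \geq 1$, each multi-index derivative $\partial_{p_i}^\alpha$ with $|\alpha| \leq r$ applied to $D_{\er,i} - \er\,\mathscr{D}_i$ produces a finite sum of terms of the form $P_\alpha(p_i)\bigl[f_{\er,\ef}^{(n)}(K_i) - f_{0,\ef}^{(n)}(K_i) - \er\,h_n(K_i)\bigr]$ for polynomials $P_\alpha$ and integers $n \leq |\alpha| + 1$. The pointwise $\mathrm{O}(\er^2)$ bound above applied to each bracket, combined with the uniform boundedness of the $P_\alpha$ on the compact set $\mathscr{C}_\ef$, yields the $W^{r,\infty}$ estimate~\eqref{eq: DkimKim esti Kmin2}. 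The main technical obstacle is the lack of joint smoothness of $(\er,x) \mapsto f_{\er,\ef}(x)$, due to the piecewise definition of $\theta_\er$; this is circumvented by the region-by-region analysis above, the infinite-order flatness of $f_{0,\ef}$ at its transition points being precisely what allows the error on the shrinking interval $(0,\er)$ to be absorbed into the $\mathrm{O}(\er^2)$ remainder.
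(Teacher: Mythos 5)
Your proof is correct and follows essentially the same strategy as the paper: take $\mathscr{D}_i$ to be the formal derivative $\partial_{\er} D_{\er,i}|_{\er=0}$, reduce to the scalar variable $K_i := |p_i|^2/(2m_i)$, establish a quantitative second-order expansion of $f_{\er,\ef}^{(n)}-f_{0,\ef}^{(n)}$, and conclude by a region-by-region analysis, with the infinite-order flatness of $f_{0,\ef}$ at $0$ and $\ef$ absorbing the boundary intervals and ensuring the smoothness and compact support of $\mathscr{D}_i$. The main difference is that you expand $f_{\er,\ef}^{(n)} = (\ef/(\ef-\er))^n\,f_{0,\ef}^{(n)}\circ\theta_{\er}$ in full, combining both sources of $\er$-dependence (the prefactor and the argument shift) into the single function $h_n = \tfrac{n}{\ef}f_{0,\ef}^{(n)} + f_{0,\ef}^{(n+1)}\tilde\tau$, whereas the paper's Lemma~\ref{lemma KminKmax tau} handles only the shift $\theta_{\er}$ and the prefactor $(\ef/(\ef-\er))^n$ is dealt with separately. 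As a consequence, your explicit formula $\mathscr{D}_i = [h_0(K_i)+K_i h_1(K_i)]\,p_i/m_i = \bigl[f_{0,\ef}'\tilde\tau + \tfrac{K_i}{\ef}f_{0,\ef}' + K_i f_{0,\ef}''\tilde\tau\bigr](K_i)\,p_i/m_i$ contains the extra term $\tfrac{K_i}{\ef}f_{0,\ef}'(K_i)\,p_i/m_i$ relative to the paper's stated $\mathscr{D}_i = [f_{0,\ef}'+K_i f_{0,\ef}'']\tilde\tau(K_i)\,p_i/m_i$. Your formula is the correct one: the derivative $\partial_{\er}f'_{\er,\ef}|_{\er=0}$ has two contributions, $f_{0,\ef}''\tilde\tau$ from $\partial_{\er}\theta_{\er}|_{\er=0}=\tilde\tau$ and $\tfrac{1}{\ef}f_{0,\ef}'$ from $\partial_{\er}\theta_{\er}'|_{\er=0}=1/\ef$, and the paper's displayed formula for $\mathscr{D}_i$ retains only the first (the paper's subsequent $r=1$ computation does produce the missing term when differentiating $\mathscr{D}_{i,1}$ via $\tilde\tau'=1/\ef$, so this appears to be a slip in the stated formula rather than in the substance of the argument). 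Finally, the identity $h_n' = h_{n+1}$, which you use implicitly to express derivatives of $D_{\er,i}-\er\mathscr{D}_i$ purely through the scalar brackets $f_{\er,\ef}^{(n)}-f_{0,\ef}^{(n)}-\er h_n$ times polynomial prefactors, does hold and makes the $W^{r,\infty}$ case clean.
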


\begin{proof}
Recall that the functions $D_{\er,i}: \R^D\rightarrow\R^D$ are defined, for $i=1,\ldots, N$, as
\[
\begin{aligned}
  D_{\er,i}(p)& = \left[f_{\er,\ef}\left(\frac{\left|p_i\right|^2}{2m_i}\right)-f_{0,\ef}\left(\frac{\left|p_i\right|^2}{2m_i}\right)\right] \frac{p_i}{m_i}\\
& \qquad + \frac{\left|p_i\right|^2}{2m_i} \left[f^{\prime}_{\er,\ef}\left(\frac{\left|p_i\right|^2}{2m_i}\right)-f_{0,\ef}^{\prime}\left(\frac{\left|p_i\right|^2}{2m_i}\right)\right]\frac{p_i}{m_i}\,.
\end{aligned}
\]
We next define, for $i=1, \ldots, N$, the function
\[
\mathscr{D}_i(p) := 
\left\{ \begin{aligned}
\left[ f_{0,\ef}^{\prime}\left(\frac{\left|p_i\right|^2}{2m_i}\right)+\frac{\left|p_i\right|^2}{2m_i} f_{0,\ef}^{\prime\prime}\left(\frac{\left|p_i\right|^2}{2m_i}\right)\right]\tilde{\tau}\left(\frac{\left|p_i\right|^2}{2m_i}\right)\frac{p_i}{m_i},  & \qquad \mathrm{for} \ \frac{\left|p_i\right|^2}{2m_i}\in \left[0, \ef  \right], \\
0 ,& \qquad \mathrm{for} \ \frac{\left|p_i\right|^2}{2m_i} \geq \ef,
\end{aligned} \right.
\]
where $\tilde{\tau}$ is defined in Lemma~\ref{lemma KminKmax tau}. Recall that $f_{0,\ef}\in C^{\infty}$ and $f_{0,\ef}^{(n)}$ have compact support on $\left[0,\ef\right]$ for $n\geq 1$. Therefore, $\mathscr{D}_i\in C^{\infty}$ also has compact support in $\mathscr{C}_{\ef}$.

The case $r=0$ of \eqref{eq: DkimKim esti Kmin2} follows directly from Lemma~\ref{lemma KminKmax tau} with $n=0$ and $n=1$. Let us now consider the case $r=1$ more carefully. To simplify the presentation, we consider separately the two terms in the sums defining the functions $D_{\er,i}$ and $\mathscr{D}_i$, \emph{i.e.} $D_{\er,i}= D_{\er,i,1} + D_{\er,i,2}$ and $\mathscr{D}_i = \mathscr{D}_{i,1} + \mathscr{D}_{i,2}$ with
\[
D_{\er,i,1}(p) = \left[f_{\er,\ef}\left(\frac{\left|p_i\right|^2}{2m_i}\right)-f_{0,\ef}\left(\frac{\left|p_i\right|^2}{2m_i}\right)\right] \frac{p_i}{m_i},
\]
and
\[
\mathscr{D}_{i,1}(p) :=
\left\{ \begin{aligned}
f_{0,\ef}^{\prime}\left(\frac{\left|p_i\right|^2}{2m_i}\right) \tilde{\tau}\left(\frac{\left|p_i\right|^2}{2m_i}\right)\frac{p_i}{m_i},  & \qquad \mathrm{for} \ \frac{\left|p_i\right|^2}{2m_i}\in \left[0, \ef  \right], \\
0 ,& \qquad \mathrm{for} \ \frac{\left|p_i\right|^2}{2m_i} \geq \ef.
\end{aligned} \right.
\]
We present the estimates only for the difference $D_{\er,i,1}/\er-\mathscr{D}_{i,1}$ since similar computations allows to control the difference $D_{\er,i,2}/\er-\mathscr{D}_{i,2}$. For $\alpha,\alpha' \in \{1,\dots,D\}$, we denote by $p_{i,\alpha}$ the $\alpha$th component of the momentum of the $i$th particle and by $D_{\er,i,1,\alpha'}$ and $\mathscr{D}_{i,1,\alpha'}$ the $\alpha'$th components of $D_{\er,i,1}$ and $\mathscr{D}_{i,1}$. Then, for $p_i \in \mathscr{C}_{\ef}$,
\[
\begin{aligned}
&\left|\du_{p_{i,\alpha}}\left( \frac{D_{\er,i,1,\alpha'}}{\er} - \mathscr{D}_{i,1,\alpha'} \right)(p)\right|=\\
& \leq\frac{\delta_{\alpha,\alpha'}}{m_i}\left| \frac{\dps f_{0,\ef}\circ \theta_{\er}\left(\frac{\left|p_i\right|^2}{2m_i}\right)-f_{0,\ef}\left(\frac{\left|p_i\right|^2}{2m_i}\right) }{\er} -  f_{0,\ef}^{\prime}\left(\frac{\left|p_i\right|^2}{2m_i}\right)\tilde{\tau}\left(\frac{\left|p_i\right|^2}{2m_i}\right)  \right| \\
& +\frac{|p_{i,\alpha}p_{i,\alpha'}|}{m_i^2}\left|  \frac{\dps f_{0,\ef}^{\prime}\circ \theta_{\er}\left(\frac{\left|p_i\right|^2}{2m_i}\right)\theta_{\er}^{\prime}\left(\frac{\left|p_i\right|^2}{2m_i}\right) -f_{0,\ef}^{\prime}\left(\frac{\left|p_i\right|^2}{2m_i}\right) }{\er} \right. \\
& \left.
\phantom{+ \left|  \frac{\dps f_{0,\ef}^{\prime}\left(\frac{\left|p_i\right|^2}{2m_i}\right)}{\er} \right.} -f_{0,\ef}^{\prime\prime}\left(\frac{\left|p_i\right|^2}{2m_i}\right)\tilde{\tau}\left(\frac{\left|p_i\right|^2}{2m_i}\right) -  f_{0,\ef}^{\prime}\left(\frac{\left|p_i\right|^2}{2m_i}\right)\frac{1}{\ef} \right| \\
& \leq \frac{\delta_{\alpha,\alpha'}}{m_i}\left| \frac{\dps f_{0,\ef}\circ \theta_{\er}\left(\frac{\left|p_i\right|^2}{2m_i}\right)-f_{0,\ef}\left(\frac{\left|p_i\right|^2}{2m_i}\right) }{\er} -  f_{0,\ef}^{\prime}\left(\frac{\left|p_i\right|^2}{2m_i}\right)\tilde{\tau}\left(\frac{\left|p_i\right|^2}{2m_i}\right)  \right| \\
&\quad +\frac{|p_{i,\alpha}p_{i,\alpha'}|}{m_i^2}\left|  \frac{\dps f_{0,\ef}^{\prime}\circ \theta_{\er}\left(\frac{\left|p_i\right|^2}{2m_i}\right)-f_{0,\ef}^{\prime}\left(\frac{\left|p_i\right|^2}{2m_i}\right) }{\er}-   f_{0,\ef}^{\prime\prime}\left(\frac{\left|p_i\right|^2}{2m_i}\right)\tilde{\tau}\left(\frac{\left|p_i\right|^2}{2m_i}\right)\right|\\
&\quad+\frac{|p_{i,\alpha}p_{i,\alpha'}|}{m_i^2}\left|\left[f^{\prime}_{0,\ef}\circ \theta_{\er}\left(\frac{\left|p_i\right|^2}{2m_i}\right)-f^{\prime}_{0,\ef}\left(\frac{\left|p_i\right|^2}{2m_i}\right)\right]\frac{\dps \theta_{\er}^{\prime}\left(\frac{\left|p_i\right|^2}{2m_i}\right)-1}{\er}\right| \\
& \quad + \frac{|p_{i,\alpha}p_{i,\alpha'}|}{m_i^2}\left|f_{0,\ef}^{\prime}\left(\frac{\left|p_i\right|^2}{2m_i}\right)\left[\frac{\dps \theta_{\er}^{\prime}\left(\frac{\left|p_i\right|^2}{2m_i}\right)-1}{\er}-\frac{1}{\ef}\right] \right|,
\end{aligned}
\]
where we used $\tilde{\tau}^{\prime}\left(x\right)=1/\ef$ for $x\in[0,\ef]$. The first two terms in the last inequality can be bounded by $C^*\er$ for some constant $C^* \in \mathbb{R}_+$ in view of Lemma~\ref{lemma KminKmax tau}. For the last two terms, distinguish the cases $p_i \in \mathscr{C}_{i,\er}$ and $p_i \in \mathscr{C}_{i,\ef} \backslash \mathscr{C}_{i,\er}$, where for $K\geq 0$ we define
\[
\mathscr{C}_{i,K}:=\left\{p_i \in \mathbb{R}^D \, \left| \, \frac{\left|p_i\right|^2}{2m_i}\leq K \right. \right\}\,.
\]
When $p_i \in \mathscr{C}_{i,\er}$, the third term disappears since $\theta_{\er}^{\prime}(x)=1$ on $\left[0,\er\right]$. In addition,
\[
\sup_{p_i\in \mathscr{C}_{i,\er}}\frac{\left|p_i\right|^2}{m_i^2}\leq \frac{2\er}{m_i},
\]
so that
\[
\begin{aligned}
&\left\| \du_{p_{i,\alpha}}\left( \frac{D_{\er,i,1,\alpha'}}{\er} - \mathscr{D}_{i,1,\alpha'} \right) \right\|_{L^{\infty}\left(\mathscr{C}_{i,\er}\right)}\leq \left( C^*+\frac{2}{m_i\ef}\left\| f_{0,\ef}^{\prime} \right\|_{L^{\infty}\left(\left[0,\er\right]\right)}\right)\er\,.
\end{aligned}
\]
When $p_i \in \mathscr{C}_{i,\ef}\backslash \mathscr{C}_{i,\er}$, we use $\theta_{\er}^{\prime}(x) = \ef/(\ef-\er)$ for $x\in\left[\er,\ef\right]$, so that there exists $C_\delta > 0$ such that
\begin{equation}
\sup_{x\in \left[\er,\ef\right]}\left |\frac{\theta_{\er}^{\prime}(x)-1}{\er} - \frac{1}{\ef} \right|\leq C_{\delta} \er,
\quad
\sup_{x\in \left[\er,\ef\right]}\left|\theta_{\er}^{\prime}(x)-1\right|\leq \frac{1}{\ef-\er}\leq\frac{1}{\delta}\,.
\label{du theta minus x minus tau kminkmax}
\end{equation}
Using these bounds as well as the inequality $\left|\theta_{\er}(x)-x\right|\leq \er$ and~\eqref{eq: alpha tangent} for $n=1$, it follows
\[
\begin{aligned}
&\left\| \du_{p_{i,\alpha}}\left( \frac{D_{\er,i,1,\alpha'}}{\er} - \mathscr{D}_{i,1,\alpha'} \right) \right\|_{L^{\infty}\left(\mathscr{C}_{i,\ef}\right)}\leq \left(C^*+\frac{2\ef}{m_i\delta }\left\|f^{\prime\prime}_{0,\ef}\right\|_{L^{\infty}} +  C_{\delta} \left\|f_{0,\ef}^{\prime}\right\|\right)\er\,.
\end{aligned}
\]
This concludes the proof of~\eqref{eq: DkimKim esti Kmin2} for $r=1$.

Bounds on higher order derivatives are obtained in a similar fashion, relying on the fact that $\du^2_x\theta_{\er}(x) = 0$ except at the singularity points $\er,\ef$ as well as $\du^2_x\tilde{\tau}(x)=0$ for $x \neq \ef$.
\end{proof}

We end this section with a last technical result.

\begin{lemma}
\label{lemma AdmuKim new}
Fix $\ef > 0$. Then for any $f\in L^{1}\left(\mu_0\right)$, there exist $a_f \in \mathbb{R}$ such that, for $0< \delta <\ef$,
\begin{equation}
\forall \er\in \left[0,\ef-\delta\right],
\qquad
\int_{\mathbb{R}^d} f\left(p\right)\mathrm{e}^{-\beta U_{\er}(p)}dp=\int_{\mathbb{R}^d} f\left(p\right)\mathrm{e}^{-\beta U_{0}(p)}dp+ a_f \er+\mr{O}\left(\er^2\right)\,.
\label{eq: AdmuKim new}
\end{equation}
\end{lemma}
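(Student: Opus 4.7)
The plan is to reduce the statement to a pointwise $L^{\infty}$ expansion of $U_{\er,\ef}$ in powers of~$\er$, and then to expand the exponential factor $\mathrm{e}^{-\beta U_{\er,\ef}}$ accordingly, exploiting the fact that the perturbation is both uniformly bounded and compactly supported in momentum space.

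First, I would combine the definition~\eqref{eq: definition zeta} of the individual kinetic energies with Lemma~\ref{lemma KminKmax tau} to obtain a first-order Taylor expansion of $U_{\er,\ef}$ in $\er$, uniformly on $\mathbb{R}^d$. Indeed, a direct inspection of the formulas for $u_{\er,\ef}$ and $u_{0,\ef}$ shows that on $\{|p_i|^2/(2m_i)\leq \ef\}$ one has $u_{\er,\ef}(p_i) - u_{0,\ef}(p_i) = -[f_{\er,\ef}-f_{0,\ef}](|p_i|^2/(2m_i))\cdot|p_i|^2/(2m_i)$, while the difference vanishes on $\{|p_i|^2/(2m_i)\geq \ef\}$ since both kinetic energies coincide with the quadratic one there. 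Applying Lemma~\ref{lemma KminKmax tau} with $n=0$, which yields $f_{\er,\ef}(x)-f_{0,\ef}(x) = \er\, f_{0,\ef}'(x)\,\tilde{\tau}(x) + \mathrm{O}(\er^2)$ uniformly, and summing over $i$, I would obtain
\[
U_{\er,\ef}(p) = U_{0,\ef}(p) + \er\, W(p) + R_\er(p),
\]
with $W$ continuous and bounded with support in $\mathscr{C}_\ef$, and $\|R_\er\|_{L^{\infty}}\leq C\er^2$ uniformly for $\er\in[0,\ef-\delta]$, with $R_\er$ also supported in $\mathscr{C}_\ef$.

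Second, using the elementary inequality $|\mathrm{e}^{-\beta s}-1+\beta s|\leq \tfrac{1}{2}(\beta s)^2\mathrm{e}^{\beta|s|}$ applied with $s = U_{\er,\ef}-U_{0,\ef}$ (whose $L^{\infty}$ norm is bounded by $C'\er$ and which is supported in $\mathscr{C}_\ef$), I would obtain
\[
\mathrm{e}^{-\beta U_{\er,\ef}(p)} = \mathrm{e}^{-\beta U_{0,\ef}(p)}\bigl(1 - \beta\er\, W(p) + T_\er(p)\bigr),
\]
where $\|T_\er\|_{L^{\infty}}\leq C''\er^2$ and $\mathrm{supp}(T_\er)\subset \mathscr{C}_\ef$. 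Multiplying by $f$ and integrating then reproduces $\int f\,\mathrm{e}^{-\beta U_{0,\ef}}dp$ at leading order, identifies the linear coefficient as
\[
a_f := -\beta \int_{\mathbb{R}^d} f(p)\,W(p)\,\mathrm{e}^{-\beta U_{0,\ef}(p)}\,dp,
\]
which is a finite real number since $W$ is bounded and $f\in L^1(\mu_0)$, and produces a remainder controlled by
\[
\left|\int_{\mathbb{R}^d} f(p)\,T_\er(p)\,\mathrm{e}^{-\beta U_{0,\ef}(p)}\,dp\right| \leq C''\er^2 \int_{\mathbb{R}^d} |f(p)|\,\mathrm{e}^{-\beta U_{0,\ef}(p)}\,dp = \mathrm{O}(\er^2).
\]

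The main technical obstacle is the uniform $L^{\infty}$ expansion of $U_{\er,\ef}$ in $\er$, since the shift function $\theta_{\er}$ is only piecewise smooth; this is, however, precisely the content of Lemma~\ref{lemma KminKmax tau}, so once the algebraic reduction $u_{\er,\ef} - u_{0,\ef} = -(f_{\er,\ef}-f_{0,\ef})\cdot x$ on $\mathscr{C}_\ef$ is in hand, the remainder of the argument is routine. The crucial structural feature that makes the proof work for arbitrary $f\in L^1(\mu_0)$ is the compact support of the perturbation in momentum space, which removes any need for polynomial growth bounds on~$f$.
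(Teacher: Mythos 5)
Your proof is correct and follows essentially the same approach as the paper's: reduce $u_{\er,\ef}-u_{0,\ef}=-\bigl(f_{\er,\ef}-f_{0,\ef}\bigr)\cdot x$, apply Lemma~\ref{lemma KminKmax tau} to obtain a uniform first-order expansion $U_{\er}=U_0+\er W+\mathrm{O}(\er^2)$ with compactly supported coefficient and remainder, expand the exponential factor, and use the compact support to integrate against $f\in L^1(\mu_0)$ without needing growth bounds on $f$. The only (cosmetic) difference is that the paper invokes the full $W^{r,\infty}$ estimate~\eqref{eq: Ukmin minus U W r infty}, whereas you correctly observe that only the $r=0$ (pure $L^\infty$) case is needed and spell out the elementary exponential remainder bound explicitly.
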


\begin{proof}
Recall that $U_{\er}(p)=\sum_{i=1}^N u_{\er,\ef}(p_i)$. Note that
\[
u_{\er,\ef}(p_i)-u_{0,\ef}(p_i)=\frac{p_i^2}{2m_i}\left[f_{0,\ef}\left(\frac{p_i^2}{2m_i}\right)-f_{\er,\ef}\left(\frac{p_i^2}{2m_i}\right)\right]\,.
\]
Manipulations similar to the ones used to prove~\eqref{eq: DkimKim esti Kmin2} allow to show that there exists a function $\mathcal{U}\in C^{\infty}$ with compact support in $\mathscr{C}_{\ef}$ such that, for $0< \delta <\ef$ and $r\in\mathbb{N}$, there is $C_{r,\delta}>0$ for which
\begin{equation}
\er\in \left[0,\ef-\delta\right], \quad \left\|U_{\er}-U_0-\er\mathcal{U}\right\|_{W^{r,\infty}} \leq C_{r,\delta} \er^2\,.
\label{eq: Ukmin minus U W r infty}
\end{equation}
This allows to write
\[
U_{\er}=U_0+\er\mathcal{U}+\er^2\widetilde{\mathcal{U}}_{\er},
\]
with $\widetilde{\mathcal{U}}_{\er}$ uniformly bounded in $L^{\infty}$. Moreover since $U_{\er}-U_{0}\in C^{\infty}$ also has a compact support, we easily obtain
\[
\frac{\mr{e}^{-\beta U_{\er}} -\mr{e}^{-\beta U_{0}}}{\er}=\mr{e}^{-\beta U_{0}}\frac{\mr{e}^{-\beta \er\left(\mathcal{U}+\er\tilde{\mathcal{U}}_{\er}\right)}-1}{\er}=-\beta\mathcal{U} \, \mr{e}^{-\beta U_{0}} + \er \, \widehat{\mathcal{U}}_{\er} \, \mr{e}^{-\beta U_{0}},
\]
with $\widehat{\mathcal{U}}_{\er}$ uniformly bounded in $L^{\infty}$. Therefore, there exists a constant $R >0$ such that
\[
\left|\frac{\dps \int_{\R^d}f\mr{e}^{-\beta U_{\er}}dp -\int_{\R^d}f\mr{e}^{-\beta U_{0}}dp}{\er}+ \beta \int_{\R^d}f\, \mathcal{U}\,\mr{e}^{-\beta U_{0}}dp\right|\leq R\er\,,
\]
so that~\eqref{eq: AdmuKim new} follows with $\ds a_f:= -\beta \int_{\mathbb{R}^d} \mathcal{U} f\, \mr{e}^{-\beta U_0} \, dp$.
\end{proof}

\subsubsection{Verification of Assumption \ref{hyp THM}}
\label{proof of lemma arps condition}

In order to use Lemma~\ref{main theorem kopec}, we need to check that Assumption~\ref{hyp THM} holds with $G_{\nu}$ as small as wanted for appropriate values of $\er,\ef$. The first condition~\eqref{eq: assummption Linfty du2} is easy to check, so we concentrate on the last two conditions.  The reference kinetic energy function $U_\nu$ in Lemma~\ref{main theorem kopec} is chosen as the standard kinetic energy $U_{\mr{std}}(p) = p^T M^{-1}p/2$, so that $\nu=1/ \min_{i=1, \ldots, N}$. It therefore remains to check the last condition. An inspection of the proof of Lemma~\ref{main theorem kopec} reveals that it holds provided $\er,\ef$ are such that~\eqref{eq: def rho m} holds. Straightforward computations show that
\[
\n_{p_i} \left(U_{\er}-U_{\mr{std}}\right) = \frac{p_i}{m_i} \left[1 - f_{\er,\ef}\left(\frac{p_i^2}{m_i}\right)\right] - \frac{p_i |p_i|^2}{m^2_i} f'_{\er,\ef}\left(\frac{p_i^2}{m_i}\right),
\]
so that, using the fact that $U_{\er,\ef}-U_{\mr{std}}$ has compact support in $\mathscr{C}_{\ef}$ (hence $|p_i| \leq\sqrt{2m_i \ef}$) and in view of the expression~\eqref{eq:derivative_f_Kmin} of $f'_{\er,\ef}$, the following bound holds:
\[
\left\|\n_{p_i} \left(U_{\er,\ef}-U_{\mr{std}}\right)\right\|_{L^{\infty}}\leq \sqrt{\frac{2\ef}{m_i}}+\sqrt{ \frac{8\ef^3}{m_i} } \, \frac{\ef}{\ef-\er} \left\|f^{\prime}_{0,\ef}\right\|_{L^{\infty}}.
\]
Similarly, there exists a constant $C > 0$ (depending on $f^{\prime}_{0,\ef}, f^{\prime\prime}_{0,\ef}$ and $m_1,\dots,m_N$) such that
\[
\left\|\de U_{\er,\ef}\right\|_{L^{\infty}}\leq C\left[1+\left(\frac{\ef^2}{\ef-\er}\right)^2\right].
\]
It is then easy to see that~\eqref{eq: def rho m} holds upon choosing $0 < \er \leq \ef/2$ with $\ef > 0$ sufficiently small.

\subsubsection{Proof of Lemma \ref{corollary: estimate LopD n}}
\label{proof of Lemma estimate LopD n}

Denote by $\mathcal{A}$ the operator~$\LopZero^{-1}\Pi_0\left( D_{\er}\cdot \Lpert\right)$. By Corollary~\ref{lemma Ueps minus Uzero intro}, for any $n\geq 0$ and $0\leq\er\leq\ef/2$, there exists a constant $R_n>0$ such that
\[
\forall \left|\alpha\right|\leq n,
\qquad \left\|\du^{\alpha}_{p}D_{\er}\right\|_{L^{\infty}} \leq R_n \, \er .
\label{eq D est}
\]
By the resolvent estimate~\eqref{eq thm uniform convergence}, there exists for any $s\in\mathbb{N}^*$ a constant $C_s>0$ such that
\[
\forall f\in \widetilde{L^{\infty}_{\K_s}},
\qquad
\left\|\LopZero^{-1}f\right\|_{L^{\infty}_{\K_s}}\leq C_s \left\|f\right\|_{L^{\infty}_{\K_s}}.
\]
Therefore, choosing an integer $s$ for which $A \in W^{1,\infty}_{\K_s}$, there exists a constant $C>0$ such that
\[
\ds
\left\|\mathcal{A} \left(\Pi_{0} A\right)\right\|_{L^{\infty}_{\K_s}}\leq C_s \left\|D_{\er}\cdot \Lpert \left(\Pi_{0}A\right)\right\|_{L^{\infty}_{\K_s}} \leq C \left\|D_{\er}\right\|_{L^{\infty}}\left\| A\right\|_{W^{1,\infty}_{\K_s}}\leq CR_0 \, \er\left\| A\right\|_{W^{1,\infty}_{\K_s}}.
\]
By the same principle, using the fact that, by~\eqref{eq: DkimKim esti Kmin2}, there is for any $r\geq 0$ a constant $C_r>0$ such that	
\[
\left\|D_{\er}\right\|_{W^{r,\infty}}\leq C_r\er,
\]
and in view of~\eqref{eq: functional estimates Kopec_resolvent}, there exists, for any $l\geq 0$, integers $\alpha \geq l$ and $s_l\in\mathbb{N}$ such that, for all $s\geq s_l$, there is a constant $C>0$ and an integer $r\in\mathbb{N}$ for which
\[
\ds
\left\|\mathcal{A} \left(\Pi_{0} A\right)\right\|_{W^{l,\infty}_{\K_s}}\leq C \er \left\| A\right\|_{W^{\alpha,\infty}_{\K_r}}.
\]
By recurrence, there exist, for any $n \geq 1$, integers $s_n,l_n \geq 0$ such that, for all $s\geq s_n$, there is $r \in\mathbb{N}$ and $\widetilde{C}>0$ for which
\[
\left\|\mathcal{A}^{n} \left(\Pi_{0} A\right)\right\|_{L^{\infty}_{\K_s}}\leq \widetilde{C} \er^{n} \left\| A\right\|_{W^{l_n,\infty}_{\K_{r}}}\,.
\]
This gives the claimed result.

\subsubsection{Proof of Lemma \ref{theorem truncated variance}}
\label{proof of theorem truncated variance}

We start by writing the difference between the variance~\eqref{eq: variance F} and the truncated one~\eqref{def truncated variance}:
\begin{equation}
\sigma^2_A(\er)-\sigma^{2}_{A,n}(\er) = -2\int \left(\Phi_{A,\er}-\Pi_{\er}\Phi_{A,\er}^n\right) A \, d\mu_{\er} \,.
\label{eq: variance difference}
\end{equation}
A simple computation gives
\begin{equation}
\Pi_{\er} \Lop_{\er} \left(\Phi_{A,\er}-\Phi_{A,\er}^n\right) =-\Pi_{\er}\left(D_{\er}\cdot\Lpert\right) \left(\Lop_0^{-1}\Pi_0D_{\er}\cdot\Lpert\right)^n \Lop_0^{-1}\Pi_0A \,.
\label{eq: pi Lope F-Fe}
\end{equation}
We first use Lemma~\ref{corollary: estimate LopD n}: there exists $s_n,l_n\in\mathbb{N}$ such that, for $s\geq s_n$,  there is $r_n\in\mathbb{N}$ and $C>0$ such that
\[
\left\| \left(\LopZero^{-1}\Pi_0 D_{\er}\cdot \Lpert\right)^n \LopZero^{-1}\Pi_0 A \right\|_{W^{1,\infty}_{\K_{s}}} \leq C \er^{n} \left\|A\right\|_{W^{l_n,\infty}_{\K_{r_n}}}\, .
\]
Therefore, using~\eqref{eq: Ueps minus Uzero intro} and~\eqref{eq: pi Lope F-Fe}, there exists some constant $R_n>0$ such that
\[
\left\|\Pi_{\er}\Lop_{\er}\left(\Phi_{A,\er}-\Phi_{A,\er}^n\right)\right\|_{L^{\infty}_{\K_s}} \leq R_n\er^{n+1} \left\| A\right\|_{W^{l_n,\infty}_{\K_{r_n}}} .
\label{eq: 33}
\]
We finally apply $\Lop_{\er}^{-1}$ to both sides of~\eqref{eq: pi Lope F-Fe}: in view of~\eqref{eq: corollary convergence thm}, it follows
\[
  \left\|\Pi_{\er}\left(\Phi_{A,\er}-\Phi_{A,\er}^n\right)\right\|_{L^{\infty}_{\K_s}}\leq \frac{R_nC_s}{\lambda_s}\er^{n+1} \left\| A\right\|_{W^{l_n,\infty}_{\K_{r_n}}} .
\]
The result is then a direct consequence of the equality~\eqref{eq: variance difference}.

\subsubsection{Proof of Proposition \ref{variance perturbation main theorem}}
\label{proof of variance perturbation main theorem}
Looking at \eqref{def truncated variance}, there are three objects which depend on the parameter $\er$: the projection $\Pi_{\er}$, the truncated solution of the Poisson equation $\Phi_{A,\er}^n$ and the modified measure $\mu_{\er}$.

We first expand $\Phi^{n}_{A,\er}$ in terms of $\Phi_{A,0}$ as
\[
\begin{aligned}
  \Phi_{A,\er}^n
  &=\Lop_0^{-1}\Pi_0A + \left(\Lop_0^{-1}\Pi_0D_{\er}\cdot\Lpert\right) \Lop_0^{-1}\Pi_0A +\sum_{k=2}^{n}\left(\Lop_0^{-1}\Pi_0D_{\er}\cdot\Lpert\right)^k \Lop_0^{-1}\Pi_0A \\
  &=\Phi_{A,0} + \left(\Lop_0^{-1}\Pi_0D_{\er}\cdot\Lpert\right) \Phi_{A,0}+\sum_{k=2}^{n} \left(\Lop_0^{-1}\Pi_0D_{\er}\cdot\Lpert\right)^k  \Phi_{A,0} \,.
\end{aligned}
\]
Estimates on $\Phi_{A,0}$ and its derivatives in terms of~$A$ can be obtained with~\eqref{eq: functional estimates Kopec_resolvent}. Lemma~\ref{corollary: estimate LopD n} then allows to estimate the higher order terms in the above equality: there exists $s \in \mathbb{N}$ and $C > 0$ such that
\[
\left\| \Phi_{A,\er}^n - \Phi_{A,0} - \left(\Lop_0^{-1}\Pi_0D_{\er}\cdot\Lpert\right) \Phi_{A,0} \right\|_{L^\infty_{\K_s}} \leq C \, \er^2.
\]
By combining these estimates with \eqref{eq: DkimKim esti Kmin2}, we obtain
\begin{equation}
 \Phi_{A,\er}^n = \Phi_{A,0} +\er \left(\Lop_0^{-1}\Pi_0\mathscr{D}\cdot\Lpert\right) \Phi_{A,0}+ \er^2 \mathcal{R}_{\er},
\label{eq: expansion phiAern}
\end{equation}
where $\mathcal{R}_{\er}$ is uniformly bounded in $L^{\infty}_{\K_s}$ due to \eqref{eq: DkimKim esti Kmin2} for $\er$ small enough (upon possibly increasing~$s$).

With the notation of Lemma~\ref{lemma AdmuKim new}, for any $f\in L^{1}\left(\mu_0\right)$,
\begin{equation}
\label{eq: <f>dmu expansion}
\ds \int_{\El}f \, d \mu_{\er} = \frac{\ds \int_{\El}f \mr{e}^{-\beta U_{\er}}}{\ds \int_{\El} \mr{e}^{-\beta U_{\er}}} = \int_{\El}f \, d\mu_0 + \frac{\dps a_f -a_1\int_{\El}f \, d\mu_0}{\dps \int_{\El} \mr{e}^{-\beta U_{0}} }\, \er + \widetilde{\mathscr{R}}_{\er}\er^2,
\end{equation}
with $\widetilde{\mathscr{R}}_{\er}$ uniformly bounded for $\er$ small enough.
Finally, by combining~\eqref{eq: expansion phiAern} and~\eqref{eq: <f>dmu expansion}, we see that there exists $\mathscr{K}\in \R$ such that
\begin{equation}
\begin{aligned}
\sigma^2_{A,n}(\er) &= -2\int_{\El} \Phi_{A,0} A  \, d\mu_{0}+\mathscr{K}\er+\mathrm{O}\left(\er^2\right).
\end{aligned}
\end{equation}

\begin{acknowledgements}
 Stephane Redon and Zofia Trstanova gratefully acknowledge funding from the European Research Council through the ERC Starting Grant n. 307629. This work was funded by the Agence Nationale de la Recherche, under grant ANR-14-CE23-0012 (COSMOS). Gabriel Stoltz benefited from the scientific environment of the Laboratoire International Associ\'e between the Centre National de la Recherche Scientifique and the University of Illinois at Urbana-Champaign.
\end{acknowledgements}






\end{document}